\pgfplotsset{compat=1.18}
\newtheorem{definition}{Definition}
\newtheorem{assumption}{Assumption}
\newtheorem{lemma}{Lemma}
\newtheorem{claim}{Claim}
\newtheorem{theorem}{Theorem}
\newtheorem{proposition}{Proposition}
\newtheorem{condition}{Condition}
\newtheorem{corollary}{Corollary}
\DeclareMathOperator*{\E}{\mathbb{E}}
\newcommand{\var}{\mathrm{Var}}
\DeclareMathOperator*{\reals}{\mathbb{R}}
\DeclareMathOperator*{\argmin}{arg\,min}
\newcommand{\one}{\mathbf{1}}
\begin{document}
\onehalfspacing

\title{Communicating with Anecdotes\thanks{This work is partially supported by the National Science Foundation under grant CCF-2145898, the European Research Council (ERC) under the European Union's Horizon 2020 research and innovation program (grant agreement No. 866132), the Office of Naval Research under grant N00014-24-1-2159, an Alfred P. Sloan fellowship, a Schmidt Sciences AI2050 fellowship, and a Google Research Scholar award.
Some parts of this work were done when the authors visited the
Simons Institute for the Theory of Computing and when Haghtalab and Mohan were employed at Microsoft Research, New England.
We are grateful to Nageeb Ali, Lukas Bolte, Vincent Crawford, Ben Golub, Matt Jackson, Pooya Molavi, Harry Pei, Joel Sobel, Leeat Yariv and seminar participants at University of Chicago, Northwestern, Penn State, Stanford Institute for Theoretical Economics (SITE), Stony Brook Game Theory Festival, and Virtual Seminars in Economic Theory, among others, for helpful comments.
An earlier version of this paper appeared as an NBER working paper under the title “Persuading with
Anecdotes”~\citep{haghtalab2021persuading}.
}}
\author{Nika  Haghtalab\\ UC Berkeley \and Nicole Immorlica \\Microsoft Research \and Brendan Lucier \\Microsoft Research\and Markus Mobius \\Microsoft Research \and Divyarthi Mohan \\Tel Aviv University}
\date{\today}
\maketitle \thispagestyle{empty}

\begin{abstract}
We study a communication game between a sender and receiver. The sender chooses one of her signals about the state of the world (i.e., an \textit{anecdote}) and communicates it to the receiver who takes an action affecting both players. The sender and receiver both care about the state of the world but are also influenced by personal preferences, so their ideal actions can differ. We characterize perfect Bayesian equilibria. The sender faces a \textit{temptation to persuade}: she wants to select a biased anecdote to influence the receiver’s action. Anecdotes are still informative to the receiver (who will debias at equilibrium) but the attempt to persuade comes at a cost to precision. This gives rise to \textit{informational homophily} where the receiver prefers to listen to like-minded senders because they provide higher-precision signals. Communication becomes \textit{polarized} when the sender is an expert with access to many signals, with the sender choosing extreme outlier anecdotes at equilibrium (unless preferences are perfectly aligned). This polarization dissipates all the gains from communication with an increasingly well-informed sender when the anecdote distribution is heavy-tailed. Experts can therefore face a \textit{curse of informedness}: receivers will prefer to listen to less-informed senders who cannot pick biased signals as easily.
\end{abstract}

\begin{tabbing}
\textbf{JEL Classification:} \=D82, D83 \\
\textbf{Keywords:} \= anecdotes, communication, sender-receiver games, memory, language \\
\end{tabbing}

\newpage

\iftoggle{doublespacing}{
\doublespacing
}

\begin{quote}
Half the truth is often a great lie.\\
\emph{-- Benjamin Franklin}
\end{quote}

\section{Introduction}\label{sec:introduction}

As an economist, explaining the beginning of a recession  to a friend who isn't familiar with economic jargon can be difficult. While the economist usually analyzes trends using metrics like GDP growth and investment levels, her friend might have a different approach, such as noticing whether specific companies are growing or reducing their workforce. Both approaches can be valid representations of the same underlying state of the world, but the mismatched representations can create a communication challenge.

One effective approach involves both the sender and receiver reaching a consensus on a shared representation. Crafting such a shared ``language'' is a practice commonly adopted by experts within a given field, enabling them to communicate efficiently. A different method entails the sender transmitting raw data (or ``anecdotes") to the receiver, allowing them to interpret and map it to their own internal representation. Such \textit{anecdotal communication} is common when the sender or receiver are non-experts. Newspapers, for example, assemble articles from factual snippets such as reporting the quarterly job creation figures\footnote{E.g., WSJ article ``November Employment Report Shows U.S. Economy Added 263,000 Jobs''~\citep{wsjexample}}, highlighting a recent high-profile layoff at a major tech company\footnote{E.g., NYTimes article ``Meta Lays Off More Than 11,000 Employees''~\citep{facebookexmaple}}, or conducting a series of interviews with individuals who have recently resigned from their jobs\footnote{E.g., NYTimes article ``We Revisited People Who Left Their Jobs Last Year. Are They (Still) Happy?''~\cite{nytquit}}. Similarly, politicians often try to appeal to a broad spectrum of voters by incorporating stories into their speeches, such as initiatives that helped their districts in the past.

Anecdotal communication is versatile but less efficient than communication between experts because the sender cannot simply summarize diverse data points into a single number. The sender must select among the many anecdotes she could potentially report. For example, the New York Times must sift through a vast array of possible anecdotes from the work of its journalists and wire services to determine what to include in a 500-word article on its homepage. This need for selection creates a \textit{temptation} for the sender to influence the receiver. For instance, a current officeholder may predominantly incorporate favorable anecdotes about the economy into their speeches instead of the most representative ones, motivated by their bid for re-election. Conversely, their opponent may utilize anecdotes to portray a less optimistic outlook, aiming to cast doubt on the incumbent's abilities. This \textit{persuasion temptation} makes anecdotal communication more noisy: a rational receiver can correct for the sender's bias but not for the greater noise that is embedded when listening to less representative anecdotes. 

\begin{figure}
\caption{Illustration of translation invariant equilibrium}  \label{fig:intuition}
\begin{center}
\begin{tikzpicture}[scale = 0.85]

\fill [shading = axis,left color=green!30!white, right color=white,color=white, domain=2:4, variable=\x]
  (2, 0)
  -- plot ({\x}, {4*exp(-(\x-7.5)*(\x-7.5)/2/pi/2/2)})
  -- (4, 0)
  -- cycle;

\fill [shading = axis,right color=green!30!white, left color=white,color=white, domain=2:1, variable=\x]
  (2, 0)
  -- plot ({\x}, {4*exp(-(\x-7.5)*(\x-7.5)/2/pi/2/2)})
  -- (1, 0)
  -- cycle;

  \draw[domain=1:14, smooth, variable=\x, black] plot ({\x}, {4*exp(-(\x-7.5)*(\x-7.5)/2/pi/2/2)});

  \draw[-] (-1, 0) -- (15, 0);
  
  \draw[-stealth,ultra thick,black]    (7.5,-1)   -- (7.5,-0.2) node[below=25] {$\theta$};
	\draw[-stealth,ultra thick,red]    (3.7,-1)   -- (3.7,-0.2) node[below=25] {$\theta+\beta(r)$};
	\draw[-stealth,ultra thick,blue]    (2,-1)   -- (2,-0.2) node[below=25] {$\theta+r$};

  	\draw[dashed, thick, blue]    (2,0)   -- (2,4.3);
  	\draw[dashed, thick, red]    (3.7,0)   -- (3.7,4.3);

      \draw[dashed, thick, blue]    (6.65,0)   -- (6.65,4.3);
  	\draw[dashed, thick, red]    (8.35,0)   -- (8.35,4.3);

    \draw [stealth-stealth, thick, purple](2,4.2) -- (3.7,4.2) node[above=5,midway] {$\Delta$};

    \draw [stealth-stealth, thick, purple](6.65,4.2) -- (8.35,4.2) node[above=5,midway] {$\Delta=M_R-M_S$};

    \draw [stealth-stealth, thick, blue](2,3) -- (6.65,3) node[above=5,midway] {$\sigma$};

    \draw [stealth-stealth, thick, red](3.7,2) -- (8.35,2) node[above=5,midway] {$\sigma$};

  	\draw[dashed, thick, black]    (7.5,0)   -- (7.5,{4*exp(-(7.5-7.5)*(7.5-7.5)/2/pi/2/2)});

	\draw[-stealth,ultra thick,blue]    (6.65,-1)   -- (6.65,-0.2) node[below=25] {$a_S^*$};

 \draw[-stealth,ultra thick,red]    (8.35,-1)   -- (8.35,-0.2) node[below=25] {$a_R^*$};
   
  \node[red] at (8.35,-2.5) {$\theta+M_R$};
  \node[blue] at (6.65,-2.5) {$\theta+M_S$};

   \foreach \c in {1,2.55,2.95,3.4,4,5,8,8.5,9.5,12}{ %
	\fill[black!20] (\c,0) circle[radius=0.15]; %
   }
	\fill[black!30!green] (2.55,0) circle[radius=0.15]; %

 \draw[-stealth,ultra thick,black!30!green]    (2.55,-2.3)   -- (2.55,-0.2) node[below=50] {
 \begin{minipage}{2.7 cm}
 closest anecdote to $\theta+r$
 \end{minipage}
 };

 \node[-stealth,,black!40!green] at (2.4,0.6) {
 \begin{minipage}{1.9 cm}
 \tiny
 density of closest anecdote to $\theta+r$
 \end{minipage}
 };

\end{tikzpicture}
\end{center}
\end{figure}

In this paper, we build a simple model of anecdotal communication that explores the tension between the sender choosing representative anecdotes that are most informative and tail anecdotes that are meant to persuade the receiver to take a more preferred action.

\subsection{Summary of Results}

We summarize the key findings using the most straightforward version of our model as shown in Figure \ref{fig:intuition}.  There are two agents - a sender $S$ and a receiver $R$. The receiver takes an action that affects both agents but the state of the world $\theta$ is only observed by the sender. The sender would like the receiver to take action $a_S^*=\theta+M_S$ (shown in blue) while the receiver's preferred action is $a_R^*=\theta+M_R$ (shown in red). The offsets $M_S$ and $M_R$ indicate the agents' \textit{personal preferences} and in the above example we have $M_S<M_R$ such that the sender would always like the receiver to take an action slightly more to the left. The sender has access to a set of signals which are drawn from a distribution that is centered around the state $\theta$ (shown as gray circles). We call these signals \textit{anecdotes} because the sender is constrained in her communication with the receiver and can only send one such anecdote to the receiver. 

\paragraph{Targeting Equilibrium.}
We will focus on \textit{translation invariant} strategies where the receiver takes action $x+\sigma$ after observing the sender's anecdote $x$. Intuitively, the receiver acts as if he knows the sender's bias and corrects for it; for example, the receiver knows that a news publisher will tend to select more left-leaning anecdotes and correct for this bias by choosing an action further to the right ($\sigma>0$). In equilibrium, the sender would therefore ideally want to send an anecdote exactly equal to $\theta+r$ where $r = M_S-\sigma$, as this would ensure that the receiver takes the sender's preferred action. However, the sender will generally not have this exact anecdote available.  
Our first main result is to show that she will select the anecdote closest to $\theta+r$ as the second-best alternative (see Theorem \ref{thm:sender.br}). Next, we observe that the anecdote closest to $\theta+r$ (shown as a green circle here) is more likely to lie to the right of the offset than to the left because the density of the anecdote distribution is higher there. Therefore, the expected value of the sent anecdote is equal to $\theta+\beta(r)$ where $\beta(r)>r$ is the \textit{bias} of the sender's communication. In equilibrium, the receiver will choose his shift $\sigma$ to undo this bias and then take her preferred action such that $\sigma=-\beta(r)+M_R$. Combining the sender's best response condition with the receiver's, we obtain:
\begin{eqnarray}
    r&=&M_S-\underbrace{(M_R-\beta(r))}_{\sigma} \nonumber \\
    \beta(r)-r &=& \underbrace{M_R-M_S}_{\Delta}
\end{eqnarray}
This is the key equilibrium condition in our model and all our main results follow from it: for a given difference $\Delta$ in the personal preferences of receiver and sender there exists a \textit{targeting communication scheme} with offset $r$ satisfying the above equation such that the sender will choose the anecdote closest to $\theta+r$ (see Theorem \ref{thm:bne.exist}).

\paragraph{Informational Homophily.} 
When the sender and receiver have common interests, i.e., $M_R = M_S$, the above condition boils down to $\beta(r)=r$. For a single-peaked and symmetric distribution around the state $\theta$, as shown in Figure \ref{fig:intuition}, this implies that the sender will choose offset $r=0$ and therefore select the anecdote closest to the $\theta$. Intuitively, this anecdote is the most representative and conveys the most information to the receiver. When the sender is more left-leaning than the receiver ($\Delta>0$) we instead obtain an equilibrium offset $r<0$ and the sender will select biased anecdotes with a left-leaning bias. While the receiver is able to correct for this bias, he is worse off because these anecdotes have higher variance. In Section \ref{sec:equilibrium}, we explore how this gives rise to \textit{informational homophily} where both the sender and receiver prefer to talk to like-minded partners because lack of persuasion temptation makes the sender minimize the information loss from anecdotal communication.

\paragraph{Polarization.} 
Unlike models in the cheap talk and rational persuasion literature, the sender in our model is constrained in her message space to select an existing anecdote \citep{crawford1982strategic, kamenica2011bayesian}. This enables us to define the degree of polarized communication by the offset $r$ of the sender's targeting communication scheme: the further to the left (or right) this offset lies, the more polarized is the sender's communication. As explained in the example above, polarization increases with the amount of misalignment between the sender and receiver (as measured by $|\Delta|$). However, in Section \ref{sec:experts} we show that polarization also increases with the number of anecdotes available to the sender: the more informed the sender is, the more polarized her communication becomes. Intuitively, the information loss decreases with the number of anecdotes at any given offset $r$ (formally, $\beta(r)-r$ decreases). This makes it less costly for the sender to choose a more extreme anecdote and the equilibrium targeting scheme will become more polarized. Interestingly, while any misalignment between the sender and receiver will generate some degree of polarization this phenomenon is more pronounced for more informed experts (i.e., senders with access to more anecdotes). 

\paragraph{Curse of Informedness.} 
Section \ref{subsec:informedness} explores whether a receiver prefers to listen to a less informed but more like-minded sender or a better-informed expert. While communication with the expert is more polarized (which lowers utility) the expert is also better informed (which lowers the information loss). We show that for heavy-tailed distributions (with tail densities declining at a less than exponential rate) receivers prefer to seek out non-experts. Intuitively, experts in such an environment have access to too many outlier anecdotes such that the variance in the sent anecdote increases with the informedness of the expert. We say that such experts suffer from the \textit{curse of informedness} - they have an incentive to collect fewer anecdotes in order to be listened to more by less aligned receivers. 

\paragraph{Commitment.} 
Section \ref{sec:commitment} analyzes a version of our model where the sender can commit to a targeting communication scheme. Commitment eliminates persuasion temptation and therefore gives rise to most informative communication schemes that maximize welfare. For example, in the example above the sender will send the most representative anecdote which is the anecdote closest to the posterior mean.

\subsection{Related Work}
Our model is related to the economic literature on strategic communication which includes both cheap talk games and verifiable disclosure games (also called persuasion games). In the canonical paper in the literature on cheap talk games, \citet{crawford1982strategic} consider a setting similar to ours in which the sender and receiver have state-dependent but misaligned preferences.  The major difference in their model is that the sender is unrestricted in what signal she can send and hence can pool states arbitrarily.  The authors characterize the set of perfect Bayesian equilibria and show the most informative of these equilibria pools only nearby states. The resulting precision of the receiver's posterior belief, interpreted as a coarseness of the message space or vagueness of the chosen language in equilibrium, depends on the degree of misalignment between the sender and receiver's preferences.  We observe a similar phenomenon in our setting: more aligned preferences induce more information transmission in equilibrium.  Our results however point to a different driver of this phenomenon.  Namely, the friction that prevents communication from devolving to the least informative signal is the exogenous limitation of communicating anecdotes rather than the endogenous equilibrium choice of language coarseness.

Our exogenous limitation of communicating anecdotes is reminiscent of the literature on verifiable disclosure games or persuasion games, introduced by \citet{grossman1980disclosure}, \citet{grossman1981informational}, and \citet{milgrom1981good} (see \citet{milgrom2008seller} for a survey of this literature).  These papers consider the setting of a seller who can choose whether to disclose information about a product to a buyer and wishes to maximize the buyer's posterior belief about the value or quality of the product. Similar to our model of communication via anecdotes, the seller in these papers cannot arbitrarily distort information about the product. The classic results show that in every perfect Bayesian equilibrium, the seller fully discloses her information or, if she's limited in how much information she can disclose, reveals the most favorable information.  In our model, this sort of unraveling to extreme signals is tempered by the fact that the sender's preferences are state dependent and so she does not simply wish to maximize the receiver's belief.  Other mechanisms that limit unraveling include the presence of naive receivers (\citet{kartik2007credulity}), exogenous costs to misrepresentation (\citet{kartik2009strategic}), uncertainty in the informedness of the sender (\citet{dye1985disclosure,jung1988disclosure,dziuda2011strategic}), multi-dimensionality of the state (\citet{martini2018multidimensional}), or the introduction of commitment power (discussed below), among others.

Much of the literature on strategic communication studies the impact of commitment power.  
\citet{kamenica2011bayesian} characterize the optimal signaling scheme of a sender with commitment power in cheap talk games.  Their characterization implies that a sender with quadratic loss (as in our model) who is not restricted to communicating anecdotes would communicate her belief about the state of the world thereby completely avoiding the uninformative babbling equilibrium of \citet{crawford1982strategic}.  In Section~\ref{sec:commitment}, we show similarly that, subject to the restriction of sending an anecdote, the sender wishes to communicate the most informative anecdote.  This result suggests that in our model, commitment power can move the equilibrium from a partially informative one to the most informative one. Other recent work demonstrates investigates commitment power in verifiable disclosure games, studying the impact of partial commitment (\citet{lipnowski2022persuasion,min2021bayesian,nguyen2021bayesian,lin2022credible}) or characterizing conditions under which commitment has limited impact on achievable outcomes in equilibrium (\citet{glazer2008study,hart2017evidence,sher2011credibility,zhang2022withholding}).
Importantly, in all of these models the sender's incentives are purely to persuade, with utility that depends on the receiver's action but not the state of the world.  In our model the sender has an incentive to inform as well as to persuade. Thus in light of the aforementioned literature, our results can be interpreted as showing that a desire to persuade can lead to a significant loss of communication fidelity (and hence welfare) in the absence of commitment.  Our perfect Bayesian equilibrium characterization in Section~\ref{sec:equilibrium} shows that this loss can happen and be significant with even with a small desire to persuade (i.e., even with a small preference misalignment).

\section{A Model of Anecdotal Communication}\label{sec:model}

In this section we formally introduce our model and characterize the sender's and receiver's best response strategies which will allow us to characterize the equilibria of the game in Section \ref{sec:equilibrium}.

\subsection{Model Setup}

We consider a communication game played by two players, a sender (``she'') and a receiver (``he'').  The sender has information about a payoff-relevant state of the world $\theta\in\reals$ drawn from a common prior. The receiver, in turn, chooses a payoff-relevant action $a \in\reals$ 
(for example, how much to invest in the current market).

\paragraph{Preferences.} Players' preferences over their actions depend on the state of the world.  However, their preferences can differ.

We model this by introducing personal preferences $M_R \in \reals$ and $M_S \in \reals$ for the receiver and sender respectively, which are shifts of the ideal action relative to the state of the world.  More formally, the receiver's utility is $$u_R(a,\theta) = -(a-(\theta+M_R))^2,$$ and the sender's utility is $$u_S(a,\theta) = -(a-(\theta+M_S))^2.$$  

We assume personal preferences are publicly known and write $\Delta = M_R - M_S$ for the known difference in personal preferences. Intuitively, $\Delta$ captures the preference misalignment between sender and receiver.

\paragraph{Sender's knowledge.}  The sender has access to noisy signals about the state of the world that she can potentially share with the receiver. Given a  distribution $F$ over the reals, we model these shareable signals as a set of $n$ samples $x_1, \dotsc, x_n$ where each $x_i=\theta+\epsilon_i$ for $\epsilon_i\sim F$ drawn independently. We will write $\vec{x} = (x_1, \dotsc, x_n)$ for the profile of samples which we will refer to as \textit{anecdotes} from now on. We think of these anecdotes as immutable facts about the world which the sender can decide to share, but which she cannot otherwise manipulate. For example, the receiver might not know about the survey or research paper until the sender chooses to reveal it but he can subsequently look up the survey or paper and fact-check it. While $\vec{x}$ is known only to the sender, we assume the anecdote distribution $F$ as well as the number of anecdotes, $n$, is common knowledge.\footnote{This shuts down a common pathway for partial information transmission: in our model, there is no uncertainty about \textit{how much} information the sender has.}

The sender might have additional information that cannot be easily shared or fact-checked at low cost. For example, the sender's knowledge about the state of the world might be informed by her own detailed research and modeling efforts.  We model such side information as an additional signal $y\in \reals \cup \left\{\emptyset\right\}$. Given a  distribution $G$ over $\reals \cup \left\{\emptyset\right\}$ and $\gamma\sim G$ , the sender has either access to no additional information (if $\gamma=\emptyset$) or to a signal $y=\theta+\gamma$ (if $\gamma \in \reals$).  Most of our intermediate results hold for general distributions $G$. However, we pay special attention to two cases: the \textit{ foresight} setting where $\gamma=0$ with probability $1$, and the sender has full information about the state of the world; second, the setting with \textit{no foresight}, where $\gamma=\emptyset$ and the sender has no information beyond the set of anecdotes. As with $\vec{x}$, we assume that $y$ is private knowledge of the sender but that the distribution $G$ is common knowledge. The sender uses her anecdotes $\vec x$ and side information $y$ to form a posterior belief over the state of the world $\theta$. We will denote by $\theta_S(\vec{x},y)$ the posterior mean of $\theta$ given $(\vec{x},y)$.

\paragraph{Communication.} The sender communicates exactly one anecdote in $\vec{x}$ to the receiver. While the anecdote is communicated honestly, the sender can cherry-pick from the set of anecdotes she has access to. Note, that the sender's side information $y$ cannot be communicated -- only an anecdote can be shared. A strategy for the sender in our game is a communication scheme $\pi \colon \reals^n \times \reals \to \reals$ that maps every realization of $n$ anecdotes $\vec x$ and side information $y$ to a choice of one of the $n$ anecdotes.  In particular, for all  $\vec{x}$ and $y$ we have $\pi(\vec{x}, y) = x_i$ for some $i \in [n]$. 

\begin{figure}
\caption{Timing of communication game}  \label{fig:timing}
\begin{center}
\begin{tikzpicture}[scale=1,every node/.style={scale=1}]
\draw[->,color=black] (0,0) -- (14.5,0);

\draw[color=black] (0,-0.25) -- (0,0.25);
\draw[color=black] (0,1) node {Round 0};
\draw[color=black,anchor=north] (0,-0.5) node {\begin{minipage}{4cm}
                                 Nature chooses state $\theta$, anecdotes $\vec x$ and $y$.
                               \end{minipage}};

\draw[color=black] (5.5,-0.25) -- (5.5,0.25);
\draw[color=black] (5.5,1) node {Round 1};
\draw[color=black,anchor=north] (5.5,-0.5) node {\begin{minipage}{4cm}
                                 Sender sends anecdote $\tilde{x}=\pi(\vec x,y)$.
                               \end{minipage}};

\draw[color=black] (11,-0.25) -- (11,0.25);
\draw[color=black] (11,1) node {Round 2};
\draw[color=black,anchor=north] (11,-0.5) node {\begin{minipage}{5.5cm}
                                 Receiver takes action $a=\alpha(\tilde{x})$.\\
                                 $u_S(a,\theta) = -(a-(\theta+M_S))^2$\\
                                 $u_R(a,\theta) = -(a-(\theta+M_R))^2$
                               \end{minipage}};

\end{tikzpicture}
\end{center}
\end{figure}

\paragraph{Equilibrium.}  Since the receiver does not observe the choice of nature, a strategy for the receiver is an action rule $\alpha \colon\reals \to\reals$ that maps the sender's chosen anecdote to a choice of action. The timing of our game is shown in Figure \ref{fig:timing}.  In round $0$, nature chooses state $\theta$, anecdotes $x_1, \dotsc, x_n$, and signal $y$. In round $1$, the sender selects anecdote $\tilde{x}=\pi(\vec x,y)$; this choice is observed by the receiver.  In round $2$, the receiver selects action $\alpha(\tilde{x})$.  Payoffs are then realized as described above.   Given a communication scheme $\pi$, we will write $D_{\pi, x}$ for the posterior distribution of $\theta$ given that $\pi(\vec{x}, y) = x$.  We are interested in the perfect Bayesian equilibrium of the game, that is, strategies for the sender and receiver that maximize payoffs under their (consistent) beliefs.\footnote{The perfect Bayesian equilibria we consider will not involve zero-probability events and will therefore also be sequential equilibria.}

\begin{definition} \label{def:bne-def}
A pair of strategies $(\pi^*, \alpha^*)$, together with a belief function $B \colon \reals \to \Delta(\reals)$ for the receiver mapping every observation to a distribution over the state of the world, form a perfect Bayesian equilibrium if: 
\begin{enumerate}
    \item For each observed anecdote $x$, action $\alpha^*(x)$ maximizes expected receiver utility given distribution $B(x)$ over $\theta$, i.e.,
    $$\alpha^*(x) \in \arg\max_a \left\{ \E_{\theta \sim B(x)}[ u_R( a, \theta ) ] \right\}.$$ \label{item:receiver1}
    \item {$B$ is the \emph{rational belief} with respect to $\pi^*$. That is, for each $x$, $B(x) = D_{\pi^*,x}$ is} the posterior distribution of $\theta$ given that $\pi^*(\vec{x},y) = x$.  \label{item:receiver2}
    \item For each $\vec{x}$ and $y$, $\pi^*(\vec{x}, y)$ maximizes sender utility given $\alpha^*$, i.e., $$\pi^*(\vec{x}, y) \in \arg\max_{x_i \in \vec{x}} \left\{ \E_{\theta}[u_S( \alpha^*(x_i),\theta )\ |\ (\vec{x},y)] \right\}.$$ \label{item:senderbr}
\end{enumerate}
\end{definition}
\noindent Given communication scheme $\pi$, we denote by $\alpha_\pi$ the action rule that satisfies requirements~(\ref{item:receiver1}) and (\ref{item:receiver2}) of Definition~\ref{def:bne-def} and call it the \emph{best response} to $\pi$.

\paragraph{Diffuse Prior.} Expected payoffs in our game are driven by the posterior beliefs of the receiver and sender regarding the state of the world $\theta$.  These posterior beliefs are formally defined with respect to a prior distribution over $\theta$.  As we are interested in the setting where agents are initially uninformed about the state of the world, we will define posterior beliefs with respect to a \emph{diffuse prior} over $\theta$.  That is, the common prior over $\theta$ reveals no information about it.
\begin{assumption} \label{ass:diffuse}
The prior over $\theta$ is diffuse, i.e., it is $N(0, \infty)$.
\end{assumption}
We note that the exact form of the prior over $\theta$ is not important so long as it is a diffuse prior that reveals no information about $\theta$, i.e., its density is almost uniform everywhere.  

Of course, the diffuse prior is not a valid probability distribution and is not formally defined.  We emphasize that it is being used to define posterior beliefs given the information available to the sender and receiver.\footnote{One could equivalently view the realization of $\theta$ as a (non-stochastic) choice of nature, and interpret Assumption~\ref{ass:diffuse} as a behavioral assumption that the agents form posterior beliefs consistent with having no prior knowledge of $\theta$.}
As has been noted by \cite{ambrus2021defining}, one must take care when defining ex-ante payoffs in any game involving a diffuse prior.  They describe sufficient conditions for ex ante payoffs to be well-defined and consistent with any appropriate limiting sequence of proper prior distributions.  In Appendix~\ref{app:diffuse-prior} we verify that ex ante payoffs are likewise well-defined in our model and discuss further implications of Assumption~\ref{ass:diffuse}, including the precise form of the posterior beliefs.

\subsection{Characterizing Best Response Strategies} \label{sec:translation_invariance}

Our analysis focuses on a 
class of \textit{translation invariant} strategies for both the sender and the receiver. Intuitively, a translation invariant communication scheme encodes a sender's \textit{communication posture} such as ``always send the left-most anecdote'' or ``always send the right-most anecdote'' that does not depend on the particular realization of the state of the world. This is an appealing property in the context of our diffuse prior assumption where specific numbers have no special meaning.

Similarly, a translation invariant action rule describes the receiver's action as a fixed offset from the received anecdote. Such an action rule describes a receiver  who simply believes the anecdote he hears is representative of the state, albeit potentially with a shift. We will show in this section that translation invariance is internally consistent in the sense that the best response of the sender to a translation invariant action rule is a translation invariant communication scheme and vice versa.

We formally define translation invariant action rules as follows.
\begin{definition}
An action rule $\alpha$ is a translation if $\alpha(x+\delta) = \alpha(x)+\delta$ for all $x$ and all $\delta \in \reals$.
\end{definition}
\noindent Note that if action rule $\alpha$ is a translation, then there is a value $\sigma \in \reals$ such that $\alpha(x) = x + \sigma$ for all $x$.  We refer to $\sigma$ as the \emph{shift} of $\alpha$, written $\sigma(\alpha)$. These receivers act as if they know the typical bias of a sender, e.g.,
a receiver who thinks the New York Times, being a slightly left-of-center paper, sends anecdotes
shifted slightly left.

We next formally define translation invariant communication schemes. Given a profile of anecdotes $\vec{x}$ and a constant $\delta \in \reals$ we will write $\vec{x}+\delta$ for the shifted profile of anecdotes $(x_1 + \delta, x_2 + \delta, \dotsc, x_n + \delta)$.
\begin{definition}\label{def:trans_invariance}
A communication scheme $\pi$ is translation invariant if $\pi(\vec{x}+\delta,y+\delta) = \pi(\vec{x},y)+\delta$ 
for all $\vec{x},y$ and all $\delta \in \reals$.\footnote{We assume that $y+\delta = \emptyset$ for $y=\emptyset$.}
\end{definition}
\noindent Not all communication schemes are translation invariant.  For example, the communication scheme that sends the anecdote closest to zero is not translation invariant, nor is the one that sends the minimum anecdote if that anecdote is an even number and the maximum otherwise.  However, many natural communication schemes are translation invariant. The top panel of Figure \ref{fig:invariant_schemes} shows the simplest example: the \textit{minimum scheme} $\pi_{\min}(\vec{x},y)$ selects the minimum anecdote (analogously, the \textit{maximum scheme} $\pi_{\max}(\vec{x},y)$ selects the maximum anecdote). 

\begin{figure}[ht]
\caption{Examples of translation invariant communication schemes}  \label{fig:invariant_schemes}
\begin{center}
\begin{tikzpicture}[scale=0.85]
  \draw[-] (-1, 0) -- (15, 0);
  \draw[domain=1:14, smooth, variable=\x, black] plot ({\x}, {2*exp(-(\x-7.5)*(\x-7.5)/2/pi/2/2)});
   \foreach \c in {0,1,3,3.5,5,8,8.5,9,9.5,12}{ %
	\fill[black!20] (\c,0) circle[radius=0.15]; %
   }
  \draw[-stealth,ultra thick]    (7.5,-1)   -- (7.5,-0.2) node[below=25] {$\theta$};
	\fill[black!30!green] (0,0) circle[radius=0.15];
	\draw[-stealth,ultra thick,black!30!green]    (0,-1)   -- (0,-0.2) node[below=25] {$\pi_{\min}(\vec{x})$};
  \node at (7.5,1) {Minimum scheme};
\end{tikzpicture}
\begin{tikzpicture}[scale=0.85]
  \draw[-] (-1, 0) -- (15, 0);
  \draw[domain=1:14, smooth, variable=\x, black] plot ({\x}, {2*exp(-(\x-7.5)*(\x-7.5)/2/pi/2/2)});
   \foreach \c in {0,1,3,3.5,5,8,8.5,9,9.5,12}{ %
	\fill[black!20] (\c,0) circle[radius=0.15]; %
   }
  \draw[-stealth,ultra thick]    (7.5,-1)   -- (7.5,-0.2) node[below=25] {$\theta$};
	\fill[black!30!green] (8,0) circle[radius=0.15];
	\draw[-stealth,ultra thick,black!30!green]    (8,-.8)   -- (8,-0.2) node[below=15] {$\pi_{0}(\vec{x})$};
  \node at (7.5,1) {Mean scheme};
\end{tikzpicture}
\begin{tikzpicture}[scale=0.85]
  \draw[-] (-1, 0) -- (15, 0);
  \draw[domain=1:14, smooth, variable=\x, black] plot ({\x}, {2*exp(-(\x-7.5)*(\x-7.5)/2/pi/2/2)});
   \foreach \c in {0,1,3,3.5,5,8,8.5,9,9.5,12}{ %
	\fill[black!20] (\c,0) circle[radius=0.15]; %
   }
  \draw[-stealth,ultra thick]    (7.5,-1)   -- (7.5,-0.2) node[below=25] {$\theta$};
	\fill[black!30!green] (3,0) circle[radius=0.15];
	\draw[-stealth,ultra thick,black!30!green]    (3,-.8)   -- (3,-0.2) node[below=15] {$\pi_{r}(\vec{x})$};
	\draw[-stealth,ultra thick,black!30!green]    (2.1,-1)   -- (2.1,-0.2) node[below=25] {$\theta+r$};
  \node at (7.5,1) {Targeting scheme};
\end{tikzpicture}
\end{center}
\end{figure}

A particularly important class of translation invariant communication schemes are those where the sender selects a signal that is closest to a shift $r$ from her posterior mean. We call these \textit{targeting schemes}. 
\begin{definition}
The targeting scheme with offset $r \in \reals$ is a communication scheme that always returns the anecdote from $\vec{x}$ that is closest to $\theta_S(\vec{x},y) + r$.
\end{definition}
\noindent Note that since $\theta$ is drawn from a diffuse prior, we have $\theta_S(\vec{x}+\delta,y+\delta) = \theta_S(\vec{x},y)+\delta$ (formalized in Appendix \ref{app:diffuse-prior}). Hence a targeting scheme is translation invariant. The middle panel of Figure \ref{fig:invariant_schemes} shows the \textit{mean targeting scheme} for $r=0$ where the sender selects the signal closest to the posterior mean. The bottom panel illustrates a targeting scheme with $r<0$. Note that as $r\rightarrow -\infty$ the targeting scheme approaches the minimum scheme.

It will be helpful to define the \textit{bias} of a translation  invariant communication scheme.  Given a translation invariant communication scheme $\pi$, we will say the bias of $\pi$, $\beta(\pi)$, is equal to $\E_{\theta,\vec{x},y}[\pi(\vec{x},y) - \theta]$. If the anecdote distribution $F$ is symmetric around $0$ then the minimum scheme has a left bias ($\beta(\pi_{\min})<0$) and a targeting scheme $\pi_{r}$ has a left bias for $r<0$ and a right bias for $r>0$ (and will be unbiased for $r=0$). 

We can now state the main result: the sender's best response to a translation invariant action rule is not just translation invariant but also a targeting scheme.
\begin{theorem}\label{thm:sender.br}
If action rule $\alpha$ is a translation, then the best response of the sender is translation invariant.  More specifically, it is the targeting scheme with offset $M_S - \sigma(\alpha)$.
\end{theorem}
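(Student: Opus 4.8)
The plan is to reduce the sender's discrete choice problem to ``pick the anecdote nearest a target point'' by completing the square, exactly mirroring the bias--variance argument used for the receiver in Theorem~\ref{thm:receiver.br}.

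First I would fix a translation action rule $\alpha$ and write $\sigma = \sigma(\alpha)$, so that $\alpha(x) = x + \sigma$ for all $x$. By requirement~(\ref{item:senderbr}) of Definition~\ref{def:bne-def}, after observing $(\vec{x},y)$ the sender chooses $x_i \in \vec{x}$ to maximize
\[
\E_\theta\big[u_S(\alpha(x_i),\theta) \mid (\vec{x},y)\big] = -\,\E_\theta\big[(x_i + \sigma - \moral_S - \theta)^2 \,\big|\, (\vec{x},y)\big].
\]
Writing $t_i = x_i + \sigma - \moral_S$ and expanding, $\E_\theta[(t_i-\theta)^2 \mid (\vec{x},y)] = t_i^2 - 2 t_i\,\theta_S(\vec{x},y) + \E_\theta[\theta^2 \mid (\vec{x},y)]$, since $\theta_S(\vec{x},y) = \E_\theta[\theta \mid (\vec{x},y)]$ is by definition the sender's posterior mean (the posterior of $\theta$ given $(\vec{x},y)$ is proper with finite second moment, so this is well defined). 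Only the first two terms depend on $i$, and $t_i^2 - 2 t_i\,\theta_S(\vec{x},y) = (t_i - \theta_S(\vec{x},y))^2 - \theta_S(\vec{x},y)^2$. Hence maximizing expected sender utility over $i$ is equivalent to minimizing $(t_i - \theta_S(\vec{x},y))^2$, i.e.\ to minimizing $\big| x_i - (\theta_S(\vec{x},y) + \moral_S - \sigma)\big|$. So the best response selects the anecdote in $\vec{x}$ closest to $\theta_S(\vec{x},y) + (\moral_S - \sigma(\alpha))$, which is precisely the targeting scheme with offset $\moral_S - \sigma(\alpha)$.

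Finally I would note this scheme is translation invariant: as observed just after the definition of targeting schemes, the diffuse prior gives $\theta_S(\vec{x}+\delta,y+\delta) = \theta_S(\vec{x},y)+\delta$ (see Appendix~\ref{app:diffuse-prior}), so shifting $(\vec{x},y)$ by $\delta$ shifts both the target point and every anecdote by $\delta$, hence shifts the selected anecdote by $\delta$.

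The only delicate point is the $\arg\max$ over the finite set $\{x_1,\dots,x_n\}$ rather than a continuum: completing the square shows that, for each $i$, the sender's expected utility is a strictly decreasing function of the distance $\big|x_i - (\theta_S(\vec{x},y)+\moral_S-\sigma)\big|$, so the minimizer over the $n$ anecdotes is simply the nearest one, and ties between two equidistant anecdotes occur only on a measure-zero set of $(\vec{x},y)$ when $F$ is continuous (and can be resolved by any fixed convention without affecting the statement). I do not expect any genuine obstacle beyond this bookkeeping.
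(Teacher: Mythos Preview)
Your proposal is correct and follows essentially the same approach as the paper: substitute $\alpha(x_i)=x_i+\sigma$ into the sender's expected quadratic loss, use the posterior mean $\theta_S(\vec{x},y)$ to reduce the conditional expectation to a deterministic quadratic in $x_i$ (the paper calls this a bias--variance decomposition, you write it as completing the square), and conclude the sender picks the anecdote nearest $\theta_S(\vec{x},y)+\moral_S-\sigma$. Your added remarks on translation invariance and measure-zero ties are fine bookkeeping that the paper omits.
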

\noindent Note that Theorem \ref{thm:sender.br} characterizes the sender's best response among all (not necessarily translation invariant) communication schemes. All the proofs in this section are relegated to Appendix \ref{app:model}. The intuition for the result is as follows: assume that the sender had exactly an anecdote at distance $r$ from the posterior mean. The recipient will then take action $a=\theta_S(\vec{x},y)+\sigma(\alpha)$. If $r=M_S - \sigma(\alpha)$ we get $a=\theta_S(\vec{x},y)+M_S$ which is exactly the loss-minimizing action from the sender's perspective. In general, the sender will not have exactly an anecdote at the offset available and so she chooses the closest anecdote as a second-best -- for this reason, the sender's best response is a targeting scheme.

Also, note that the sender's best response is generally unique \textit{except} when there is foresight and the anecdote distribution $F$ is bounded. For example, if $\underline{\epsilon}$ is the lower bound of the domain of $F$ and $\sigma(\alpha)\leq \underline{x}$ then the sender's optimal targeting scheme is the minimum scheme and any $r<\underline{x}$ will be optimal (including the one prescribed by Theorem \ref{thm:sender.br}).

We close out our analysis of translation invariant strategies by showing that the best response to a translation invariant communication scheme is a translation invariant action rule.
\begin{theorem}\label{thm:receiver.br}
For any translation invariant communication scheme $\pi$, the best response of the receiver to $\pi$ is a translation with shift $M_R-\beta(\pi)$. 
\end{theorem}
\noindent The intuition is as follows. The receiver knows that the sent anecdote is on average a distance $\beta(\pi)$ away from the sender's posterior mean. Hence, he will try to undo this bias by subtracting it from the sent anecdote which provides her with an unbiased estimate of the state of the world. He will then minimize her loss by taking an equal to this estimate plus her personal preference $M_R$.

We have now established that sender and receiver translation invariant strategies are self-consistent in the sense that they are best responses to each other. We have not yet shown that equilibria with translation invariant strategies exist. We will show their existence in Section \ref{sec:equilibrium}. 

\subsection{Broadcasting}\label{subsec:broadcasting}

Senders often communicate anecdotes to several receivers simultaneously. Experts or politicians, for instance, can appear on TV or are cited in newspapers, reaching many viewers or readers. Our model can easily be extended to allow for \textit{broadcasting} to multiple receivers whose personal preference $M_R$ is drawn from a distribution $G$ over the real numbers. Each type $M_R$ is taking an action $a(\tilde{x}|M_R)$ after observing anecdote $\tilde{x}$ which is ``broadcast'' to all receiver types. The sender cares about the \textit{mean} action of the receivers:
\begin{equation}
    \overline{a}=\int a(\tilde{x}|M_R) dG
\end{equation}
We denote the mean personal preferences of the receivers with $\overline{M}_R$:
\begin{equation}
    \overline{M}_R=\int M_R dG
\end{equation}
Our notion of translation invariant communication schemes readily generalizes to this setting.
\begin{proposition}\label{prop:broadcast}
In any translation invariant communication scheme of the broadcast model, the best response of the receiver type $M_R$ is a translation with shift $\sigma(M_R)=M_R-\beta(\pi)$. The sender's best response is the targeting scheme with offset $M_S-\overline{\sigma}$ where $\overline{\sigma}=\int \sigma(M_R)dG$ is the mean shift across receiver types.
\end{proposition}
\noindent Therefore, the sender in the broadcast model behaves as if she is communicating with a single ``representative'' receiver, whose personal preference matches the average personal preferences of the entire audience.

\subsection{Discussion of Modeling Assumptions}\label{sub:modeling assumptions}

\noindent\textbf{Crawford-Sobel (1982).} Our model follows \citet{crawford1982strategic} except that we restrict communication to anecdotes. This restriction is most appropriate in situations where the sender or receiver are non-experts -- for example, newspapers communicating news to readers or politicians talking to voters. Experts often share a \textit{common language} that allows them to communicate beliefs efficiently. For example, experts on infectious diseases use the reproduction factor $R_0$ to describe how quickly a disease such as measles can spread between people. An expert sender can therefore summarize her knowledge about measles, for example, by communicating this factor to the receiver. A parent, however, might find it easier to understand infectiousness by being told that a neighboring school had to cancel classes because too many teachers called in sick. A journalist might therefore relate the expert's message by listing examples of school districts who experienced measles outbreaks and interviews with local school principals and teachers.

\noindent\textbf{Common Knowledge.} The assumption that sender's preference is publicly known is justifiable in settings where the sender is a known entity, say a politician or newspaper.  In such settings, the sender is often communicating with a known distribution of receiver types -- the general public for instance.  Our results would follow largely unchanged if the receiver's preference is drawn from a known distribution.

\noindent\textbf{No ``Fake News''.} We assume that anecdotes cannot be manipulated or falsified but can only be selected in a possibly biased manner. The non-falsification assumption is reasonable in public discourse: politicians and newspapers by-and-large report facts, or else risk being caught by fact-checkers. However, they have editorial control over the selection of those facts and can influence the listener this way. Unlike lying about and making up facts, biased selection of facts is typically accepted in public discourse. The non-manipulation assumption is stronger: certainly two senders can sometimes frame two facts differently. A full analysis of framing exceeds the scope of our paper but we note that our framework can accommodate some natural forms of framing: for example, if the sender can add a ``spin'' to the sent anecdote $\tilde{x}$ by shifting its value by a fixed constant $s$ then the receiver observes $\tilde{x}+s$ and can undo this spin as long as $s$ is common knowledge (which is natural if the sender's personal preference is also common knowledge).

\noindent\textbf{Commitment.} In Section \ref{sec:commitment} we will study a variant of the game where the sender can \textit{commit} to a communication scheme. A sender may be able to commit in settings where her reputation precedes her. In this case an equilibrium does not need to {meet requirement~(\ref{item:senderbr}) of Definition \ref{def:bne-def}, i.e.,  $\pi^*(\vec{x}, y)$ does not need to  maximize expected sender utility given $\alpha^*(x)$.}  Rather, for every $\pi$, we fix a best response $\alpha_\pi$ of the receiver. We then require that
\begin{enumerate}
\item[\textit{3'.}] {$\pi^*$ is such that}
\begin{equation}\label{def:commitment}
\pi^* \in \arg\max_\pi \left\{ \E_{\theta, \vec{x}, y}[ u_S( \alpha_\pi(\pi(\vec{x}, y)),\theta ) ] \right\}.    
\end{equation}
\end{enumerate}

\noindent\textbf{Model of Memory.} We can reinterpret our framework as a model of \textit{memory} where the current self (sender) decides which data point to commit to her long-term memory that is available to her future self (receiver). This interpretation is particularly intuitive for the case of commitment or where the current and future self share the same personal preferences ($\Delta=0$).

\section{Most Informative Communication and Persuasion Temptation}\label{sec:persuasion}

Before characterizing the equilibria of the game it is helpful to highlight a fundamental tension in the sender's objective: on the one hand she wants to send an informative anecdote so that the receiver takes an action that reflects the state of the world. On the other hand, if the anecdote is too precise then the receiver will take an action close to his ideal action $\theta + M_R$, which differs from the sender's preferred action unless their preferences are aligned.

We call this tension the sender's \textit{persuasion temptation}. To understand it better, it is helpful to decompose the sender's utility into two components.
\begin{proposition}\label{prop:sender_summary}
Suppose the receiver's action rule $\alpha$ is a translation with shift $\sigma(\alpha)$.
Then the sender's expected utility from any translation invariant communication scheme $\pi_S$ is
\begin{equation}
-\underbrace{\E\left[(\pi_S(\vec{x},y)-(\theta_S(\vec{x},y)+\beta(\pi_S)))^2\right]}_{\mbox{information loss}}-\underbrace{\left[\sigma(\alpha) - (M_S - \beta(\pi_S)) \right]^2}_{\mbox{disagreement loss}}.    
\end{equation}
\end{proposition}
\noindent The first component of this decomposition, $\E[(\pi_s(\vec{x},y)-(\theta_S(\vec{x},y)+\beta(\pi_S)))^2]$, is the variance of the communicated anecdote as $\theta_S(\vec{x},y)+\beta(\pi_S)=\E[\pi_s(\vec{x},y)]$ by the definition of bias. We interpret this variance as the inherent {\em information loss} of the communication scheme because this loss would be incurred even if the sender sends this anecdote to herself. The second term captures the disagreement between the receiver's shift $\sigma(\alpha)$ and the sender's preferred shift for the receiver which equals $M_S-\beta(\pi_S)$.\footnote{The sender would like the receiver to choose an action at offset $M_S$ after compensating for the bias $\beta(\pi_S)$ in the communication scheme.}

\subsection{Most Informative Communication Schemes}\label{sub:most_informative}

We can better understand the tension in minimizing both the information loss and the disagreement loss by first characterizing communication schemes that minimize \textit{only} the information loss. This is the most informative communication scheme for a sender who is choosing the action herself based on a single recalled anecdote. For example, a sender who can only keep a single anecdote in memory for her future self solves this problem.

It turns out that most informative communication schemes are always targeting schemes that have to satisfy a simple necessary condition, namely that the offset $r^*$ of the targeting scheme has to equal its bias $\beta(r^*)$ which we call the ``balanced-offset'' condition.
\begin{theorem}\label{thm:most_informative}
    Any translation invariant communication scheme that minimizes the sender's information loss is a targeting scheme that satisfies the balanced-offset condition $r^*=\beta(r^*)$. Moreover, such a targeting scheme exists.
\end{theorem}
\noindent The proofs of this section are relegated to Appendix \ref{app:persuasion}. We can show that the balanced-offset condition implies that the most informative targeting scheme will have the sender choose an offset close to the global maximum of the density distribution. We now develop this intuition for the special case of foresight (where the sender knows the state of the world). Section \ref{sec:commitment} extends it to the no-foresight case.
\begin{lemma}\label{lem:symmetric}
    For symmetric anecdote distribution $F$ with a peak at $0$ the most informative targeting scheme for any number of anecdotes $n$ under foresight is the mean scheme.
\end{lemma}
\noindent Figure \ref{fig:most_informative_normal_0} shows the normal anecdote distribution $F\sim N(0,1)$. Because of symmetry, the closest anecdote is equally likely to lie to the left or right of the offset $r=0$ (indicated by green shading) and therefore the targeting scheme has bias $\beta(0)=0$. Moreover, no offset $r<0$ can implement a most informative targeting scheme because the closest anecdote is more likely to the right of the offset than to the left and therefore $\beta(r)>r$. Analogously, no offset $r>0$ can implement such a communication scheme either.

\begin{figure}
\caption{Examples of most informative offset $r^*$ under foresight}\label{fig:most_informative_normal}
  \centering
  \begin{subcaptiongroup}
    \centering
    \parbox[t]{.46\textwidth}{
    \centering
    \begin{tikzpicture}

\fill [shading = axis,left color=white, right color=green!30!white, ,color=white, domain=-1:0, variable=\x]
  (-1, 0)
  -- plot ({\x}, {2.5*exp(-(\x-0)*(\x-0)/2)})
  -- (0, 0)
  -- cycle;
\fill [shading = axis,right color=white, left color=green!30!white, ,color=white, domain=0:1, variable=\x]
  (0, 0)
  -- plot ({\x}, {2.5*exp(-(\x-0)*(\x-0)/2)})
  -- (1, 0)
  -- cycle;

\fill [shading = axis,left color=white, right color=black!30!white, ,color=white, domain=-2:-1.7, variable=\x]
  (-2, 0)
  -- plot ({\x}, {2.5*exp(-(\x-0)*(\x-0)/2)})
  -- (-1.7, 0)
  -- cycle;
\fill [shading = axis,right color=white, left color=black!30!white, ,color=white, domain=-1.7:-1.2, variable=\x]
  (-1.7, 0)
  -- plot ({\x}, {2.5*exp(-(\x-0)*(\x-0)/2)})
  -- (-1.2, 0)
  -- cycle;
  \draw[dashed,black]    (-1.7,0)   -- (-1.7,3);
\draw[-stealth,ultra thick,black!30]    (-1.7,-1)   -- (-1.7,-0.2) node[below=25] {$r<0$} node[below=35] {$\beta(r)>r$};

\fill [shading = axis,right color=white, left color=black!30!white, ,color=white, domain=1.7:2, variable=\x]
  (1.7, 0)
  -- plot ({\x}, {2.5*exp(-(\x-0)*(\x-0)/2)})
  -- (2, 0)
  -- cycle;
\fill [shading = axis,left color=white, right color=black!30!white, ,color=white, domain=1.2:1.7, variable=\x]
  (1.2, 0)
  -- plot ({\x}, {2.5*exp(-(\x-0)*(\x-0)/2)})
  -- (1.7, 0)
  -- cycle;
  \draw[dashed,black]    (1.7,0)   -- (1.7,3);
\draw[-stealth,ultra thick,black!30]    (1.7,-1)   -- (1.7,-0.2) node[below=25] {$r>0$} node[below=35] {$\beta(r)<r$};

  \draw[domain=-2.5:2.5, smooth, variable=\x, black] plot ({\x}, {2.5*exp(-(\x-0)*(\x-0)/2)});

  \draw[->] (-3, 0) -- (3, 0);
  \draw[->] (0,0) -- (0, 3);

\draw[-stealth,ultra thick,black!30!green]    (0,-1)   -- (0,-0.2) node[below=25] {$r^*=0$} node[below=35] {$\beta(r^*)=0$};

\end{tikzpicture}
    \caption{$F\sim N(0,1)$}\label{fig:most_informative_normal_0}}
    \parbox[t]{.46\textwidth}{
    \centering
    \begin{tikzpicture}

\fill [shading = axis,left color=white, right color=green!30!white, ,color=white, domain=-1:0, variable=\x]
  (-1, 0)
  -- plot ({\x}, {2.5*exp(-(\x-0)*(\x-0)/2)})
  -- (0, 0)
  -- cycle;
\fill [shading = axis,right color=white, left color=green!30!white, ,color=white, domain=0:1, variable=\x]
  (0, 0)
  -- plot ({\x}, {2.5*exp(-(\x-0)*(\x-0)/2)})
  -- (1, 0)
  -- cycle;

  \draw[domain=-2.5:2.5, smooth, variable=\x, black] plot ({\x}, {2.5*exp(-(\x-0)*(\x-0)/2)});

  \draw[->] (-3, 0) -- (3, 0);
  \draw[->] (0-2,0) -- (-2, 3);

\draw[-stealth,ultra thick,black!30!green]    (0,-1)   -- (0,-0.2) node[below=25] {$r^*=2$} node[below=35] {$\beta(r^*)=2$};

\end{tikzpicture}
    \caption{$F\sim N(2,1)$}\label{fig:most_informative_normal_2}}
  \end{subcaptiongroup}
  \begin{center}
    \small Green shading shows symmetric density of closest anecdote to offset $r^*$.
\end{center}
\end{figure}

Lemma \ref{lem:symmetric} is intuitive because it places the most informative offset at the point where the density of anecdote distribution is the highest. Anecdotes are most densely distributed around the peak and the closest anecdote to $r^*=0$ is therefore a better predictor of the state of the world compared to the closest anecdote to any other offset $r \neq 0$. This argument carries over to the case of symmetric single-peaked distributions that are not centered at $0$. For example, Figure \ref{fig:most_informative_normal_2} shows an normal anecdote distribution centered around $2$ which shifts the most informative offset to the new peak $r^*=2$.

The general intuition that the most informative offset is close to the maxima of the density distribution even holds when the density has more than one mode and is not symmetric.
\begin{lemma}\label{lem:most_informative_limit}
Assume the sender has foresight, the anecdote density $f$ is uniformly continuous on its domain and fix any $\epsilon>0$. There exists an $\overline{n}$ such that for any number of anecdotes $n>\overline{n}$ the most informative offset $r^*$ is at most $\epsilon$ away from one of the global maxima of the anecdote density.
\end{lemma}
Figure \ref{fig:most_informative_bimodal_5} shows an example of a bimodal distribution where the anecdote is with equal probability drawn from either the normal distributions $N(-2,1)$ or $N(2,1)$. The density of this bimodal distribution has two global maxima at $-2$ and $2$. When the sender has foresight and $5$ anecdotes to choose from there are three offsets that satisfy the balanced-offset condition: $r^*=\pm 1.73$ and $r^*=0$. The latter is a local minimum but the former offsets implement the two possible most informative targeting schemes. For $n=10$ the two most informative targeting schemes at $r^*=\pm 1.98$ are even closer to the peaks of the density as predicted by Lemma \ref{lem:most_informative_limit} (see Figure \ref{fig:most_informative_bimodal_10}).

\begin{figure}
\caption{Most informative offset $r^*$ under foresight for bimodal normal mixture distribution}\label{fig:most_informative_bimodal}
  \centering
  \begin{subcaptiongroup}
    \centering
    \parbox[t]{.46\textwidth}{
    \centering
    \begin{tikzpicture}

  \draw[domain=-3:3, smooth, variable=\x, black] plot ({\x}, {0.5*5*exp(-(\x-2)*(\x-2)/2)+0.5*5*exp(-(\x+2)*(\x+2)/2)});

  \draw[->] (-3, 0) -- (3, 0);
  \draw[->] (0,0) -- (0, 3);

\draw[-stealth,ultra thick,black!30!green]    (-1.73,-1)   -- (-1.73,-0.2) node[below=25] {$r^*=-1.73$};

\draw[-stealth,ultra thick,black!30!green]    (1.73,-1)   -- (1.73,-0.2) node[below=25] {$r^*=1.73$} ;

\draw[-stealth,ultra thick,black!30]    (0,-1)   -- (0,-0.2) node[below=25] {$r^*=0$};

\draw[dashed,black!30!green]    (-1.73,0)   -- (-1.73,3);
\draw[dashed,black!30!green]    (1.73,0)   -- (1.73,3);

\end{tikzpicture}
    \caption{$n=5$}\label{fig:most_informative_bimodal_5}}
    \parbox[t]{.46\textwidth}{
    \centering
    \begin{tikzpicture}

  \draw[domain=-3:3, smooth, variable=\x, black] plot ({\x}, {0.5*5*exp(-(\x-2)*(\x-2)/2)+0.5*5*exp(-(\x+2)*(\x+2)/2)});

  \draw[->] (-3, 0) -- (3, 0);
  \draw[->] (0,0) -- (0, 3);

\draw[-stealth,ultra thick,black!30!green]    (-1.98,-1)   -- (-1.98,-0.2) node[below=25] {$r^*=-1.98$};

\draw[-stealth,ultra thick,black!30!green]    (1.98,-1)   -- (1.98,-0.2) node[below=25] {$r^*=1.98$} ;

\draw[-stealth,ultra thick,black!30]    (0,-1)   -- (0,-0.2) node[below=25] {$r^*=0$};

\draw[dashed,black!30!green]    (-1.98,0)   -- (-1.98,3);
\draw[dashed,black!30!green]    (1.98,0)   -- (1.98,3);

\end{tikzpicture}
    \caption{$n=10$}\label{fig:most_informative_bimodal_10}}
  \end{subcaptiongroup}
  \begin{center}
  \small $F \sim B\cdot N(-2,1)+(1-B)\cdot N(2,1)$ where $B$ is Bernoulli with $p=\frac{1}{2}$.
  \end{center}
\end{figure}

\subsection{Persuasion Temptation}

We now return to the original problem and allow the sender to take the disagreement loss into account. Note, that if the receiver chooses his best response $\sigma(\alpha)=M_R-\beta(r^*)$ to the sender's strategy (Theorem \ref{thm:receiver.br}) then the disagreement loss equals $(M_R-M_S)^2$. In the special case where the sender and receiver share the same personal preferences (e.g. $M_R=M_S$) the disagreement loss is zero and therefore a balanced-offset targeting scheme is also an equilibrium of the sender-receiver game as we will show in Section \ref{sec:equilibrium}. This is intuitive because the most informative targeting scheme solves exactly the problem where the sender is constrained to send her future self a single anecdote.

However, the balanced-offset targeting scheme is no longer an equilibrium when sender and receiver are misaligned. In that case, the sender can reduce her disagreement loss by slightly deviating to a targeting scheme with a bias of $\beta(r^*)+\delta$ ($|\delta|$ is small), which moves the receiver's action towards the sender's preferred action. For example, in the motivating example of Figure \ref{fig:intuition} she will choose $\delta<0$. This reduces the disagreement loss to $(|M_R-M_S|-|\delta|)^2$ which is an $O(|\delta|)$ improvement to the sender's utility. The information loss increases but this effect is only $O(\delta^2)$ because we deviated from the targeting scheme that minimized the information loss.\footnote{If we write the information loss as a function $L(\beta)$ of the bias $\beta$ and assume it is twice differentiable then we can use the Taylor expansion around $\beta^*$ and write $L(\beta^*+\delta)= L(\beta^*)+\frac{1}{2}L''(\beta^*+\zeta\delta)\delta^2$ for some $0\leq\zeta\leq 1$ because $L'(\beta^*)=0$ (as $L$ is minimized at $\beta^*$). Therefore, if $L''$ is bounded the change in the information loss is $O(\delta^2)$.} Consequently, a small $\delta$-shift in the sender's communication bias is a strictly profitable deviation. Of course, the receiver will respond by adjusting the shift in her action rule by $-\delta$. In return, the sender will adjust her bias by another $\delta$ leading to \textit{partial} unraveling. There is no full unraveling towards always reporting the minimum (or maximum) anecdote because at some point the increase in the information loss term is no longer second-order. 

Note, that in any equilibrium, the disagreement loss is always $(M_R-M_S)^2$ due to Theorem \ref{thm:receiver.br} despite the sender's best efforts to reduce it. The rational receiver can always undo any bias that is added by the sender. However, the sender's \textit{persuasion temptation} increases her information loss. Communicating with a non-aligned receiver versus an aligned receiver therefore has two costs to the sender. First of all, there is always the disagreement loss $(M_R-M_S)^2$. Second, there is also an \textit{additional} information loss because the temptation to persuade the receiver induces the sender to choose a suboptimal communication scheme. This is an example of \textit{informational homophily} from the sender's perspective: given a choice between a non-aligned and an aligned receiver the sender always prefers to communicate with the aligned receiver.

The next result shows that this type of homophily is mutual in our model.
\begin{proposition}\label{prop:receiver-loss}
Let $\pi$ be a translation invariant communication scheme and let $\alpha_\pi$ be the best response of the receiver. Then the utility of the receiver is equal to the negative variance of the anecdote $\pi(\vec{x},y)$, 
\begin{equation}
  -\E[(\pi(\vec{x},y) - (\theta  +\beta(\pi)))^2].  
\end{equation}
\end{proposition}
\noindent
Hence, the receiver's loss is exactly equal to the sender's (expected) information loss. The receiver suffers no disagreement loss because she can in equilibrium undo any bias imposed by the sender. Therefore, the receiver always prefers to listen to an aligned sender (without persuasion temptation) rather than a non-aligned sender. Moreover, sender and receiver utility differ only by the constant disagreement loss $(M_R-M_S)^2$.

\section{Communication Equilibria, Polarized Communication and Informational Homophily}\label{sec:equilibrium}

In this section we show that translation invariant equilibria in our sender-receiver game always exist and are characterized by a simple condition that generalizes the ``balanced-offset'' condition of most informative targeting schemes from Section \ref{sec:persuasion}. We then show that these equilibria generically give rise to \textit{polarized communication} in the sense that whenever the sender is more left-leaning (right-leaning) than the receiver she will select anecdotes with a left (right) offset. Polarization in return decreases utility for both the sender and the receiver and therefore implies \textit{informational homophily}.

\subsection{Equilibrium Characterization and Existence}

We start by defining a \textit{translation invariant equilibrium} which is simply a PBE where both the sender and the receiver use translation invariant strategies. 
\begin{definition}
A perfect Bayesian equilibrium $(\pi, \alpha)$ is translation invariant if $\pi$ is a translation invariant communication scheme and $\alpha$ is a translation.
\end{definition}
Theorem \ref{thm:sender.br} tells us that in such translation invariant PBE the sender will use a targeting scheme with offset $r$ that satisfies
\begin{equation}
    r=M_S-\sigma(\alpha).
\end{equation}
This is also shown graphically in our motivating example from Figure \ref{fig:intuition}. Moreover, theorem \ref{thm:receiver.br} pins down the shift $\sigma(\alpha)$:
\begin{equation}
    \sigma(\alpha)=M_R-\beta(r).
\end{equation}
Here, $\beta(r)$ denotes the bias $\beta(\pi)$ of targeting scheme $\pi$ with offset $r$.

Combining these two conditions gives us a necessary condition for a translation invariant PBE:
\begin{eqnarray}
r&=&M_S - \underbrace{(M_r-\beta(r))}_{\sigma(\alpha)} \nonumber \\
\beta(r) -r &=& \underbrace{M_R-M_S}_{\Delta}.
\end{eqnarray}
Note that this condition reduces to the balanced-offset condition when the preferences of sender and receiver are aligned ($\Delta=0$).

The above condition is not only necessary but also a sufficient condition for PBE.
\begin{theorem}
\label{thm:bne}
A pair $(\pi,\alpha)$ 
is a translation invariant equilibrium if and only if there exists some value $r \in \reals$ such that $\pi$ is the targeting scheme with offset $r$, $\alpha$ is a translation with shift $M_R-\beta(r)$, and the bias satisfies
\begin{equation*}
    \beta(r) - r = M_R-M_S = \Delta.
\end{equation*}
\end{theorem}
All the proofs in this section can be found in Appendix \ref{app:equilibrium}.

\begin{figure}
\caption{Equilibrium offset for normal distribution $N(0,1)$ and $n=10$ anecdotes}\label{fig:normal_equilibrium_n10}
  \centering
  \begin{subcaptiongroup}
    \centering
    \parbox[t]{.46\textwidth}{
    \centering
    \begin{tikzpicture}[scale = 0.85]
\begin{axis}[
    domain=-5:5,
    xmin=-5, xmax=5,
    axis lines=center,
    ytick distance=1,
    xlabel=$r$    
    ]
    \addplot[smooth,color=black!30!green] table[x index=1,y index=2,col sep=comma] {PythonFigures/r_beta_normal_10_3.csv} node[below,pos=0.8] {$\beta(r)$};
    \addplot[dashed,color=purple] coordinates {(-5, -3) (3,5)} node[right,pos=0.9] {$r+\Delta$};
    \addplot[dashed,color=black] coordinates {(-5, -5) (4,4)} node[below,pos=0.9] {$r$};

    \draw[dotted, color=black]    (-3.56,0)   -- (-3.56,-5);
    \draw[-stealth,ultra thick,black!30!green]    (-3.56,1)   -- (-3.56,0.2) node[above=15] {$r^*=-3.56$};

\end{axis}
\end{tikzpicture}
    \caption{Equilibrium offset for $\Delta=M_R-M_S=2$}\label{fig:normal_equilibrium_n10_Delta2}}
    \parbox[t]{.46\textwidth}{
    \centering
    \begin{tikzpicture} [scale = 0.85]
\begin{axis}[
    domain=-3:3,
    xmin=-3, xmax=3,
    axis lines=center,
    ytick distance=1,
    xlabel=$\Delta$,
    ylabel=$r^*$
    ]
    \addplot[smooth,color=purple] table[x index=1,y index=3,col sep=comma] {PythonFigures/delta_on_r_10.csv} ;

    \draw[dotted, color=black!30!green]    (2,0)   -- (2,-3.56);
    \draw[dotted, color=black!30!green]    (2,-3.56) -- (0,-3.56);
    \draw[-stealth,ultra thick,black!30!green]    (-1,-3.56)   -- (-0.2,-3.56) node[left=25] {$r(2)=-3.56$};

\end{axis}
\end{tikzpicture}
    \caption{Equilibrium offset $r(\Delta)$}\label{fig:normal_equilibrium_n10_anyDelta}}
  \end{subcaptiongroup}
\end{figure}

The distance between the bias $\beta(r)$ of the targeting scheme and its offset $r$ is therefore the key relationship to understand the emergence of communication equilibria in our game. Figure \ref{fig:normal_equilibrium_n10_Delta2} shows how this distance varies with $r$ in the simple motivating example of Figure \ref{fig:intuition} where the sender has foresight and access to $n=10$ normally distributed anecdotes around the state of the world $\theta$ ($F \sim N(0,1)$). For negative offset $r=0$ we observe that $\beta(r)>r$ because the closest anecdote to $r$ is more likely to be on the right of $r$ (see Figure \ref{fig:intuition}).  Moreover, the distance $\beta(r)-r$ increases as we decrease $r$ because the targeting scheme is approaching the minimum scheme and the bias $\beta(r)$ converges to the expected offset of the minimum anecdote (in this example, the minimum of $n=10$ anecdotes). This ensures that for any $\Delta>0$ there will be some $r^*<0$ that satisfies $\beta(r^*)-r^*=\Delta$. For example, for $M_R-M_S=2$ we obtain $r^*=-3.56$.

We can show that this intuition generalizes for a broad set of anecdote distributions.
\begin{theorem}
\label{thm:bne.exist}
For any $n$, $M_S$ and $M_R$, if 
$\E_{\epsilon \sim F}|\epsilon|$ is bounded then a translation invariant PBE exists.
\end{theorem}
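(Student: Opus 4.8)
The plan is to reduce the existence of a translation-invariant \PBE to the solvability of the fixed-point equation of Theorem~\ref{thm:bne}, and then to solve that equation via monotonicity rather than continuity, so that no regularity assumption on $F$ or $G$ is needed. Write $\Delta = \moral_S - \moral_R$. By Theorem~\ref{thm:bne} it suffices to exhibit a $\delta \in \reals$ with $r(\delta) - \delta = \Delta$. I will phrase everything in terms of the bias map: for each offset $r \in \reals$ let $\pi_r$ denote the targeting scheme with offset $r$ (which is translation invariant, as noted right after the definition of targeting schemes, so its bias is well-defined), and set $\beta(r) := \beta(\pi_r)$. Finding the desired $\delta$ is then equivalent to finding an offset $r$ with $\beta(r) = r - \Delta$: given such an $r$, put $\delta := \beta(r)$, so that $r$ is a legitimate choice of $r(\delta)$ and $r(\delta) - \delta = \Delta$, and Theorem~\ref{thm:bne} delivers the equilibrium $(\pi_r, \alpha)$ with $\alpha$ the translation of shift $\moral_R - \delta$.

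Two structural facts about $\beta$ carry the argument. First, $\beta$ is non-decreasing: for a fixed realization $(\vec{x},y)$ the scheme $\pi_r$ returns the element of $\vec{x}$ nearest to the target $\theta_S(\vec{x},y) + r$, and nearest-point selection onto a fixed finite subset of $\reals$ is a non-decreasing function of the target (any fixed tie-breaking rule, e.g.\ always picking the smaller of two equidistant anecdotes, can be taken consistent with this); hence $\pi_r(\vec{x},y)$ is non-decreasing in $r$ pointwise, and taking expectations preserves monotonicity. Second, $\beta$ is uniformly bounded: since $\pi_r(\vec{x},y) = x_j$ for some coordinate $j$, we have $|\pi_r(\vec{x},y) - \theta| \le \max_i |x_i - \theta| \le \sum_i |x_i - \theta|$, so $|\beta(r)| \le n\,\E_{x \sim F}[|x-\theta|] =: C < \infty$ by the hypothesis of the theorem. (One can also identify the limits $\beta(r) \to \E[\max_i(x_i-\theta)]$ and $\beta(r)\to\E[\min_i(x_i-\theta)]$ as $r \to \pm\infty$ by dominated convergence, dominated by $\sum_i|x_i-\theta|$, but only the crude bound $C$ is needed.)

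Now define $T(r) := \beta(r) + \Delta$. By the two facts above, $T$ is non-decreasing and maps the compact interval $I := [\Delta - C,\, \Delta + C]$ into itself. A non-decreasing self-map of a compact interval has a fixed point (Knaster--Tarski): concretely, $r^* := \sup\{\, r \in I : T(r) \ge r \,\}$ is well-defined since $\Delta - C \in I$ lies in the set, and monotonicity gives first $T(r^*) \ge r^*$ and then $T(T(r^*)) \ge T(r^*)$, so $T(r^*) \le r^*$, whence $T(r^*) = r^*$. Any such fixed point satisfies $\beta(r^*) = r^* - \Delta$, exactly the equation sought; taking $\delta := \beta(r^*)$ and invoking Theorem~\ref{thm:bne} (equivalently Theorems~\ref{thm:receiver.br} and~\ref{thm:sender.br}, since $\pi_{r^*}$ and the translation of shift $\moral_R - \delta$ are then mutual best responses) produces a translation-invariant \PBE, completing the proof.

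The only delicate point is that $\beta$ need not be continuous when $F$ or $G$ has atoms, so a naive intermediate-value argument applied to $r \mapsto r - \beta(r)$ would be incomplete. Routing the existence argument through a monotone fixed-point theorem sidesteps this issue entirely, which is why I would isolate the monotonicity of nearest-point selection as the key lemma rather than attempting any continuity estimate; the boundedness hypothesis $\E_{x\sim F}[|x-\theta|] < \infty$ enters solely to guarantee that $T$ has a bounded (hence compact) invariant interval.
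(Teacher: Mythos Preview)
Your proof is correct and takes a genuinely different route from the paper's. The paper works with $H(r) := \beta(r) - r$ and runs an intermediate-value-type argument: it shows $H(r) \to \mp\infty$ as $r \to \pm\infty$ using the same moment bound you use, then analyses the direction of the possible discontinuities of $H$ (any jump must be an \emph{upward} jump, since a discontinuity in nearest-anecdote selection flips from the left neighbour to the right one), and finally takes $r_2$ to be the left endpoint of a maximal interval on which $H \le -\Delta$, arguing by contradiction that $H(r_2) = -\Delta$.

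Your reformulation is more economical: by isolating the single structural fact that $r \mapsto \pi_r(\vec{x},y)$ is pointwise non-decreasing (hence $\beta$ is non-decreasing), you convert the problem into finding a fixed point of the monotone self-map $T(r) = \beta(r) + \Delta$ on a compact interval, to which Knaster--Tarski applies directly. This bypasses the discontinuity bookkeeping entirely and makes transparent why no regularity on $F$ or $G$ is required. The paper's argument, by contrast, implicitly records the structure of $H$'s discontinuities and locates the solution as the boundary of a level set of $H$, which is informative but not needed for bare existence. Both arguments use the moment hypothesis $\E_{x\sim F}[|x-\theta|] < \infty$ in exactly the same way---to bound $\beta$ uniformly---so neither is more general in scope.
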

\noindent We restrict attention from now on to anecdote distributions that satisfies this property.

We highlight the main technical ideas of the formal proof provided in Appendix \ref{app:equilibrium}. Given a communication scheme $\pi_r$ with offset $r$, let $z = \pi_r(\vec x,y) - (\theta_S(\vec x, y) + r)$ denote the distance between the target $\theta_S(\vec{x},y) + r$ and the closest anecdote (out of $n$ total anecdotes), where $z$ is positive if the closest anecdote is larger and negative if the closest anecdote is smaller.  Let $H(r)$ be the expected value of $z$, given offset $r$. By definition, we have that $\beta(r) = r + H(r)$. Theorem~\ref{thm:bne} now implies that to show that a PBE exists, it suffices to show that there exists a value of $r$ such that
$H(r) = M_R-M_S$, i.e., $H(r)$ is an onto function in its codomain $(-\infty, +\infty)$.

We show this in two steps. In the first step, we establish that $H(r)$ tends to $+\infty$ and $-\infty$, respectively, as $r\rightarrow -\infty$ and $r\rightarrow +\infty$. 
At a high level, this is due to that fact that boundedness of $\E[|x-\theta|]$ also implies that $\E[\max_i x_i - \theta_S(\vec x,y)]$ and $\E[\min_i x_i - \theta_S(\vec x,y)]$ are also bounded (for a fixed $n$). Therefore, $\beta(r)$ remains bounded even as $|r|\rightarrow \infty$. This proves that $H(r)$ tends to $+\infty$ and $-\infty$, respectively, as $r\rightarrow -\infty$ and $r\rightarrow +\infty$. 

Had $H(r)$ been a continuous function, the first step would suffice to prove that $H(r)$ is an onto function taking all values in $(-\infty, +\infty)$. 
The second step handles discontinuous $H(r)$ by establishing that at any point that $H(r)$ is discontinuous, the left limit of $H(r)$ is smaller than the right limit.  To see why, suppose $H$ is not continuous at $r_0$.  For each possible realization of $(\theta,\vec{x},y)$, either $z$ is continuous at $r_0$ or it is not.  If not, this means that $\theta_S(\vec{x},y) + r_0$ is precisely halfway between two anecdotes in $\vec{x}$, say with absolute distance $d > 0$ to each, in which case the limit of $z$ from below is $-d$ (distance to the anecdote to the left) and the limit of $z$ from above is $d$ (distance to the anecdote to the right).  Integrating over all realizations, we conclude that the one-sided limits of $H$ exist and $\lim_{r \to r_0^-}H(r) < \lim_{r \to r_0^+}H(r)$.

Together these two steps show that $H(r)$ can take any value in the range $(-\infty, +\infty)$ and therefore a translation invariant PBE always exists

We can use the same logic to characterize the translation invariant equilibrium of the broadcasting extension of our model that we introduced in Section \ref{subsec:broadcasting}.
\begin{lemma}
A translation invariant equilibrium in the broadcast model satisfies $\beta(r)-r=\overline{M}_R-M_S$ where $\overline{M}_r=\int M_F dG$ is the mean personal preference across receiver types. Moreover, for any $n$ such an equilibrium exists if $\E_{\epsilon \sim F}|\epsilon|$ is bounded.
\end{lemma}\label{lemma:broadcast_equilibrium}
\noindent Consequently, the sender's incentive to persuade is dictated by the misalignment $\Delta=\overline{M}_R-M_S$ between this representative receiver and the sender.

\subsection{Polarized Communication with Non-aligned Preferences}

We now show that communication becomes more \textit{polarized} the more misaligned the preferences of sender and receiver are. Before formally defining polarization, we examine the case where sender and receiver are perfectly aligned ($\Delta=0$).
\begin{corollary}
    When sender and receiver preferences are aligned ($\Delta=0$) the balanced-offset condition holds and the equilibrium targeting scheme minimizes both the sender's and receiver's information loss.
\end{corollary}
\noindent  This follows immediately from Theorems \ref{thm:bne} and \ref{thm:most_informative}.
In our example of a normal anecdote distribution $N(0,1)$ this implies that the sender chooses the mean scheme ($r^*=0$) where she sends an anecdote closest to her posterior mean.

In principle there can be many balanced-offset equilibria: we will focus on the minimal and maximal balanced-offset equilibria, $\underline{r}$ and $\overline{r}$:\footnote{We know that $\underline{r}$ (resp., $\overline{r}$) exists because $H(r)\rightarrow \infty$ as $r\rightarrow -\infty$ (resp., $H(r)\rightarrow -\infty$ as $r \rightarrow \infty$), {and at any point of discontinuity $r_0$ we have $\lim_{r \to r_0^-}H(r) \leq H(r_0) \leq \lim_{r \to r_0^+}H(r)$.  To see why this implies existence of $\underline{r}$, define $r_0 = \inf\{r : H(r) \geq 0\}$; then it suffices to show that $H(r_0) = 0$.  If $H$ is continuous at $r_0$ then this is immediate, but if $H$ is discontinuous at $r_0$ then $\lim_{r \to r_0^+}H(r) > \lim_{r \to r_0^-}H(r) \geq 0$ in violation of the definition of $r_0$.}} 
\begin{eqnarray}
    \underline{r} &=& \min \left\{r|H(r)=0\right\} \nonumber \\
    \overline{r} &=& \max \left\{r|H(r)=0\right\}
\end{eqnarray}
We then define the \textit{least polarized} communication equilibrium $r(\Delta)$:\footnote{{Likewise, the least polarized equilibrium exists because $H(r)\rightarrow \infty$ as $r\rightarrow -\infty$ and $H(r)\rightarrow -\infty$ as $r\rightarrow \infty$, and and at any point of discontinuity $r_0$ we have $\lim_{r \to r_0^-}H(r) \leq H(r_0) \leq \lim_{r \to r_0^+}H(r)$}.}
\begin{equation}
    r(\Delta) = \begin{cases}\begin{array}{ll}
    \max \left\{r|H(r)=\Delta, r<\underline{r} \right\} & \quad \mbox{for $\Delta>0$} \\
    \min \left\{r|H(r)=\Delta, r>\overline{r} \right\} & \quad \mbox{for $\Delta<0$} 
    \end{array}\end{cases}
    \end{equation}
For example, if the sender is more left-leaning than the receiver ($\Delta>0$) then $r(\Delta)$ is the equilibrium with offset $r(\Delta)$ that is closest to \textit{the left of} the minimal balanced-offset equilibrium $\underline{r}$. For sufficiently large $\Delta$ all possible equilibria must be to the left of $\underline{r}$. However, for small $\Delta$ it is possible that there are equilibria in the interval $[\underline{r},\overline{r}]$. For symmetric single-peaked distributions such as the normal distribution we know that $\underline{r}=\overline{r}=0$ and therefore the least polarized equilibrium for $\Delta>0$ are the ones with minimal absolute offset. 

Figure \ref{fig:normal_equilibrium_n10_anyDelta} shows the unique equilibrium offset for the normal distribution as a function of $\Delta$. The graph suggests that communication becomes increasingly polarized the more misaligned the preferences of sender and receiver are. The next result shows this observation holds generally.
\begin{proposition}\label{prop:r_increases_with _Delta} Assume some fixed anecdote distribution $F$ and a fixed number of anecdotes $n$. The least polarized equilibrium
$r(\Delta)$ decreases with $\Delta$. Therefore, the sender's communication becomes more left-polarized (right-polarized) the more her preferences lie to the left of (right of) the receiver's preferences.
\end{proposition}
\noindent Figure \ref{fig:least_polarized_intuition} illustrates the intuition behind the proof for $\Delta>0$. It shows the graph of $H(r)=\beta(r)-r$ and the minimal balanced-offset equilibrium at $\underline{r}$ for some non-single-peaked distribution. When the difference, $\Delta$, in personal preferences equals $\Delta_1$, there are three possible equilibria with the least polarized one at $r(\Delta_1)$. For $\Delta=\Delta_2$ there are two equilibria with the least polarized one at $r(\Delta_2)<r(\Delta_1)$. For any $\Delta>\Delta_2$ (such as $\Delta_3$) there is only a single equilibrium with $r(\Delta)<r(\Delta_2)$. Therefore, the function $r(\Delta)$ has a discontinuity at $\Delta_2$ but is decreasing for any $\Delta>0$.

\begin{figure}
\caption{Intuition for Proposition \ref{prop:r_increases_with _Delta} ($r(\Delta)$ decreases)}\label{fig:least_polarized_intuition}
  \centering
    \begin{tikzpicture}[scale = 0.85]

\draw[->] (2, 0) -- (-6.5,0) ;
\draw[->] (2,-1) -- (2, 4.5);

\draw[-stealth,ultra thick,purple]    (0,-0.5)   -- (0,-0.1) node[below=10] {$\underline{r}$};

\draw[domain=0:-6.5, smooth, variable=\x, purple] plot ({\x}, {(-\x^3+9.5*\x^2-24*\x)/6}) node[right=10] {$H(r)$};

\draw[dotted, color=black]    (2,2) -- (-6.5,2);
\draw[-stealth,ultra thick,black]    (2.5,2)   -- (2.2,2) node[right=10] {$\Delta_1$};
\draw[dotted, color=black]    (-0.68,0) -- (-0.68,2);
\draw[-stealth,ultra thick,black]    (-0.68,-0.5)   -- (-0.68,-0.2) node[below=10] {$r(\Delta_1)$};

\draw[dashed, color=black]    (2,3.044) -- (-6.5,3.044);
\draw[-stealth,ultra thick,black]    (2.5,3.044)   -- (2.2,3.044) node[right=10] {$\Delta_2$};
\draw[dashed, color=black]    (-1.743,0) -- (-1.743,3.044);
\draw[-stealth,ultra thick,black]    (-1.743,-0.5)   -- (-1.743,-0.2) node[below=10] {$r(\Delta_2)$};

\draw[dash dot, color=black]    (2,4) -- (-6.5,4);
\draw[-stealth,ultra thick,black]    (2.5,4)   -- (2.2,4) node[right=10] {$\Delta_3$};
\draw[dash dot, color=black]    (-6.29,0) -- (-6.29,4);
\draw[-stealth,ultra thick,black]    (-6.29,-0.5)   -- (-6.29,-0.2) node[below=10] {$r(\Delta_3)$};

\end{tikzpicture}
\end{figure}

As the misalignment between sender and receiver becomes extreme, the sender reports essentially only the minimum or maximum anecdote.
\begin{corollary}\label{corollary:minmax}
Assume some fixed anecdote distribution $F$ and a fixed number of anecdotes $n$. As $\Delta \rightarrow \infty$ ($\Delta \rightarrow -\infty$) the sender's communication scheme at any translation invariant equilibrium converges to the minimum (maximum) scheme.
\end{corollary}

\subsection{Informational Homophily}

We know from Theorem \ref{thm:most_informative} that all targeting schemes that minimize the sender's information loss (as well as the receiver's utility) have to satisfy the balanced-offset condition. Therefore, any equilibrium with $\Delta \neq 0$ has to have strictly lower utility for the sender and weakly lower utility for the receiver compared to the most informative communication scheme.\footnote{The utility is strictly lower for the sender because the information loss is weakly lower and the disagreement loss $\Delta^2$ is strictly negative.}

Intuitively, one might expect that more polarized communication in response to greater misalignment in sender and receiver preferences ($|\Delta| \uparrow$) decreases utility. We show that this holds under foresight.
\begin{proposition}\label{prop:utility_decreases_with _Delta} Assume some fixed anecdote distribution $F$, a fixed number of anecdotes $n$ and foresight. Both the sender and receiver utility of the least polarized equilibrium decreases with $|\Delta|$.
\end{proposition}
\noindent The proof establishes that the variance of the sent anecdote in the least polarized equilibrium increases with $|\Delta|$. The result then follows immediately from Propositions \ref{prop:sender_summary} and \ref{prop:receiver-loss}.

Proposition \ref{prop:utility_decreases_with _Delta} implies that anecdotal communication gives rise to \textit{informational homophily}. Both the sender and the receiver prefer to communicate with more like-minded partners because communication is less polarized and therefore more precise. Greater polarization does not ``fool'' a rational receiver who corrects for the larger bias in the sender's anecdote, but it comes at the cost of precision which hurts both the sender and receiver.

\section{Experts and the Curse of Informedness}\label{sec:experts}

In the previous section we characterized communication and welfare when varying the misalignment $\Delta$ between sender and receiver preferences but keeping the anecdote distribution and the number of anecdotes constant. We showed that greater misalignment induces greater polarization which in return reduces welfare. Therefore, both sender and receiver will tend to seek out like-minded communication partners (keeping everything else equal).

Another natural question is whether communication becomes more or less efficient when the sender is better informed (more of an ``expert''). Formally, we are fixing both the anecdote distribution and the preference misalignment $\Delta$ in this section and vary the number of anecdotes $n$. Somewhat surprisingly, increasing $n$ can make communication \textit{less} efficient. We first show that experts use more extreme anecdotes to communicate \textit{unless} the preferences are completely aligned ($\Delta=0$). Intuitively, the information loss for any fixed offset $r$ decreases with $n$ and therefore the sender has a greater incentive to choose an even more biased anecdote. However, the impact on welfare is ambiguous because extreme anecdotes can also become more informative about the state of the world as $n$ increases. We show that for an (appropriately defined) ``light-tailed'' distributions (such as the normal distribution) experts provide more precise anecdotes despite the polarization. On the other hand, for ``heavy-tailed'' distributions the precision of anecdotes sent by experts decreases with $n$. Intuitively, such experts have access to too many ``outlier anecdotes'' which makes it difficult for the rational receiver to learn about the state of the world. We call this phenomenon the \textit{curse of informedness}.

\subsection{Polarized Experts}

Throughout this section we assume that the sender has foresight. We denote the bias of a targeting scheme with offset $r$ and $n$ anecdotes with $\beta_n(r)$.

Figure \ref{fig:normal_equilibrium_n3_n10_Delta2} illustrates the construction of the (unique) targeting equilibrium when $\Delta=2$ and the anecdote distribution is normally distributed ($F \sim N(0,1)$) for $n=3$ and $n=10$. We observe in this example that increasing the number of anecdotes makes the sender's communication more polarized as the sender's target moves from $r_3(2)=-2.83$ to $r_{10}(2)=-3.56$. This observation holds generally. For the next result, we assume that the anedote distribution $F$ is continuous and strictly positive over its domain $[r_{\min},r_{\max}]$ (where we allow $r_{\min}=-\infty$ and/or $r_{\max}=\infty$).

\begin{figure}
\caption{Equilibrium offset $r_n(\Delta)$ for normal anecdote distribution}\label{fig:polarizing_experts}
  \centering
  \begin{subcaptiongroup}
    \centering
    \parbox[t]{.46\textwidth}{
    \centering
    \begin{tikzpicture}[scale = 0.85]
\begin{axis}[
    domain=-5:5,
    xmin=-5, xmax=5,
    axis lines=center,
    ytick distance=1,
    xlabel=$r$,
    ylabel=$\beta(r)$
    ]
    \addplot[smooth,color=black!30!green] table[x index=1,y index=2,col sep=comma] {PythonFigures/r_beta_normal_10_3.csv} node[above,pos=0.9] {$\beta_{10}(r)$};

    \addplot[smooth,color=blue] table[x index=1,y index=3,col sep=comma] {PythonFigures/r_beta_normal_10_3.csv} node[below,pos=0.8] {$\beta_{3}(r)$};
    
    \addplot[dashed,color=black] coordinates {(-5, -5) (4,4)} node[below,pos=0.9] {$r$};

    \addplot[dashed,color=purple] coordinates {(-5, -3) (3,5)} node[right,pos=0.9] {$r+\Delta$};
    
    \draw[dotted, color=black!30!green]    (-3.56,0)   -- (-3.56,-5);
    \draw[-stealth,ultra thick,black!30!green]    (-3.56,1)   -- (-3.56,0.2) node[above=15] {$-3.56\quad $};

    \draw[dotted, color=blue]    (-2.83,0)   -- (-2.83,-5);
    \draw[-stealth,ultra thick,blue]    (-2.83,1.5)   -- (-2.83,0.2) node[above=25] {$-2.83$};

\end{axis}
\end{tikzpicture}
    \caption{$r_{10}(2)<r_3(2)$}\label{fig:normal_equilibrium_n3_n10_Delta2}}
    \parbox[t]{.46\textwidth}{
    \centering
    \begin{tikzpicture}[scale = 0.85]
\begin{axis}[
    domain=-3:3,
    xmin=-3, xmax=3,
    axis lines=center,
    ytick distance=1,
    xlabel=$\Delta$,
    ylabel=$r^*$
    ]
    \addplot[smooth,color=black!30!green] table[x index=1,y index=3,col sep=comma] {PythonFigures/delta_on_r.csv} node[above=10,pos=0.2] {$r_{10}(\Delta)$};
    \addplot[smooth,color=blue] table[x index=1,y index=2,col sep=comma] {PythonFigures/delta_on_r.csv} node[below=5,pos=0.1] {$r_{3}(\Delta)$};

    \draw[dotted, color=black]    (0,-2)   -- (3,-2);
    \draw[dotted, color=black!30!green]    (0.51,0)   -- (0.51,-2);
    \draw[dotted, color=blue]    (1.14,0)   -- (1.14,-2);

    \draw[-stealth,ultra thick,black!30!green]    (0.51,1)   -- (0.51,0.2) node[above=15] {$0.51$};
    \draw[-stealth,ultra thick,blue]    (1.14,1.5)   -- (1.14,0.2) node[above=25] {$1.14$};

\end{axis}
\end{tikzpicture}
    \caption{$r_{10}(0.51)=r_3(1.14)=-2$}\label{fig:normal_equilibrium_n10_3_anyDelta}}    
  \end{subcaptiongroup}
\end{figure}

\begin{proposition} \label{prop:polarization_n} Assume that the sender has foresight and that the anecdote distribution $F$ is fixed, continuous and strictly positive over its domain $[r_{\min},r_{\max}]$ (where $r_{\min}$ can be $-\infty$ and/or $r_{\max}$ can be $\infty$). Suppose also that the personal preferences of receiver and sender differ by a fixed $\Delta \neq 0$.  Then the communication at equilibrium becomes more polarized as $n$ increases, in the following sense: there exist sequences $R_{\min}(n)$ and $R_{\max}(n)$ with $R_{\min}(n) \to r_{\min}$ and $R_{\max}(n) \to r_{\max}$ such that for any equilibrium offset $r^*_n$ (i.e., $\beta_n(r^*_n) - r^*_n = \Delta$) we have that either $r^*_n \le R_{\min}(n)$ or $r^*_n \ge R_{\max}(n)$ for all $n$.
\end{proposition}
\noindent All the proofs for this section are in Appendix \ref{app:experts}. The intuition for this proposition is shown in Figure \ref{fig:normal_equilibrium_n3_n10_Delta2}. The bias $\beta_n(r)$ of the targeting scheme gets closer to the diagonal as $n$ increases: the more anecdotes the sender has access to, the easier it is to find an anecdote close to the offset $r$. Therefore, the information loss for the sender from communicating in a more polarized fashion becomes smaller as $n$ increases and hence the sender's persuasion temptation will induce her to choose a more extreme targeting scheme. 

Figure \ref{fig:normal_equilibrium_n10_3_anyDelta} demonstrates that even a small misalignment in personal preferences can produce a high level of polarization. When the sender has access to $n=3$ anecdotes the differences in personal preferences have to be at least $\Delta\geq 1.14$ for the sender to target an anecdote two standard deviations away from the mean anecdote. A more informed sender with access to $n=10$ anecdotes will exhibit that degree of polarization already for  $\Delta \geq 0.51$.  Proposition \ref{prop:polarization_n} therefore implies that the sender's informedness and polarization are complements. 

\subsection{The Curse of Informedness} \label{subsec:informedness}

Does polarized communication by experts decrease the sender's utility (and by Proposition \ref{prop:receiver-loss} the receiver's utility)? In Section \ref{sec:equilibrium}, we observed greater polarization and lower utility when fixing the number of anecdotes $n$ and increasing the sender/receiver alignment ($|\Delta| \uparrow$)  (see Proposition \ref{prop:utility_decreases_with _Delta}).

It is not obvious that a similar result should hold when we fix the sender/receiver alignment $\Delta$ and let the sender become more informed ($n \uparrow$). While communication becomes more polarized (which decreases utility for fixed $n$) there are also more anecdotes to choose from, so even the more extreme anecdotes may be more informative to the receiver. The net impact on both the sender's and receiver's utility therefore seems ambiguous and depends on how informative extreme anecdotes are about the state of the world.

We next show that the net utility impact depends on the tails of the anecdote distribution. We will show that if $F$ is sufficiently ``heavy-tailed'' then polarization will make the expert's signal increasingly uninformative so that the receiver would prefer to talk to \textit{any less-informed sender} (even highly misaligned ones). We call this the ``curse of informedness''. On the other hand, if $F$ is ``light-tailed'' then the expert's anecdotes become increasingly informative as $n$ increases despite the polarization. 

\begin{definition}
We say that the anecdote density has strong heavy tails if its hazard rate satisfies
\begin{equation}\label{eqn:local_fatness}
    \frac{f(x)}{1-F(x)} = u(x) x^{\alpha-1}
\end{equation}
for a positive function $u(x)$ which is bounded from above by some constant $\overline{U}$ and satisfies $x^{\beta} \frac{u'(x)}{u(x)}\rightarrow 0$ for some $\beta>0$ and $0\leq\alpha<1$ and $|x|>\underline{x}$. The anecdote distribution has a strong light tails if 
\begin{equation}
    \frac{f(x)}{1-F(x)} = u(x) x^{\alpha-1}
\end{equation}
for a positive function $u(x)$ which is bounded from below by some constant $\underline{U}$ with $x^{\beta}\frac{u'(x)}{u(x)}\rightarrow 0$ for some $\beta>0$ and $\alpha>1$, $|x|>\underline{x}$. 
\end{definition}
\noindent Distributions with strong heavy (light) tails have tails that decline at a less than (more than) exponential rate everywhere. For example, Table \ref{tab:distribution_by_tail} defines the strong heavy-tailed Pareto distribution ($\alpha=0$), the  strong light-tailed Gaussian distribution ($\alpha=2$) and the Laplace distribution which is a knife-edge case with its exponential tails ($\alpha=1$).

\begin{table}
    \caption{Examples of strong heavy-tailed, strong light-tailed and knife-edge distributions}
    \label{tab:distribution_by_tail}
    \centering
    \bgroup
    \def\arraystretch{2}
    \begin{tabular}{|l|l|l|}
    \hline
     Distribution & probability density function $f(\epsilon)$ &  Type \\
     \hline
     Pareto & $f(\epsilon) = \frac{1}{(|\epsilon|+1)^3}$ & strong heavy tails ($\alpha=0$)\\
     \hline
     Gaussian & $f(\epsilon) = \frac{1}{2\sqrt{\pi}} \exp(-\frac{1}{2}\epsilon^2)$ & strong light tails ($\alpha=2$)\\
    \hline
     Laplace & $f(\epsilon) = \frac{1}{2}\exp(-|\epsilon|)$ & knife-edge ($\alpha=1$)\\
     \hline
    \end{tabular}
    \egroup
\end{table}

We can now state the main result.
\begin{theorem}\label{thm:heavy_light_tails}
Assume that the sender has foresight, the anecdote distribution $F$ is fixed, {continuous, and strictly positive over its domain,} and the personal preferences of receiver and sender differ by a fixed $\Delta \neq 0$. If $F$ is strong-heavy-tailed then the variance of the anecdotes {sent at any equilibrium} converges to $\infty$ as $n \to \infty$.  If instead the anecdote distribution is strong-light-tailed then the variance of the anecdote {sent at any equilibrium} converges to $0$ as $n \to \infty$.
\end{theorem}
\noindent Figure \ref{fig:tails_variance_n} plots the variance of the sent anecdote for $n\in [10,1000]$ for the Pareto, Gaussian and Laplace distributions from Table \ref{tab:distribution_by_tail}. Consistent with Theorem \ref{thm:heavy_light_tails}, the variance stays constant around $0.9$ for the knife-edge Laplace distributions, converges to $0$ for the light-tailed Gaussian and diverges for the heavy-tailed Pareto distribution.

\begin{figure}[ht]
\caption{Variance of sent anecdote for Pareto, Gaussian and Laplace distributions for $\Delta=1$}\label{fig:tails_variance_n}
  \centering
\begin{tikzpicture}[scale = 0.85]
\begin{axis}[
    xmode=log,
    domain=10:1000,
    xmin=10, xmax=1000,
    axis lines=left,
    ytick distance=1,
    xlabel=$n$,
    ylabel=$\var\left(\pi_S\right)$
    ]
    \addplot[smooth,color=black!30!green] table[x index=1,y index=3,col sep=comma] {PythonFigures/expert_pareto.csv} node[above=50,pos=0.6] {Pareto (heavy tails)};
    
    \addplot[smooth,color=red] table[x index=1,y index=3,col sep=comma] {PythonFigures/expert_gaussian.csv} node[above=0,pos=0.7] {Gaussian (light tails)};

    \addplot[smooth,color=blue] table[x index=1,y index=3,col sep=comma] {PythonFigures/expert_laplace.csv} node[above=0,pos=0.7] {Laplace (knife-edge)};

\end{axis}
\end{tikzpicture}
\end{figure}

An immediate consequence is that for heavy-tailed distributions greater expertise lowers both the sender's and receiver's utility unless both of them are perfectly aligned (this follows from Proposition \ref{prop:receiver-loss}). Therefore, a receiver might prefer to listen to more-aligned but less informed sender rather than a less-aligned expert because the expert finds it tempting to report easily accessible extreme anecdotes. With light-tailed distributions, such extreme anecdotes are still highly informative about the state of the world and therefore the receiver will generally prefer to listen to experts even if their preferences are not aligned with his own.

Importantly, receivers in our model do not distrust experts when anecdotes are heavy-tailed because they think experts tell them lies. Rather they do not trust experts with different preferences because they correctly anticipate that such experts will tell them unrepresentative data points which are too noisy to infer the state of the world.

The curse of informedness also affects the incentives of senders to \textit{acquire} information in the first place. When the anecdote distribution is heavy-tailed and senders expect to be matched with unaligned receivers they have an incentive to \textit{remain less informed}. Being too well informed makes them less credible sources of information for receivers (unless the receiver is perfectly aligned).

To connect these results to the real-world, we can look at the dramatic decline in trust of scientists between 2019 and 2024. The COVID-19 pandemic gave a lot of visibility to experts who advised the public on how to reduce infections, deal with school closings and how to manage the economy. Academic experts in the US tend to be more left-leaning than the rest of the population.\footnote{A study of voter registration of 7,243 faculty at 40 leading U.S. universities in the fields of Economics, History, Journalism/Communications, Law, and Psychology found that 3,623 were registered Democrats and 314 Republicans with an overall ratio of 11.5:1 while the ratio in the overall population is close to $1$ \citep{econjwatch2016}.} Therefore, the personal preference $M_S$ of the majority of experts will lie to the left of the average voter's personal preferences ($\Delta=\overline{M}_R-M_S>0$). The high degree of uncertainty during the pandemic might also serves as a good example of a heavy-tailed anecdote distribution. Hence, the ``curse of informedness'' (and Lemma \ref{lemma:broadcast_equilibrium}) implies that the variance in experts' messages was potentially large during the pandemic. Consistent with this story, 2023 Pew Research study documents that the share of US adults who ``have a great deal or fair amount of confidence in scientists to act in the public's best interests'' declined from 87\% to 73\% between April 2020 and October 2023 with strong level of trust falling from 39\% to 23\% \citep{2023Pew}. At the same time, distrust more than doubled from 12\% to 27\%. While the decline was more pronounced among Republicans, it decreased for both Democrats and Republicans.

\section{Communication with Commitment}\label{sec:commitment}

Up to now we have assumed that the sender cannot commit to a communication scheme. However, sometimes commitment is possible. 
For example, a sender might have a reputation for a particular type of reporting or  a reputable newspaper might commit to always selecting an unbiased set of facts for their articles (and might be punished by readers if they are later found out to have deviated from this communication scheme).  Other examples of commitment include the interpretation of our model as a behavioral game played between current and future self, where current self decides which anecdotes to save to memory so that future self makes the best possible decisions.

We formally defined the commitment equilibrium as a modification of Definition~\ref{def:bne-def} in Section~\ref{sub:modeling assumptions} (see Equation \ref{def:commitment}). As before, Theorems ~\ref{thm:sender.br} and ~\ref{thm:receiver.br} imply that if the sender uses a translation invariant communication scheme $\pi$, then the receiver's best response will be a translation and vice versa. Hence, we again focus on translation invariant commitment equilibria. The following theorem characterizes the set of such equilibria.
\begin{theorem}
\label{thm:stackelberg}
The sender's optimal translation invariant commitment $\pi_S$ is the most informative communication scheme that minimizes only the information loss. In particular, it is a targeting scheme that satisfies the balanced-offset condition $r^* = \beta(r^*)$ and the receiver's response is a translation with shift $M_R - r^*$.
\end{theorem}
The proof is provided in Appendix \ref{app:commitment}. Intuitively, commitment removes the sender's persuasion temptation: since she cannot deceive the receiver, she will ignore the unavoidable disagreement loss and simply minimize the information loss. She therefore will implement the socially optimal communication scheme subject to the constraint that she can only send a single anecdote. Hence, the sender solves the same problem as in Section \ref{sec:persuasion} and behaves as if her personal preferences are aligned with the receiver's. 

\subsection{Characterizing Commitment Equilibria}

Our characterization of most informative communication schemes in Section \ref{sub:most_informative} immediately provides a characterization of the translation invariant commitment equilibria when the sender has \textit{foresight} in the following cases:
\begin{enumerate}
    \item When $n$ is large, the sender's optimal communication scheme is a targeting scheme with a target close to the global maximum density, by Lemma~\ref{lem:most_informative_limit}.
    \item When the anecdote distribution $F$ is symmetric and single-peaked at $0$ then the sender's optimal targeting scheme is the mean scheme by Lemma~\ref{lem:symmetric}.
\end{enumerate}

Intuitively, one might expect that the mean scheme is also an equilibrium \textit{without} foresight. Even though the sender does not know the state of the world, her estimate of the posterior mean will concentrate around the true state (for large $n$) and we therefore would expect the sender to select an anecdote close to the posterior mean since the density of the anecdote distribution is also maximized around the true state. We show that this intuition is correct for a class of distribution which we call \textit{well-behaved}.
\begin{restatable}{definition}{wellbehaved}\label{def:well-behaved}
We say that the anecdote distribution $F$ is well-behaved if the following holds.
\begin{enumerate}
    \item The distribution is strictly single-peaked and symmetric with finite variance. 
    \item Let $g(x) = f'(x)/f(x)$. That is, $g(x) = \frac{d \log f(x)}{d x}$. We assume that $|g'(x)| \le c_1 $ for all $x$, and some constant $c_1>0$. That is, $|g(x)| \le c_1|x|+ c_2$. \footnote{Note that, for $x < 1$ we bound $|g(x)| \le c_2$ and otherwise we can bound  $|g(x)|\le c_1$.}
    \item $F$ has exponential tails. That is, there is a constant $Q > 0$, such that for $x>Q$, we have $1-F(x) \leq c_3\exp\left(-|x| \right)$ for a constant $c_3>0$, and $x < -Q$  we have $F(x) \leq c_3\exp\left(-|x| \right)$.
\end{enumerate}
\end{restatable}
\noindent For example, the normal distribution $F \sim N(0,1)$ and the Laplace distribution with density $f(\epsilon)=\frac{1}{2}\exp(-|\epsilon|)$ are well-behaved.  Recall that $f(\epsilon)$ is the density of the signal distribution at $\theta + \epsilon$, an offset of $\epsilon$ from the true state of the world.  

\begin{restatable}{theorem}{optimalCommitment}\label{thm:unbiased_optimality}
For any well-behaved anecdote distribution, in the no-foresight setting, the unbiased communication scheme that sends the closest signal to $\theta_S(\vec x, y)$ strictly dominates any biased signaling scheme for sufficiently large $n$ and is optimal among all unbiased communication schemes.
\end{restatable}
\iftoggle{onlineappendix}{
The proof is delegated to a separate online appendix. 
}
{
The proof is delegated to Appendix \ref{app:commitment_noforesight}.} Intuitively, the sender would like to send \textit{precisely} the posterior mean to the receiver. However, since she is constrained to sending a signal she has to contend with the second best which is to send the signal closest to the posterior mean.
When we interpret our model as a model of memory where the current self communicates with her future self by storing a single anecdote in memory we can think of the anecdote closest to the posterior mean as the ``most representative anecdote''.

\subsection{Biased Commitment Equilibria under No-Foresight}\label{subsec:commitment:discussion}

We have just shown that for single-peaked and symmetric anecdote distributions the mean communication scheme is optimal under foresight as well as under no-foresight when $n$ is large. 
However, unbiased communication ($r^*=0$) is not necessarily optimal for small $n$ under no-foresight even when the anecdote distribution is single-peaked and symmetric. 

The simplest example is the case where the sender has $n=2$ anecdotes.
\begin{proposition}\label{prop:2anecdotes}    
Suppose $n=2$, signals are drawn from a symmetric distribution around $\theta$, and the sender has no foresight. Then the optimal communication scheme is either the minimum scheme or the maximum scheme.
\end{proposition}
\noindent The proofs of this subsection appear in Appendix~\ref{app:commitment}. The optimal scheme still has to satisfy the balanced-offset condition of Theorem \ref{thm:stackelberg} such that $\beta(r^*)=r^*$. However, there are three distinct candidate targeting schemes:  the mean scheme ($r^*=0$), the minimum scheme ($r^*<0$), and the maximum scheme ($r^*>0$), where the offset (and equivalently the bias) is zero, negative, and positive, respectively. 

The mean scheme incurs a higher level of information loss compared to the minimum and maximum schemes due to a distinctive aspect of the two-signal setting under no-foresight: both anecdotes have the same absolute distance to the posterior mean $\theta_S(\vec{x},y)$ or empirical mean $\frac{{x_1 + x_2}}{2}$. Consequently, the information loss of a communication scheme primarily hinges on the uncertainty regarding whether the posterior mean lies to the left or right of the communicated signal. This uncertainty is minimized by having the mean always lying to the left (or the right), i.e., consistently sending the minimum (or the maximum, respectively).

The biasedness of the optimal communication scheme is not an artifact of the two-signal setting as demonstrated in the following example.
\begin{proposition}\label{prop:3uniform}
Suppose $n=3$, signals are drawn from a uniform distribution around $\theta$, and the sender has no foresight.  Then at every commitment equilibrium the sender uses a communication scheme with non-zero bias.
\end{proposition}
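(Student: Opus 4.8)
The plan is to invoke Theorem~\ref{thm:stackelberg}: restricting attention to translation-invariant schemes (as justified in the discussion preceding that theorem, and as implicitly required for ``bias'' to even be defined), a commitment equilibrium is exactly a scheme minimizing the variance of the signal sent, $V(\pi) := \E_{\theta,\vec{x}}[(\pi(\vec{x})-\theta-\beta(\pi))^2]$, which for a translation-invariant scheme is well defined and independent of $\theta$. So it suffices to exhibit a \emph{biased} translation-invariant scheme whose variance is strictly smaller than that of \emph{every} unbiased scheme. By scale invariance I may normalize the noise to $\epsilon_i\sim U[-1,1]$, and the biased scheme I will use is simply ``always send the minimum anecdote.''

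First I would pin down the best unbiased scheme. Since the sender has no foresight, $\theta_S(\vec{x})=\E[\theta\mid\vec{x}]$, and with the uniform likelihood and a diffuse prior the posterior on $\theta$ is uniform on $[x_{(3)}-1,\,x_{(1)}+1]$ (writing $x_{(1)}\le x_{(2)}\le x_{(3)}$ for the order statistics), so $\theta_S(\vec{x})$ is the midrange $\tfrac12(x_{(1)}+x_{(3)})$. A one-line order-statistics observation then shows that for $n=3$ the anecdote closest to the midrange is almost surely the median $x_{(2)}$: both $x_{(1)}$ and $x_{(3)}$ lie at distance $\tfrac12(x_{(3)}-x_{(1)})$ from the midrange, while $|x_{(2)}-\theta_S(\vec{x})|$ is at most that, with equality only on a null set. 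Hence the unbiased targeting scheme $\pi^{\mathrm{med}}$ that sends the anecdote closest to the posterior mean always sends $x_{(2)}$, and by symmetry of $F$ it has bias $\E[\epsilon_{(2)}]=0$. I also need that $\pi^{\mathrm{med}}$ is optimal among \emph{all} unbiased schemes; this is the argument from the overview of Theorem~\ref{thm:unbiased_optimality}: using that the posterior-mean residual $\theta_S-\theta$ is orthogonal to every data-measurable quantity (Claim~\ref{claim:diffuse-prior}), every unbiased $\pi$ has $V(\pi)=\E[(\pi(\vec{x})-\theta_S(\vec{x}))^2]+\E[(\theta_S(\vec{x})-\theta)^2]$, and the first term is pointwise minimized by sending the anecdote closest to $\theta_S$, i.e.\ by $\pi^{\mathrm{med}}$. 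Its value is $V(\pi^{\mathrm{med}})=\var(\epsilon_{(2)})$, the variance of the median of three $U[-1,1]$ variables, which (scaling a $\mathrm{Beta}(2,2)$ variable) equals $\tfrac15$.

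Then I would compare this with $\pi^{\min}$, the translation-invariant scheme that always sends $x_{(1)}$. Its bias is $\E[\epsilon_{(1)}]=-\tfrac12\neq 0$, and its variance is $V(\pi^{\min})=\var(\epsilon_{(1)})$, the variance of the minimum of three $U[-1,1]$ variables, which (scaling a $\mathrm{Beta}(1,3)$ variable) equals $\tfrac{3}{20}$. Since $\tfrac{3}{20}<\tfrac15$, the scheme $\pi^{\min}$ strictly beats every unbiased scheme, so no unbiased scheme minimizes $V$; by Theorem~\ref{thm:stackelberg} every translation-invariant commitment equilibrium therefore uses a scheme with non-zero bias. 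Intuitively, when $n$ is small the posterior mean is itself a noisy proxy for $\theta$ and the median is not especially informative, whereas the extremal order statistics of a uniform are \emph{less} dispersed around $\theta$ once the receiver debiases them, so a biased scheme transmits more information.

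The two Beta-variance computations are routine; the step I expect to require the most care is the claim that $\pi^{\mathrm{med}}$ is optimal among \emph{all} unbiased schemes (not merely that it is the natural one). This leans on two ingredients: the order-statistics identity that for $n=3$ the anecdote nearest the midrange is the median, and the orthogonality of the posterior-mean residual to data-measurable quantities, which makes the variance of an unbiased scheme split into a $\pi$-independent estimation-error term plus the ``distance-to-$\theta_S$'' term. One also has to keep straight that the objective of Theorem~\ref{thm:stackelberg} (variance of the signal sent) and the $\theta_S$-centered loss of Proposition~\ref{prop:sender_summary} differ only by the constant $\E[(\theta_S-\theta)^2]$, so minimizing either picks out the same equilibria.
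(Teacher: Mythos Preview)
Your proof is correct and takes a genuinely different, more elementary route than the paper's. Both arguments first pin down the median as the optimal unbiased scheme (via the decomposition $V(\pi)=\E[(\pi-\theta_S)^2]+\E[(\theta_S-\theta)^2]$ for unbiased $\pi$, together with the observation that for $n=3$ the anecdote nearest the midrange is always $x_{(2)}$), and then exhibit a biased scheme that strictly beats it. The difference lies in the choice of comparison scheme: the paper uses the targeting scheme with finite offset $r=1/5$, which sometimes returns $x_{(2)}$ and sometimes $x_{(3)}$, and computes its loss by an explicit (somewhat involved) CDF integration to get approximately $0.036$ on the $U[0,1]$ scale. You instead use the extreme scheme $\pi^{\min}$ that always returns $x_{(1)}$, whose variance is a one-line Beta$(1,3)$ computation yielding $3/80$ on the $U[0,1]$ scale (your $3/20$ after rescaling to $U[-1,1]$). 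Since $3/80<1/20$, this suffices equally well and avoids any integration.

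What each approach buys: the paper's choice stays within the family of finite-offset targeting schemes that appear elsewhere in the equilibrium analysis and yields a slightly smaller loss ($0.036$ versus $0.0375$), illustrating that the optimum is genuinely interior rather than at the extreme. Your choice is cleaner for the bare proposition and makes the underlying mechanism transparent: for the uniform, the extreme order statistics are more concentrated around $\theta$ (after debiasing) than the median, because $\var(\mathrm{Beta}(1,3))<\var(\mathrm{Beta}(2,2))$. One small remark on your write-up: the orthogonality you invoke, $\E[(\pi-\theta_S)(\theta_S-\theta)]=0$, is the tower-property fact $\E[\theta-\theta_S\mid\vec{x}]=0$ combined with translation invariance so that the fixed-$\theta$ expectation agrees with the joint one; Claim~\ref{claim:diffuse-prior} itself only states unbiasedness of $\theta_S$, so you may want to spell out that extra step.
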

\noindent The idea behind Proposition~\ref{prop:3uniform} is that, conditional on the value of the sender's posterior mean, the conditional density over signal realizations is not necessarily single-peaked. For uniform distributions and $n=3$, the correlation between the posterior mean and the minimum and maximum signals is stronger than the correlation between the posterior mean and the middle signal.  One can therefore communicate more information about the posterior mean through a biased communication scheme ($r^*\neq 0$) that sometimes returns the minimum signal (or, by symmetry, sometimes returns the maximum signal). 

Why does bias help in this example?  Recall that there is intrinsic error in the sender's posterior mean.  This variance is unavoidable.  But it can introduce correlation with particular anecdotes.  This correlation can be used to help minimize the variance between the posterior mean and the anecdote passed to the receiver.  This is why, for the uniform case, it is helpful to bias toward more extreme signals: even though they are not more informative than the moderate signals when it comes to the true state of the world, they are more informative with respect to the sender's posterior mean.  The interplay between these two sources of errors therefore introduces an incentive for the sender to systematically bias their communication.

However, for large $n$ the correlation between the posterior mean and the closest signal to the posterior mean dissipates and therefore the unbiased mean scheme re-emerges as the optimal communication scheme under no-foresight as shown in Theorem~\ref{thm:unbiased_optimality}.

\section{Conclusion}\label{sec:conclusion}

We introduce a model of strategic communication where the sender communicates raw data points (anecdotes) that are informative about an underlying state of the world. While this type of anecdotal communication is less efficient than communicating posterior beliefs about the state directly, it does not require that the sender and receiver have a shared representation of the state of the world. Therefore, our model can help us understand how newspapers convey news to their diverse readership by assembling articles from various quotes and factual statements, or how politicians rely on examples and anecdotes to connect with their voters. We have shown four main results. First of all, our model naturally gives rise to polarization as the sender balances the need to inform the receiver with the temptation to persuade him which makes her target biased anecdotes. Second, polarization increases, and welfare decreases with the misalignment between sender's and receiver's preferences. Third, polarization is amplified when experts communicate because they have access to more outlier anecdotes. Fourth, polarization will hurt both the sender and the receiver when the anecdote distribution is heavy-tailed giving rise to the curse of informedness. 

\paragraph{Stories as ``Carriers of Beliefs''.}
On a conceptual level, our model fits within a paradigm where stories or anecdotes provide the basis for both communicating as well as storing beliefs. While we mainly focus on the former interpretation in this paper, we also explained that our model can be viewed as the current self (sender) communicating with a future self (receiver). This can be expanded into a theory of \textit{optimal storage of stories} for a decision-maker who has limited memory and can only store information about the state of the world through representative data points rather than posterior beliefs. We believe that this model of storing ``beliefs'' is appropriate in many situations where we need to learn about the world without being an expert. For instance, when describing notable traits of a colleague, friend, or acquaintance -- like kindness, intelligence, or creativity -- we could illustrate them through specific anecdotes. These anecdotes might showcase instances when the individual demonstrated kindness, solved a challenging problem, or devised an innovative solution, rather than relying solely on formal metrics such as like altruism or IQ scores. 

\paragraph{Extensions.}
We hope that our model can serve as a starting point for future extensions. For example, our model assumes that the sender can only send a single anecdote. This makes our analysis particularly simple because we can focus on translation invariant equilibria. However, senders can and do communicate more than one anecdote. For example, a 500-word news article contains on average 10 to 20 sentences and 5 to 10 facts \citep{2024NewsAsLego}.  How can one adapt our analysis? We can gain some intuition by looking at the special case where sender and receiver are aligned ($\Delta=0$), the sender has foresight and the anecdote distribution is symmetric with a single peak at $0$. If we focus on some small enough interval of width $\delta$ around the state of the world where the anecdote density does not change much, we can think of $\delta n$ anecdotes that are approximately uniformly distributed over this interval. If the sender can only send one anecdote she will send the closest one to the state of the world which is an expected distance $O(1/n)$ away from the true state and the (quadratic) information loss is therefore $O(1/n^2)$. With two anecdotes, the sender could combine any of approximately $\delta n/2$ anecdotes to the left and right of the true state and choose among these $O(n^2)$ combinations the one whose average is closest to the true state. We conjecture that the expected distance of this best average is now only $O(1/n^2)$ away from the true state and the information loss is therefore only $O(1/n^4)$. This heuristic argument suggests that the information loss from anecdotal communication declines rapidly with the number of anecdotes that the sender can transmit. The analysis in Sections \ref{sec:equilibrium} and \ref{sec:experts} then suggests that polarization becomes even more extreme when sender and receiver are no longer aligned which might mitigate or even reverse any reductions in the information loss that are achievable in the optimal case.

\bibliographystyle{aer}
\bibliography{Sections/anecdotes}

\newpage

\begin{appendix}

\section{Diffuse Prior }\label{app:diffuse-prior}
In this section we discuss our assumption of a diffuse prior and its implications on the posterior of the agents. Throughout, we restrict attention to translation invariant sender and receiver strategies that we introduce in Section \ref{sec:translation_invariance}. We start by showing that sender and receiver beliefs \textit{after} sending an anecdote are well defined if the prior on the state of the world $\theta$ is diffuse. We then show that players' ex-ante payoffs \textit{before} a message is sent are also well-defined.  

\subsection{Posterior Distributions of the Sender and Receiver}

In statistics, it is common to use an \emph{improper prior} as uninformative priors. The simplest way to formalize a diffuse prior that reveals no information about $\theta$ is to consider the density to be a constant $\mu(\theta) = c$ for all $\theta\in \mathbb{R}$. It is important to note that while $\mu$ is not a proper probability distribution (since $\int_\theta \mu(\theta) d\theta = \infty$), it is still possible that the posterior formed can still be proper and well defined\footnote{Also, in a more Frequentist view, $\mu$ can be thought of as the likelihood function to capture the absence of data.}. 

\begin{claim}\label{claim:diffuse-prior}
Given a diffuse prior and $\vec x\sim \theta + F$, $y \sim \theta + G$, the posterior distribution of the sender conditioned on $\vec x, y$, is a proper distribution. Moreover, when $F$ and $G$ are symmetric, the posterior mean $\theta_S(\vec x,y)$ is an unbiased estimator of $\theta$.

Similarly, given a diffuse prior, a translation invariant communication scheme $\pi$, and a signal $x= \pi(\vec x,y)$. The posterior distribution of the receiver conditioned on $\pi, x$, is a proper distribution. Further, the posterior mean of the receiver is $\theta_R(\pi, x) = x - \beta(\pi)$.
\end{claim}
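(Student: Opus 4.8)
The plan is to treat the two halves of Claim~\ref{claim:diffuse-prior} in parallel, since both reduce to the same computation with an improper prior. First I would write the (improper) posterior density of $\theta$ given the observed data as proportional to the likelihood: for the sender, $p(\theta \mid \vec x, y) \propto \mu(\theta)\prod_i f(x_i - \theta)\, g(y-\theta) = c\prod_i f(x_i-\theta)\,g(y-\theta)$, using that the prior density is the constant $c$. The key first step is to check that $\int_\theta \prod_i f(x_i - \theta)\, g(y-\theta)\, d\theta < \infty$, so that the posterior is a proper distribution after normalization. For $F$ with finite variance (or even just a density bounded and integrable) this integral is finite because, e.g., $\prod_i f(x_i - \theta) \le f(x_j - \theta)$ for any single $j$ and $\int f(x_j - \theta)\,d\theta = 1$; one has to be slightly careful that $f$ is bounded so the product doesn't blow up, but single-peakedness or boundedness of $f$ handles this. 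For the receiver the argument is the same but with the likelihood replaced by the density of the event $\{\pi(\vec x, y) = x\}$ as a function of $\theta$, which by translation invariance of $\pi$ depends on $\theta$ only through $x - \theta$; I would denote this density $\ell(x-\theta)$ and note $\int \ell(x-\theta)\,d\theta = 1$ by construction, giving properness immediately.

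Next I would establish unbiasedness. The crucial observation is translation invariance of the whole setup under the improper prior: shifting $\theta \mapsto \theta + \delta$ and simultaneously shifting the data $\vec x \mapsto \vec x + \delta$, $y \mapsto y+\delta$ leaves the joint ``density'' (prior times likelihood) invariant, because the constant prior is shift-invariant and $f, g$ depend only on differences. Consequently the posterior of $\theta - \bar x$-type quantities is the same regardless of the location, and in particular the posterior distribution of $\theta - $ (any translation-equivariant statistic of the data) does not depend on the true $\theta$. Concretely, for the sender I would show $\theta_S(\vec x + \delta, y + \delta) = \theta_S(\vec x, y) + \delta$ (this is already asserted in the body and follows from the change of variables $\theta \to \theta + \delta$ in the posterior-mean integral), and then symmetry of $F$ and $G$ gives that the posterior of $\theta_S - \theta$ is symmetric about $0$: reflecting the noise variables $\epsilon_i \mapsto -\epsilon_i$, $\gamma \mapsto -\gamma$ maps the data-generating process to itself and maps $\theta_S(\vec x, y) - \theta$ to its negative. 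Hence $\E[\theta_S(\vec x, y) - \theta] = 0$, i.e. $\theta_S$ is unbiased. For the receiver, the posterior mean is $\theta_R(\pi, x) = \E[\theta \mid \pi, x]$; by translation invariance of $\pi$ and the diffuse prior this has the form $x - \beta'$ for a constant $\beta'$ independent of $\theta$, and taking expectations over the data for fixed $\theta$ gives $\E[x] - \beta' = \theta$, while by definition of bias $\E[x - \theta] = \beta(\pi)$, so $\beta' = \beta(\pi)$ and $\theta_R(\pi, x) = x - \beta(\pi)$, which is unbiased.

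I would carry the steps out in this order: (1) write down the improper posterior and verify its normalizing constant is finite (properness), separately for sender and receiver; (2) record the translation-equivariance identity for $\theta_S$ and for the receiver's posterior mean, as a consequence of shift-invariance of the constant prior; (3) use a reflection symmetry argument (needing symmetry of $F$, $G$) to conclude the sender's posterior mean is unbiased; (4) for the receiver, combine translation-equivariance with the definition of $\beta(\pi)$ to pin down the shift and conclude unbiasedness. I expect the main obstacle to be the properness step, since improper priors do not automatically yield proper posteriors — one must genuinely use an integrability property of $F$ (and, for the receiver, that $\pi$'s induced likelihood integrates to one). I would handle this by invoking boundedness of the density together with $\int f = 1$, pointing out that it is exactly here that a mild regularity assumption on $F$ is used, and referring forward to the more careful treatment promised in this appendix.
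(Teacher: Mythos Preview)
Your approach is correct and aligns with the paper's on the key steps: translation equivariance of $\theta_S$ plus a reflection-symmetry argument for the sender's unbiasedness, and translation invariance of $\pi$ to reduce the receiver's likelihood to a function of $x-\theta$, from which $\theta_R(\pi,x)=x-\beta(\pi)$ and unbiasedness follow. The one substantive difference is in establishing properness of the sender's posterior. You bound $\int \prod_i f(x_i-\theta)\,g(y-\theta)\,d\theta$ directly by invoking boundedness of $f$; the paper instead computes the posterior explicitly one observation at a time---e.g., showing $\mu(\theta\mid y)=g(y-\theta)$ for proper $G$, or $\mu(\theta\mid x_1,y)=f(x_1-\theta)$ when $G$ is diffuse, via a change of variables in the normalizing integral---and then uses this proper density as the prior for the remaining signals. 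The paper's route avoids any boundedness assumption on $f$ and cleanly separates the foresight, proper-$G$, and diffuse-$G$ cases; your route is shorter but, as you acknowledge, leans on a regularity hypothesis. One small correction: your inequality $\prod_i f(x_i-\theta)\le f(x_j-\theta)$ is false as written (it would require all other factors to be at most $1$); what you want is $\prod_i f(x_i-\theta)\le (\sup f)^{\,n-1} f(x_j-\theta)$, which is where boundedness actually enters.
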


We prove this claim below.

\paragraph{Sender's posterior distribution.} Recall that the anecdotes $x_1,\ldots,x_n$ are drawn independently from $\theta + F$. Thus the pdf of an anecdote given $\theta$ is $f(x - \theta)$.  Similarly, $y$ is drawn from $\theta + G$ and hence the pdf of $y$ given $\theta$ is $g(y-\theta)$.

We first observe that in the foresight case, when $G=0$, the sender's posterior is a point mass at $y$.

For any $G$ that is a proper distribution, we see that the posterior of $\theta$ given $y$ is proper.
\begin{align*}
    \mu(\theta|y) &= \frac{g(y-\theta)\mu(\theta)}{\int_{\hat\theta} g(y-\hat\theta) \mu(\hat\theta)d\hat\theta} \\
    &= \frac{g(y-\theta)}{\int_{\hat\theta} g(y-\hat\theta)d\hat\theta} \\
    &= \frac{g(y-\theta)}{\int_{\gamma} g(\gamma)d\gamma} = g(y-\theta)
\end{align*}

The first equality is just the definition of a posterior, and the second equality holds since $\mu$ is the diffuse prior with $\mu(\theta) = c$ for all $\theta$. The third equality does a change of variables to $\gamma = y - \hat\theta$. Finally, the last step follows because $G$ is a proper distribution. Hence $\mu(\theta| y)$ is a proper posterior distribution.

Recall that $\mu(\theta| \vec x, y) = \frac{\hat{f}(\vec x | y , \theta) \mu(\theta| y )}{ \int_{\hat\theta} \hat{f}(\vec x | y , \hat\theta) \mu(\hat\theta| y ) d\hat\theta}$, where $\hat f (\vec x | \theta ,y)$ is the conditional pdf of $\vec x$ given $\theta, y$. That is, we can use $\mu(\theta| y)$ as a prior. Since $\mu(\theta| y)$ is a proper distribution, the posterior $\mu(\theta| \vec x, y)$ is also proper.

For the non-foresight case, when $G$ is diffuse, we can use a similar argument as above to first compute the posterior given $x_1$ and $y$. We get,

\begin{align*}
    \mu(\theta| x_1, y) & =  \frac{f(x_1 -\theta) g(y-\theta)\mu(\theta)}{\int_{\hat\theta} f(x_1 - \hat\theta)g(y-\hat\theta) \mu(\hat\theta)d\hat\theta} \\
    &= \frac{f(x_1-\theta)}{\int_{\epsilon} f(\epsilon)d\epsilon}\\
    & = f(x_1 - \theta)
\end{align*}

This is again by noting that $\mu(\theta) = c$ , $g(y-\theta) = c'$, and doing a change of variable to $\epsilon = x_1 - \hat\theta$. Thus $\mu(\theta| x_1, y)$ is a proper posterior distribution because $f$ is proper distribution. Now using this as a prior, we get that $\mu(\theta| \vec x ,y)$ is a proper posterior.

\paragraph{Sender's posterior mean.} We observe that, given a diffuse prior and symmetric anecdote distributions $F,G$, $\mu(\theta| \vec x , y) = \mu(-\theta| - \vec x, -y)$. With this, it is easy to see that, for $\theta = 0$, $\E_{\vec x, y}[\theta_S(\vec x, y)|\theta = 0] = 0$. Moreover, we show below that $\theta_S(\vec x, y) = \theta_S(\vec x + t, y + t) - t$, and hence $\E_{\vec x, y}[\theta_S(\vec x, y)|\theta] = \theta$. Thus, the sender's posterior mean is an unbiased estimator of $\theta$.

We will now show that $\mu(\theta| \vec x, y ) = \mu(\theta + t| \vec x+t,y+t)$, and this would imply,
\[\theta_S(\vec x, y) = \int_\theta \theta\cdot \mu(\theta| \vec x , y) d\theta = \int_\theta \theta\cdot \mu(\theta + t| \vec x +t, y +t) d\theta = \theta_S(\vec x+t, y+t) - t. \]

We have $\mu(\theta| \vec x, y ) = \mu(\theta + t| \vec x+t,y+t)$ because,
\begin{align*}
    \mu(\theta|\vec x ,y) &= \frac{\prod_i f(x_i - \theta) \cdot g(y - \theta) \mu(\theta)}{\int_{\hat{\theta}}\prod_i f(x_i - \hat\theta) \cdot g(y - \hat\theta) \mu(\hat\theta)d\hat\theta} \\
    &=\frac{\prod_i f(x_i +t - \theta -t) \cdot g(y +t - \theta-t) \mu(\theta + t)}{\int_{\hat{\theta}}\prod_i f(x_i + t - \hat\theta - t) \cdot g(y - \hat\theta) \mu(\hat\theta)d\hat\theta} \\
    &= \mu(\theta+t|\vec x+t ,y+t)
\end{align*}

\paragraph{Receiver's posterior distribution.} We show that the receiver's posterior distribution given a translation invariant $\pi$ and $x=\pi(\vec x ,y)$ is a proper distribution. Let $h_{\pi}(x|\theta)$ be the pdf of the signal sent given $\pi$ and $\theta$. Observe that, by definition of translation invariant, $\pi(\vec x - \theta, y - \theta) = \pi (\vec x, y) - \theta$. Therefore, $h_{\pi}(x | \hat\theta) = h_{\pi}(x - \hat\theta|0)$. Note that, $h_{\pi}(\cdot|0)$ only depends on $\pi, F,$ and $G$.

\begin{align*}
    \mu(\theta| \pi,x) & = \frac{h_{\pi}(x|\theta)\mu(\theta)}{\int_{\hat\theta} h_{\pi}(x|\hat\theta)\mu(\hat\theta)d\hat\theta}\\
    &=\frac{h_{\pi}(x - \theta| 0)}{\int_{\hat\theta} h_{\pi}(x - \hat\theta|0)d\hat\theta}\\
    &= \frac{h_{\pi}(x - \theta|\theta = 0)}{\int_{\epsilon} h_{\pi}(\epsilon| 0)d\epsilon} = h_{\pi}(x - \theta| 0)
\end{align*}

\paragraph{Receiver's posterior mean.} Given a translation invariant $\pi$, for any $\theta$, recall that the bias $\beta(\pi) = \int_x (x - \theta)h_\pi(x|\theta) = \int_z z h_\pi(z|0)$. Hence we get that the posterior mean of the sender $\theta_R(\pi,x)$ is

\begin{align*}
  \int_{\hat\theta} \hat\theta \mu(\hat\theta|\pi,x)d\hat\theta &=  \int_{\hat\theta} \hat\theta h_{\pi}(x - \hat\theta| 0)d\hat\theta  \\
  &= \int_{z} (x - z) h_{\pi}(z| 0)dz \\
  &= x - \int_{z} z h_{\pi}(z| 0)dz \\
  &= x - \beta(\pi)
\end{align*}

Thus for any $\theta$, translation invariant $\pi$, we get $\E_{x}[\theta_R(\pi,x)|\theta]= \E_x[x - \beta(\pi)|\theta] = \theta$ (follows directly from the definition of $\beta(\pi)$). Therefore, the receiver's posterior mean $x - \beta(\pi)$ is an unbiased estimator of $\theta$.

\subsection{Ex Ante Payoffs}

When describing payoffs and equilibria in our communication game, it is also important to understand the players' beliefs about payoffs \textit{before} any messages are received.  Concerns about the well-definedness of ex ante payoffs in signaling games with diffuse priors were also noted recently by \citet{ambrus2021defining}.  They provide sufficient conditions under which such a game's ex ante payoffs are well-defined for a class of strategies that include translation invariant strategies.  In this section we describe their conditions and results and show that they are satisfied by our model under some modifications that are without loss of generality for translation invariant strategies.  This will justify our use of expectations over $\theta$ when analyzing payoffs and beliefs, despite $\theta$ being drawn from an improper prior.

We now describe the setting of~\citet{ambrus2021defining} in the context of our model.  There is a state of the world $\theta$ and $N$ players.  Each player receives a collection of signals, where each signal $s_{ij}$ observed by player $i$ is drawn independently as $\theta + \hat{s}_{ij}$ where $\hat{s}_{ij} \sim F_{ij}$ admits positive density.\footnote{The model of~\citet{ambrus2021defining} assumes at most one signal per player, but their analysis holds without change if one player can receive multiple independent signals.}  Each player $i$ then chooses an action $a_i$ from a set $A_i \subseteq \reals$ of allowable actions that can depend on the set of signals received.  Each player $i$ then receives a payoff $u_i$ that depends on $\theta$ and the profile $\vec a$ of actions chosen. It is assumed that payoffs are translation invariant, in the sense that $u_i(\theta+\delta, \vec a+\delta) = u_i(\theta, \vec a)$ for all $\theta \in \reals$, $\vec a \in \reals^n$, and $\delta \in \reals$.  A strategy $\mu_i$ is a (possibly randomized) mapping from received signals to an action.  We write $u_i(\theta, \vec \mu)$ for the expected utility obtained under strategy profile $\vec \mu$ where the expectation is over the realization of signals and any randomness in $\vec \mu$.

\begin{condition}[Compactness and Translation Invariance]
\label{cond:compact}
For each player, given any set of signals received, the set of potential actions is a compact set.  Moreover, whenever all signals are offset by a real number $\delta \neq 0$, then the set of potential actions is likewise offset by $\delta$.
\end{condition}

\begin{condition}[Bounded Interim Payoffs]
\label{cond:bounded}
For each player $i$ there is some finite constant $C_i$ such that, for any realization of $\theta$, the expectation (over signal realizations) of $\max_{\vec a}|u_i(\theta, \vec{a})$ is at most $C_i$, where the maximum is over action profiles that are allowable given the realized signals.
\end{condition}

\begin{condition}[Irreducibility]
\label{cond:irreducible}
For each player $i$ there is no pair of distinct strategies $\mu_i$, $\mu'_i$ such that the expected payoffs of $\mu_i$ and $\mu'_i$ are identical for all $\theta$ and all translation invariant strategies of the other players.
\end{condition}

Roughly speaking,~\citet{ambrus2021defining} show that if Conditions~\ref{cond:compact},~\ref{cond:bounded}, and~\ref{cond:irreducible} are satisfied, then ex ante expected payoffs under any profile of translation invariant strategies is well-defined and equal to the expected payoffs conditional on any fixed value of $\theta$.  To state their result more formally, we first require some definitions.

\begin{definition}[Diffusing Sequence]
Let $\lambda$ denote the Lebesgue measure.  We say a sequence $(P_m)_{m \in \mathbb{N}}$ of Borel probability measures on $\reals$ is \emph{diffusing} if for any set $W$ with $\lambda(W) \in (0,\infty)$ and any $\eta > 0$, there exists $M \in \mathbb{N}$ such that for all $m > M$,
\begin{itemize}
    \item $P_m(W) > 0$ and
    \item for all measurable $Y \subseteq W$, $\left|\frac{P_m(Y)}{P_m(W)} - \frac{\lambda(Y)}{\lambda(W)}\right| < \eta$.
\end{itemize}
\end{definition}

Roughly speaking, a diffusing sequence is an infinite sequence of probability distributions that, in the limit, tends toward an improper distribution that behaves as a diffuse prior.  For example, the sequence of distributions $(N(0,m))$ (i.e., $P_m$ is a mean-zero Gaussian with variance $m$) is a diffusing sequence.

\begin{definition}[Admissibility]
A class of strategy profiles is said to be \emph{admissible} if for any profile $\vec \mu = (\mu_1, \dotsc, \mu_N)$ of strategies in the class, there exists a vector $u^* \in \reals^N$ such that for any diffusing sequence $(P_m)$ and every player $i$, $\lim_{m \to \infty} \E_{\theta \sim P_m}[u_i(\theta, \vec \mu)] = u_i^*$.
\end{definition}

We are now ready to state the result of~\citet{ambrus2021defining}.

\begin{theorem}[\citep{ambrus2021defining}]
\label{thm:admissible}
Suppose the game satisfies Conditions~\ref{cond:compact},~\ref{cond:bounded}, and~\ref{cond:irreducible}.  Then the class of profiles of translation invariant strategies is admissible.\footnote{In fact, they show a more general result: a weaker condition than translation invariance, in which the strategies tend toward a translation invariant limit strategy in the limit as $\theta$ tends to $\pm \infty$, is necessary and sufficient for the ex ante payoffs to be well-defined and determined by payoffs in the limit as $\theta$ tends to $\pm \infty$.  We will only use the sufficient condition of full translation invariance.}
\end{theorem}

We now show how to apply Theorem~\ref{thm:admissible} to our signaling game.  As the theorem applies to translation invariant strategies, we will focus attention on translation invariant strategies in our game. Recall that the sender's strategy is a communication scheme $\pi$, and a translation invariant strategy of the receiver is a translation action rule which is characterized by an offset $\sigma \in \reals$.  We can therefore map our game into the setting of~\cite{ambrus2021defining} as follows: the sender receives signals $(\vec x, y)$ and selects from the set $\{x_1, \dotsc, x_n\}$.  The receiver is not given any signals and chooses an offset $\sigma$.  If the sender chooses $x$ and the receiver chooses offset $\sigma$, they both receive payoffs as defined in our original game when the receiver takes action $x + \sigma$.  Note that this formulation is a simultaneous-move game and that payoffs are indeed translation invariant with respect to $\theta$ and the received signals.

We would like to apply Theorem~\ref{thm:admissible} to this reformulation of our signaling game.  However, we note that Conditions~\ref{cond:compact} and~\ref{cond:bounded} do not directly apply in our model since the action space of the receiver in unbounded.  Indeed, the receiver in our game can choose any offset $\sigma \in \reals$ (an unbounded set).  Moreover, the resulting cost to the sender and receiver can be arbitrarily high for extreme choices of $\sigma$.  However, as we now show, it is without loss to restrict attention to a bounded space of receiver actions.

\begin{lemma}
\label{lem:bounded}
Given $F$, $G$, and $n$, there exists $C > 0$ such that any choice of offset $\sigma > C$ is strictly dominated by $\sigma = C$, and any choice of offset $\sigma < -C$ is strictly dominated by $\sigma = -C$.
\end{lemma}
\begin{proof}
Over the space of translation invariant communication schemes for the sender, the one that induces the (pointwise) maximal posterior over $\theta$ for any given message $x$ is the one that always returns the minimum anecdote $\min\{x_1, \dotsc, x_n\}$.  Note that $\E[\min\{x_1, \dotsc, x_n\}] \geq \E[\theta - \sum_i |x_i - \theta|] \geq \theta - nc$ where $c = \E_{\epsilon \sim F}[|\epsilon|]$ is a constant depending on $F$. In particular, this means that $E[D_{\pi,x}] \leq x + nc$ for \emph{every} choice of translation invariant $\pi$. Thus, for any realization of $x$, a receiver action of $a > x + nc + M_R$ is strictly dominated by action $a = x + nc + M_R$. A symmetric argument, for the maximum scheme, shows that a receiver action of $a < x - nc + M_R$ is strictly dominated by $a = x + nc + M_R$.  Taking $C = nc + |M_R|$ therefore satisfies the conditions of the lemma.
\end{proof}

Lemma~\ref{lem:bounded} shows that when we restrict the sender to translation invariant communication schemes, it is without loss of generality to restrict the receiver to choosing actions that lie in a compact set centered on the received signal.  As the sender is also mechanically restricted to choosing (a distribution over) one of the received anecdotes as an action, which is compact given the anecdote realization and invariant to translations, we conclude that Condition~\ref{cond:compact} is satisfied under this restriction. Likewise, under the restriction (without loss) that the receiver chooses an action from this bounded set, the sender and receiver's expected utilities are likewise bounded uniformly over the choice of $\theta$ (with the extremal value occurring when the sender chooses the maximum anecdote, and the receiver always takes the maximum allowable action $x + C$ given the sender's message).  Thus Condition~\ref{cond:bounded} is satisfied as well.  

Finally, we note that Condition~\ref{cond:irreducible} is satisfied as well.  Indeed, any two distinct strategies (i.e., communication schemes) of the sender, say $\pi$ and $\pi'$ with $\pi(\vec x) \neq \pi'(\vec x)$, would result in different outcomes (and hence payoffs, if we fix $\theta$) on anecdote realization $\vec x$ when the receiver chooses the identity action rule $\alpha(x) = x$ (i.e., offset $\sigma = 0$). On the other hand, any two distinct choices of offset $\sigma \neq \sigma'$ would result in different payoffs to the sender and receiver for any given choice $x$ of the sender.

We conclude that Conditions~\ref{cond:compact},~\ref{cond:bounded}, and~\ref{cond:irreducible} all hold after we eliminate strictly dominated strategies from the space of translation invariant strategies.  We can therefore conclude from Theorem~\ref{thm:admissible} that any profile of undominated translation invariant strategies $(\alpha,\pi)$ is \emph{admissible}. They induce expected payoffs equal to the limit of payoffs (for those strategies) under any diffusing sequence of priors over $\theta$.  The notion of expected payoffs of given strategies with respect to $\theta$ is therefore well-defined, equal to the expected payoffs under \emph{any} realization of $\theta$, and consistent with our posterior belief calculations.

\section{Proofs for Section \ref{sec:model}}\label{app:model}

\begin{proof}[Proof of Theorem~\ref{thm:sender.br}]
Recall that the sender wishes to maximize $u_S(a,\theta)=-(a-(\theta+M_S))^2$.  Since $a = \pi(\vec{x},y)+\sigma(\alpha)$ from the definition of shift, the sender's goal is to choose $\pi$ so that $\pi(\vec{x},y)$ maximizes 
\[ -\E_{\theta}[(\pi(\vec{x},y)+\sigma(\alpha)-(\theta+M_S))^2\ |\ (\vec{x},y)].\]  
This is the expectation of a quadratic loss. Using bias-variance decomposition and the fact that $\theta_S(\vec{x},y) = \E_{\theta}[\theta\ |\ (\vec{x},y)]$ is an unbiased estimator of $\theta$ and the variance of $\theta_S(\vec{x},y)$ is a constant that is independent of $\pi(\cdot)$, this goal is achieved by choosing $\pi(\vec{x},y)$ to maximize \[-(\pi(\vec{x},y)+\sigma(\alpha)-(\theta_S(\vec{x},y)+M_S))^2 
\]
for each $\vec{x}$ and $y$.  For any realization of $\vec{x}$ and $y$, this expression is maximized by setting $\pi(\vec{x},y)$ as close as possible to $\theta_S(\vec{x},y) + M_S - \sigma(\alpha)$.  Since the only constraint on $\pi(\vec{x},y)$ is that it be chosen from the profile of anecdotes $\vec{x}$, the result follows.
\end{proof}

\begin{proof}[Proof of Theorem~\ref{thm:receiver.br}]
Let $x=\pi(\vec{x})$ and let the belief distribution be $B(x) = D_{\pi,x}$. Given any translation invariant $\pi$, the posterior distribution $D_{\pi, x}$ is a proper distribution (see Claim~\ref{claim:diffuse-prior} in Appendix~\ref{app:diffuse-prior} for more details).  Since bias equals the expected difference between the communicated anecdote and $\theta$, we have that $x-\beta(\pi)$ is an unbiased estimate of $\theta$.
Since we work with diffuse priors, the receiver's posterior mean about the state of the world is simply equal to the value of the unbiased estimator (formalized in Claim~\ref{claim:diffuse-prior} in Appendix~\ref{app:diffuse-prior}).
The receiver wishes to maximize 
\begin{align*}
\E_{\theta\sim D_{\pi,x}}[u_R(\alpha(x),\theta)] 
&=\E_{\theta\sim D_{\pi,x}}[-(\alpha(x)-(\theta+M_R))^2]\\
&=-(\alpha(x)-(x-\beta(\pi)+M_R))^2 - \E_{\theta\sim D_{\pi, x}}[(\theta - x + \beta(\pi))^2]
\end{align*}
where the second equality is the bias-variance decomposition and follows because $x- \beta(\pi)$ is an unbiased estimator of $\theta$.
This maximum is achieved for $\alpha(x)=x-\beta(\pi)+M_R$, i.e., a translation of $M_R-\beta(\pi)$ as claimed.
\end{proof}

\begin{proof}[Proof of Proposition~\ref{prop:sender_summary}]
The proof follows by direct manipulations of the sender's expected utility function.
\begin{align*}
\E_{\theta, \vec{x}, y}[u_S(\alpha(\pi_S(\vec{x},y)), \theta)]
&=
\E_{\theta,\vec{x},y}[((\alpha(\pi_S(\vec{x},y))-(\theta+M_S))^2]\\
&=
\E_{\vec{x},y}[((\alpha(\pi_S(\vec{x},y))-(\theta_S(\vec{x},y)+M_S))^2]\\
&=\E_{\vec{x},y}[(\pi_S(\vec{x},y)+\sigma(\alpha)-(\theta_S(\vec{x},y)+M_S))^2]\\
&=\E_{\vec{x},y}[(\pi_S(\vec{x},y)+ \sigma(\alpha)-(\theta_S(\vec{x},y)+M_S)+\beta(\pi_S)-\beta(\pi_S))^2]\\
&=\E_{\vec{x},y}[(\pi_S(\vec{x},y)-(\theta_S(\vec{x},y)+\beta(\pi_S))+\sigma(\alpha)- (M_S - \beta(\pi_S))^2]
\end{align*}
Let $w=\pi_S(\vec{x},y)-(\theta_S(\vec{x},y)+\beta(\pi_S))$ and $z=\sigma(\alpha)- (M_S - \beta(\pi_S))$ so that the above expectation is $\E[(w+z)^2]=\E[w^2+z^2+2wz]$.  Note
\begin{align*}
\E[wz]
&=z\E_{\vec{x},y}[(\pi_S(\vec{x},y)-(\theta_S(\vec{x},y)+\beta(\pi_S)))]=z\E_{\theta,\vec{x},y}[(\pi_S(\vec{x},y)-(\theta+\beta(\pi_S)))]
\end{align*}
where the second inequality follows because $\theta_S(\vec{x},y)$ is a valid posterior mean for $\theta$.  But the right-hand side is zero by definition of bias.  Therefore $\E[wz] = 0$ and hence the claim follows.
\end{proof}

\begin{proof}[Proof of Proposition~\ref{prop:receiver-loss}]
From Theorem~\ref{thm:receiver.br} we see that $\alpha(x) = x + M_R - \beta(\pi)$, where the sender sends signal $x = \pi(\vec{x},y)$ and $\beta(\pi)$ is the bias of the communication scheme $\pi$. Thus, we have receiver's loss  (for any fixed $\theta$) equals

\begin{align*}
\E_{\vec{x},y}[(\pi(\vec{x},y) + M_R - \beta(\pi) - \theta-M_R)^2] & = \E_{\vec{x},y}[(\pi(\vec{x},y) - \theta  -\beta(\pi))^2] \\
    &= \E_{\vec{x},y}[(\pi(\vec{x},y) - \E_{\vec{x},y}[\pi(\vec{x},y)])^2]
\end{align*} 
since by definition of bias $\E_{\vec{x}\sim F_{\theta},y \sim G_{\theta}}[\pi(\vec{x},y)] = \beta(\pi) + \theta$.
\end{proof}

\section{Proofs for Section \ref{sec:persuasion}}\label{app:persuasion}

\begin{proof}[Proof of Proposition~\ref{prop:sender_summary}]
The proof follows by direct manipulations of the sender's expected utility function.

\begin{align*}
\E_{\theta, \vec{x}, y}[u_S(\alpha(\pi_S(\vec{x},y)), \theta)]
&=
\E_{\theta,\vec{x},y}[((\alpha(\pi_S(\vec{x},y))-(\theta+M_S))^2]\\
&=
\E_{\vec{x},y}[((\alpha(\pi_S(\vec{x},y))-(\theta_S(\vec{x},y)+M_S))^2]\\
&=\E_{\vec{x},y}[(\pi_S(\vec{x},y)+\sigma(\alpha)-(\theta_S(\vec{x},y)+M_S))^2]\\
&=\E_{\vec{x},y}[(\pi_S(\vec{x},y)+ \sigma(\alpha)-(\theta_S(\vec{x},y)+M_S)+\beta(\pi_S)-\beta(\pi_S))^2]\\
&=\E_{\vec{x},y}[(\pi_S(\vec{x},y)-(\theta_S(\vec{x},y)+\beta(\pi_S))+\sigma(\alpha)- (M_S - \beta(\pi_S))^2]
\end{align*}

\noindent Let $w=\pi_S(\vec{x},y)-(\theta_S(\vec{x},y)+\beta(\pi_S))$ and $z=\sigma(\alpha)- (M_S - \beta(\pi_S))$ so that the above expectation is $\E[(w+z)^2]=\E[w^2+z^2+2wz]$.  Note
\begin{align*}
\E[wz]
&=z\E_{\vec{x},y}[(\pi_S(\vec{x},y)-(\theta_S(\vec{x},y)+\beta(\pi_S)))]=z\E_{\theta,\vec{x},y}[(\pi_S(\vec{x},y)-(\theta+\beta(\pi_S)))]
\end{align*}
where the second inequality follows because $\theta_S(\vec{x},y)$ is a valid posterior mean for $\theta$.  But the right-hand side is zero by definition of bias.  Therefore $\E[wz] = 0$ and hence the claim follows.
\end{proof}

\begin{proof}[Proof of Theorem~\ref{thm:most_informative}]

We prove the Theorem through contradiction. Assume, that $\pi^*$ is a most informative communication scheme with bias $\beta^*$ and utility $U(\pi^*)$ which is not a targeting scheme satisfying $r=\beta^*$.

Consider the targeting scheme $\pi^T$ with offset $r=\beta^*$ and associated bias $\beta^T$. We know that $\beta_T \neq \beta^*$ - otherwise, we would have found a targeting with bias $\beta^*$ which minimizes the Euclidean distance to $\beta^*+\theta_S(\vec{x},y)$ and therefore $U(\pi^T)>U(\pi^*)$ because $\pi^* \neq \pi^T$. Therefore, $\pi^*$ would not be most informative which is a contradiction.

We now calculate the sender's utility when using targeting scheme $\pi^T$:
\begin{eqnarray}
        U(\pi^T)&=&-\E\left[\left(\pi^T(\vec{x},y)-\beta^T-\theta_S(\vec{x},y)\right)^2\right]\nonumber \\
        &=& -\E\left[\left(\pi^T(\vec{x},y)-\beta^*-\theta_S(\vec{x},y)+\beta^*-\beta^T\right)^2\right]\nonumber \\
        &=& -\E\left[\left(\pi^T(\vec{x},y)-\beta^*-\theta_S(\vec{x},y)\right)^2\right]-2(\beta^*-\beta^T)\underbrace{\E\left[\hat{\pi}(\vec{x},y)-\beta^*-\theta_S(\vec{x},y)\right]}_{\beta^T-\beta^*}- (\beta^*-\beta^T)^2\nonumber \\
        &=& -\E\underbrace{\left[\left(\pi^T(\vec{x},y)-\beta^*-\theta_S(\vec{x},y)\right)^2\right]}_{<\left(\pi^*(\vec{x},y)-\beta^*-\theta_S(\vec{x},y)\right)^2}+(\beta^*-\beta^T)^2 > U(\pi^*)
\end{eqnarray}
But this implies that $\pi^*$ is not a most informative communication scheme which is a contradiction.
\end{proof}

\begin{proof}[Proof of Lemma~\ref{lem:symmetric}]
First, it is easy to see that the balanced-offset condition is satisfied at $r^*=0$: the closest anecdote is equally likely to be on the left or right because of symmetry.

Now consider any $r<0$ and any realization $\vec{x}$ of $n$ anecdotes. We construct a coupled realization $\vec{\tilde{x}}$ as follows: for any $x_i<r$ we keep the anecdote and $\tilde{x_i}=x_i$. For any $r\leq x_i \leq -r$ we map the anecdote to to $\tilde{x_i}=r$ and for any $x_i>-r$ we map the anecdote to $\tilde{x_i}=x_i-2r$.

We can see that the coupled distribution is symmetric around $r$ and therefore $\tilde{\beta}(r)=r$. However, because of the coupling all the anecdotes to the right of $r$ are closer to the offset than for the coupled original realization. Therefore, $\tilde{\beta}(r)<\beta(r)$ and hence $r<\beta(r)$. Hence, the targeting scheme with offset $r$ cannot be most informative.
\end{proof}

\begin{proof}[Proof of Lemma~\ref{lem:most_informative_limit}]
We prove the claim by contradiction. If the lemma does not hold then there is some $\epsilon^*>0$ and a sequence $(n_k)$ and associated most informative offsets $(r^*_{n_k})$ that are more than $\epsilon^*$ away from any global maximum of the anecdote density.

First, we observe that there has to be some $\delta>0$ such that $f(r^*_{n_k})<f_{\max}-\delta$ for all $k$ where $f_{max}$ is the global maximum reached by the density function. Suppose not, there would be a subsequence $(n_{k'})$ of $(n_k)$ such that $f(r^*_{n_{k'}})\rightarrow f_{max}$. Moreover, there has to be a subsequence that lies in some bounded interval - otherwise because of uniform continuity the integral of the density function could not be $1$. But this implies by Bolzano-Weierstrass that there is convergent subsequence $(n_{k''})$ of $(n_{k'})$. By continuity, $f(\lim_{k''\rightarrow \infty} r^*_{n_{k''}})=f_{max}$ and therefore $r^*_{n_{k''}}$ eventually is $\epsilon^*$-close to a global maximum which is a contradiction.

We can therefore assume that there exists some $\delta>0$ such that $f(r^*_{n_k})<f_{\max}-\delta$ for all $k$. Because of uniform continuity we can find, for any $\delta'>0$ an $\epsilon'>0$ such that $|f(x)-f(r^*_{n_k})|<\delta'$ for any $x-r^*_{n_k}<\epsilon'$. 

The information loss, or equivalently the variance of the closest anecdote to the offset $r^*_{n_k}$, can then be expressed as:
\[
\var(\pi_{r^*_{n_k}}(\vec{x})) = \E[(\pi_{r^*_{n_k}}(\vec x) - (\theta + r^*_{n_k}))^2], 
\]
the expected squared distance of the closest anecdote to the point $\theta + r^*_{n_k}$, because most informative communication schemes satisfy the balanced-offset condition $\beta(r^*_{n_k}) = r^*_{n_k}$.
Hence, we have, 
\begin{eqnarray}
    \var(\pi_{r^*_{n_k}}(\vec{x}))&=& \int_0^{\infty} nz^2 \left[f(r^*_{n_k}+z)+f(r^*_{n_k}-z)\right] \left[1-F(r^*_{n_k}+z)+F(r^*_{n_k}-z)\right]^{n-1} dz \nonumber \\
    &\geq& \int_0^{\epsilon'} nz^2 2(f(r^*_{n_k})-\delta') \left[1-2(f(r^*_{n_k})+\delta')z\right]^{n-1} dz \nonumber \\
    &=& \frac{f(r^*_{n_k})-\delta'}{f(r^*_{n_k})+\delta'} \frac{1}{2n^2 (f(r^*_{n_k})\dmedit{+}\delta')^2} + o(1/n^2) \nonumber \\
    &>& \frac{f(r^*_{n_k})-\delta'}{f(r^*_{n_k})+\delta'} \frac{1}{2n^2 (f_{\max}-\delta\dmedit{+}\delta')^2} + o(1/n^2),
\end{eqnarray}
where the first inequality follows from the uniform continuity of $f$ and let $\epsilon'< 1/2(f(r^*_{n_k})+\delta')$, the next equality follows simply because the $o(1/n^2)$ terms includes subtractions that are exponentially small in $n$, and finally the last inequality follows by our assumption that $r^*_{n_k}$ is bounded away from the global maxima.

We can compare this loss to the information loss when choosing a targeting scheme with offset $\tilde{r}$ such that $f(\tilde{r})=f_{\max}$:
\begin{eqnarray}
    \var(\pi_{\tilde{r}}(\vec{x})) &=& \E[(\pi_{\tilde r}(\vec x) - (\theta + \tilde r))^2] - (\beta(\tilde r) - \tilde r)^2 \nonumber \\
    &\leq& \int_0^{\infty} nz^2 \left[f(\tilde{r}+z)+f(\tilde{r}-z)\right] \left[1-F(\tilde{r}+z)+F(\tilde{r}-z)\right]^{n-1} dz \nonumber \\
    &\leq& \int_0^{\epsilon'} nz^2 2(f(\tilde{r})+\delta') \left[1-2(f(\tilde{r})-\delta')z\right]^{n-1} dz \nonumber \\
    &+& \int_{\epsilon'}^{\infty} nz^2 \left[f(\tilde{r}+z)+f(\tilde{r}-z)\right] \left[1-F(\tilde{r}+z)+F(\tilde{r}-z)\right]^{n-1} dz \nonumber \\
    &\leq& \int_0^{\epsilon'} nz^2 2(f(\tilde{r})+\delta') \left[1-2(f(\tilde{r})-\delta')z\right]^{n-1} dz \nonumber \\
    &+& \int_{\epsilon'}^{\infty} nz^2 \left[f(\tilde{r}+z)+f(\tilde{r}-z)\right] \left[1-2\epsilon'(f(\tilde r) - \delta')\right]^{n-1} dz \nonumber \\
    &=& \frac{f_{\max}+\delta'}{f_{\max}-\delta'} \frac{1}{2n^2 (f_{\max} \dmedit{-} \delta')^2} + o(1/n^2)
\end{eqnarray}

For any $r^*_{n_k}$ we can choose $\delta'$ small enough such that $\var(\pi_{r^*_{n_k}}(\vec{x}))>\var(\pi_{\tilde{r}}(\vec{x}))$ for sufficiently large $n_k$. Therefore, $r^*_{n_k}$ is not a most informative offset for some $k$ as it is dominated by a targeting scheme with offset $\tilde{r}$ and lower information loss. This is a contradiction.

\end{proof}

\begin{proof}[Proof of Proposition~\ref{prop:receiver-loss}]
From Theorem~\ref{thm:receiver.br} we see that $\alpha(x) = x + M_R - \beta(\pi)$, where the sender sends signal $x = \pi(\vec{x},y)$ and $\beta(\pi)$ is the bias of the communication scheme $\pi$. Thus, we have receiver's loss  (for any fixed $\theta$) equals

\begin{align*}
\E_{\vec{x},y}[(\pi(\vec{x},y) + M_R - \beta(\pi) - \theta-M_R)^2] & = \E_{\vec{x},y}[(\pi(\vec{x},y) - \theta  -\beta(\pi))^2] \\
    &= \E_{\vec{x},y}[(\pi(\vec{x},y) - \E_{\vec{x},y}[\pi(\vec{x},y)])^2]
\end{align*} 
since by definition of bias $\E_{\vec{x}\sim F_{\theta},y \sim G_{\theta}}[\pi(\vec{x},y)] = \beta(\pi) + \theta$.

\end{proof}

\section{Proofs for Section \ref{sec:equilibrium}}\label{app:equilibrium}

\begin{proof}[Proof of Theorem~\ref{thm:bne}]
Suppose $(\pi,\alpha)$ is a translation invariant PBE.
Then $\alpha$ must be a translation, with shift $\sigma(\alpha)$. 
Then by Theorem~\ref{thm:sender.br} we know that $\pi$ is a targeting scheme with offset $r = M_S - \sigma(\alpha)$.
Moreover, by Theorem~\ref{thm:receiver.br} we must have that $\sigma(\alpha) = M_R - \beta(\pi)$.  Since 
by definition $\beta(r) = \beta(\pi)$, we conclude that 
\[ r = M_S - (M_R - \beta(r)) = \beta(r) + (M_S - M_R) \]
as required.

The other direction follows immediately from Theorems~\ref{thm:receiver.br} and~\ref{thm:sender.br}, because $\pi$ and $\alpha$ are best responses to each other. Note that, in Theorem~\ref{thm:receiver.br}, $\alpha(x)$ maximizes the receiver utility given the belief distribution $D_{\pi,x}$. Hence, we get that $(\pi,\alpha)$ with belief distribution $B(x) = D_{\pi,x}$ is a translation invariant PBE.
\end{proof}

\begin{proof}[Proof of Theorem~\ref{thm:bne.exist}]
Fix $n$ and $M_S-M_R$. 
Let's start with finding a condition that pins down the offset $r$ of a PBE from Theorem~\ref{thm:bne}. 
Given a communication scheme $\pi_r$ with offset $r$, let $z = \pi_r(\vec x,y) - (\theta_S(\vec x, y) + r)$ denote the distance between the target $\theta_S(\vec{x},y) + r$ and the closest anecdote (out of $n$ total anecdotes), where $z$ is positive if the closest anecdote is larger and negative if the closest anecdote is smaller.  Write $h(z; r)$ for the density of $z$ given $r$, over all randomness in $(\theta, \vec{x}, y)$.

We can now calculate the expected bias $\beta(r)$ of a targeting communication scheme with offset $r$:
\begin{equation}\label{eq:offset.bias}
    \beta(r) = r + \int_{-\infty}^{\infty} z h(z; r) dz
\end{equation}

Write $H(r) = \int_{-\infty}^{\infty} z h(z; r) dz$.  Theorem~\ref{thm:bne} now implies that to show that a PBE exists, it suffices to show that there exists a value of $r$ such that
\begin{equation}
\label{eq:bne.condition}
    H(r) = -(M_S - M_R).
\end{equation}

We will show that \eqref{eq:bne.condition} holds, i.e., $H(r)$ is an onto function in its codomain $(-\infty, +\infty)$, in two steps. In the first step, we establish that $H(r)$ tends to $+\infty$ and $-\infty$, respectively as $r\rightarrow -\infty$ and $r\rightarrow +\infty$. Had $H(r)$ been a continuous function, this step would suffice to prove  \eqref{eq:bne.condition}. The second step handles discontinuous $H(r)$ by establishing that at any point that $H(r)$ is discontinuous, the left limit of $H(r)$ is smaller than the right limit. Together these steps show that $H(r)$ can take any value.

For the first step, we will show that $H(r) \to \infty$ as $r \to -\infty$ and $H(r) \to -\infty$ as $r \to \infty$. 
We know $\E_{\epsilon \sim F}[|\epsilon|]$ is bounded by assumption; say $\E[|x-\theta|] < c_0$.  Then 
\begin{align*}
    \E_{\theta,\vec{x},y}[\max_i x_i - \theta_S(\vec{x},y)]
    &= \E_{\theta,\vec{x},y}[\max_i x_i - \theta + \theta - \theta_S(\vec{x},y)] \\
    & = \E_{\theta,\vec{x},y}[\max_i x_i - \theta] + E_{\theta,\vec{x},y}[\theta - \theta_S(\vec{x},y)]\\
    & = \E_{\theta,\vec{x}}[\max_i x_i - \theta] + 0\\
    & \leq \sum_i E_{\theta,x_i}[|x_i - \theta|]\\
    & \leq n c_0
\end{align*}
where the second equality is linearity of expectation and the third equality follows because $\theta_S(\vec{x},y)$ is a valid posterior mean.

Now choose any $Z > 0$ and suppose $r \geq n c_0 + Z$.  Then \begin{align*}
\int_{-\infty}^{\infty} z h(z; r)dz 
&= \E_{\theta,\vec{x},y}[ (\argmin_{x_i \in \vec{x}}{|x_i - r {-\theta_S(\vec{x},y)}|}) - r {-\theta_S(\vec{x},y)}]\\
&\leq \E_{\theta,\vec{x},y}[ \max_i x_i - r {-\theta_S(\vec{x},y)}]\\
&\leq -Z.
\end{align*}
So for any $Z > 0$, we have that $H(r) \leq -Z$ for all sufficiently large $r$, and hence $H(r) \to -\infty$ as $r \to \infty$.  A symmetric argument\footnote{{By taking $r = -nc_0 - Z$ and observing $\E[\min_i - \theta ] \ge -nc_0$.}} shows that $H(r) \to \infty$ as $r \to -\infty$.

In the second step we show that, roughly speaking, if $H(r)$ is discontinuous at some $r$ then the one-sided limits still exist, and the limit from above will be strictly greater than the limit from below.  To see why, suppose $H$ is not continuous at $r_0$.  For each possible realization of $(\theta,\vec{x},y)$, either $z$ is continuous at $r_0$ or it is not.  If not, this means that $\theta_S(\vec{x},y) + r_0$ is precisely halfway between two anecdotes in $\vec{x}$, say with absolute distance $d > 0$ to each, in which case the limit of $z$ from below is $-d$ (distance to the anecdote to the left) and the limit of $z$ from above is $d$ (distance to the anecdote to the right).  Integrating over all realizations, we conclude that the one-sided limits of $H$ exist, $\lim_{r \to r_0^-}H(r) < \lim_{r \to r_0^+}H(r)$, and {moreover $\lim_{r \to r_0^-}H(r) \leq H(r_0) \leq \lim_{r \to r_0^+}H(r)$.}

Now we are ready to prove \eqref{eq:bne.condition}.
Since $H(r) \to -\infty$ as $r \to \infty$, there must exist some finite $r_1$ such that $H(r_1) < -(M_S - M_R)$. Choose $r_2 \leq r_1$ to be the infimum over all $r'$ such that $H(r) \leq -(M_S - M_R)$ for all $r \in (r', r_1]$.  That is, $(r_2, r_1]$ is a maximal (on the left) interval on which $H(r) \leq -(M_S - M_R)$.  Note that $r_2$ must be finite, since $H(r) \to \infty$ as $r \to -\infty$.  

Suppose for contradiction that $H(r_2) \neq -(M_S - M_R)$.  It must then be that $H$ is discontinuous at $r_2$, as otherwise there is an open ball around $r_2$ on which $H$ is either less than or greater than $-(M_S - M_R)$, but either way this contradicts the definition of $r_2$.  

From the definition of $r_2$ we have that $\lim_{r \to r_2^+}H(r) \leq -(M_S-M_R)$.  So since $H$ is discontinuous at $r_2$, we know (from our analysis of the directionality of discontinuities of $H$) that $\lim_{r \to r_2^-}H(r) < -(M_S-M_R)$.  But this then means that there exists some $\epsilon > 0$ such that $H(r) < -(M_S-M_R)$ for all $r \in (r_2-\epsilon, r_2)$, contradicting our choice of $r_2$.

We conclude that $H(r_2) = -(M_S - M_R)$, so $r_2$ is the desired value of $r$ proving \eqref{eq:bne.condition}.  
\end{proof}

\begin{proof}[Proof Proposition~\ref{prop:r_increases_with _Delta}]
Suppose $\Delta = M_R - M_S > 0$ (the other case is analogous).  Recall that $r(\Delta) = \max_r \{r: H(r) = \Delta\} < 0$ is the maximum offset of any equilibrium targeting scheme.
Assume towards contradiction that there exists $\Delta' > \Delta$ such that $r(\Delta') > r(\Delta)$. Recall that, by definition, $r(\Delta') < \underline{r}$. We can therefore let $r^* = \inf_{r \in [r'(\Delta),\underline{r}]} \{ r: H(r) \leq \Delta \}$.  This infimum is well-defined since $H(\underline{r}) = 0 < \Delta$.  

We claim that $H(r^*) = \Delta$.  This is because, from the proof of Theorem~\ref{thm:bne.exist}, we know that one-sided limits of $H$ exist at $r^*$ and $\lim_{r'\to {r^*}^- }H(r)\le H(r^*) \le \lim_{r'\to {r^*}^+ }H(r)$.  But from the definition of $r^*$ we have that (a) $\Delta \leq \lim_{r'\to {r^*}^+ }H(r)$, and (b) $H(r') > \Delta$ for all $r' \in [r(\Delta'), r^*)$. So the only way to satisfy $\lim_{r'\to {r^*}^- }H(r)\le \lim_{r'\to {r^*}^+ }H(r)$ is to have $\lim_{r'\to {r^*}^- }H(r) = \lim_{r'\to {r^*}^+ }H(r) = \Delta$ and hence $H(r^*) = \Delta$.

We have now reached the desired contradiction, since $H(r^*) = \Delta$ and $r^* > r'(\Delta) > r(\Delta)$, contradicting the definition of $r(\Delta)$.
%
\end{proof}

\begin{proof}[Proof of Corollary~\ref{corollary:minmax}] 
Consider $\Delta=M_R-M_S>0$ (the other case is analogous). We then have $r<\beta(r)<\underline{r}$ and $\beta(r)-r=\Delta$. This implies $r<\underline{r}-\Delta$. Hence, $r \rightarrow -\infty$ as $\Delta \rightarrow \infty$.
\end{proof}

\begin{proof}[Proof of Proposition~\ref{prop:utility_decreases_with _Delta}]
Suppose $\Delta<0$ such that $r(\Delta)\geq \overline{r}$. Let $\pi_r(\vec{x}) = \argmin_{x_i \in \vec{x}} |x_i - (\theta + r)|$ be the closest anecdote to the offset $\theta+r$ and let $z = \pi_r(\vec x) - (\theta + r)$ denote the corresponding distance. The bias $\beta(r) \in (0,r)$ then equals $\beta(r) = \E[\pi_r(\vec{x})] - \theta$.  The variance of $\pi_r(\vec{x})$ is given by
\begin{align*}
    \var[\pi_r(\vec{x})] &= \E[\left(\pi_r(\vec{x}) - (\theta + \beta(r))\right)^2]  \\
    &= \E[\left(z - (\beta(r) - r)\right)^2] \\
    &= \E[z^2] - (\beta(r) - r)^2 \qquad (\text{Since } \E[z] = (\beta(r) -r))\\
    &= n\int_{0}^{\infty} z^2 (f(z + r) + f(r - z) )(1 - F(r+ z) + F(r - z))^{n-1} dz - (\beta(r) -r)^2 
\end{align*}

Let $P(r,z) = 1 - F(r + z) + F(r - z)$ be the probability that an anecdote does not lie in the interval $(\theta + r-z,\theta + r+z)$. Hence we can re-write the above equality as,
\begin{align*}
    \var[\pi_r(\vec{x})] & = -\int_0^{\infty} z^2 \left(\frac{\partial P(r,z)^n}{ \partial z}\right) dz - (\beta(r) -r)^2  \\
    & = -z^2H(r,z)^n\vert_{z=0}^\infty + \int_0^{\infty} 2zP(r,z)^n dz - (\beta(r) -r)^2 \\
    & = \int_0^{\infty} 2zP(r,z)^n dz - (\beta(r) -r)^2 \\
\end{align*}
The last equality follows because $z^2P(r,z)^n=0$ at $z=0$ and  we have $\lim_{z\to \infty} z^2\cdot P(r,z) =0$ for $n\ge 2$. To see why, note that, {for any $z > 2|r|$, observe that if $|x-\theta|\leq z/2$ then $|x - ( \theta+ r)| \leq z$}. By Chebyshev's inequality we have
\[ P(r,z) \leq \Pr_{x \sim F}[|x-\theta| \geq z/2] \leq 4\sigma^2/z^2, \] where $\theta$ is the mean of the anecdote distribution and $\sigma^2$ is its (finite) variance.  Hence for $n \geq 2$ we have $\lim_{z \to \infty} z^2 P(r,z)^n = 0$.  We therefore conclude $z^2 P(r,z)^n |^{\infty}_{z=0} = 0$.

This implies that,
\begin{align*}
    \frac{\partial \var[\pi_r(\vec{x})] }{\partial r} &= \int_0^\infty2nz\frac{\partial P(r,z)}{\partial r} P(r,z)^{n-1} dz - 2(\beta(r) -r)(\beta'(r) - 1) \\
    & = -2\E[z] - 2(\beta(r) -r)(\beta'(r) - 1) \\
    &= 2(r - \beta(r) ) + 2(r - \beta(r))(\beta'(r) - 1) \\
    & \ge 0 \qquad (\text{Since } \beta'(r) \ge 0)
\end{align*}
where the equalities follows because the expected value of $z$ is $\beta(r) - r$ and the last inequality follows because the bias increases with $r$. Hence, for $r \geq \overline{r}$, the variance increases with $r$. A similar argument follows for $r<\underline{r}$.

\end{proof}

\section{Proofs for Section \ref{sec:experts}}\label{app:experts}

\begin{proof}[Proof of Proposition~\ref{prop:polarization_n}]

We prove this result by contradiction. Assume that the result does not hold and hence there is some $\Delta\neq 0$, some distribution $F$, and constants $R_L>r_{\min}$ and $R_M<r_{\max}$ and a subsequence $(n_k)$ and associated equilibrium offsets $r^*(n_k)$ such that $R_L\leq r^*(n_k) \leq R_M$. 

Because of continuity there is some $f_{\min},f_{\max}>0$  such that $f_{\min}\leq f(x) \leq f_{\max}$ for any $x\in [R_L,R_M]$. We can then write for any $r\in [R_L,R_M]$:
\begin{eqnarray}
\beta_n(r)-r  &=& \int_0^{\infty} nz \left[f(r+z)-f(r-z)\right] \left[1-F(r+z)+F(r-z)\right]^{n-1} dz 
\end{eqnarray}
Fix some $0<\epsilon'<\frac{1}{2} f_{\min}$ and we get: 
\begin{eqnarray}
 -\int_0^{\epsilon'} nz f_{\max} \left[1-2zf_{\min}\right]^{n-1} dz +o(1/n^2) &\leq& \beta_n(r)-r \nonumber \\
 &\leq& \int_0^{\epsilon'} nz f_{\max} \left[1-2zf_{\min}\right]^{n-1} dz +o(1/n^2)
 \end{eqnarray}
 This allows us to bound $\beta(r)-r$ as follows:
 \begin{equation}
     |\beta_n(r)-r| \leq \frac{f_{\max}}{2f_{\min}} \frac{1}{n}+o(1/n)  
 \end{equation}
Therefore, for sufficiently large $n_k$ we have $\beta_{n_k}(r^*_{n_k})-r^*_{n_k} \neq \Delta$ and therefore $r^*_{n_k}$ cannot be an equilibrium offset.
\end{proof}

\begin{proof}[Proof of Theorem~\ref{thm:heavy_light_tails} (heavy tails)]
The sent anecdote is $y$ away from offset $r>0$ and has density $h(y)$: 
\begin{equation}
    h(y) = nf(r+y)\left[1-\left(F(r+|y|)-F(r-|y|)\right)\right]^{n-1}
\end{equation}
We prove the claim for heavy tails through contradiction: assume that $\var_n(y)\not\rightarrow \infty$. This implies that there exists some $M>0$ and a subsequence $(n_k)$ with associated offsets $r_{n_k}$ such that $\var_{n_k}(y)<M$. 

We next introduce two symmetric and nested intervals around the offset as well as a formula for the probability that $y$ falls into such an interval as well as conditional expectation of $y$ in this case.

\textbf{Growing interval $I_k$.} We define $I_k=[r_{n_k}-b_{k},r_{n_k}+b_{k}]$ where $b_k=r_{n_k}^{\tau}$ for some $0<\tau<1$ (where $\tau$ will be specified later). This is a symmetric interval around the offset $r_{n_k}$ which gradually increases in width with $r_{n_k}$. Importantly, it doesn't increase linearly with $r_{n_k}$ but only at rate $r_{n_k}^{\tau}$ - this ensures that the width of the interval becomes small relative to the offset.

\textbf{Nested interval $\hat{I}_k$.}  We next use Chebyshev's inequality which tells us that for any fixed $\epsilon>0$:
\begin{equation}
    Prob\left(|y+\Delta| \geq \sqrt{\frac{M}{\epsilon}}\right) \leq \epsilon
\end{equation}
This implies that we can ensure that with probability at least $\epsilon$ the closest anecdote is in the interval $\hat{I}_{n_k}=\left[r_{n_k}-\sqrt{\frac{M}{\epsilon}}-|\Delta|,r_{n_k}+\sqrt{\frac{M}{\epsilon}}+|\Delta|\right]$ for all $n_k$. Note that for sufficiently large $k$ we have $\hat{I}_k \subset I_k$.

\textbf{Formula for $P(y \in I)$.} Consider any symmetric interval $I=[r_{n_k}-b,r_{n_k}+b]$ around the offset (such as $I_k$ or $\hat{I}_k$). 

Then the probability $P(y\in I)$ can be expressed as:
\begin{eqnarray}\label{eqn:cdf}
P(y \in I) &=& \int_{-b}^0 n f(r+y) \left[1-\left(F(r+|y|)-F(r-|y|)\right)\right]^{n-1} dy \nonumber \\ 
&+& \int_{0}^{b} n f(r+y) \left[1-\left(F(r+|y|)-F(r-|y|)\right)\right]^{n-1} dy \nonumber \\ 
&=& \int_{0}^{b} n\left[f(r+y)+f(r-y)\right] \left[1-\left(F(r+y)-F(r-y)\right)\right]^{n-1} dy \nonumber \\
&=& 1-\left[1-\left(F(r+b)-F(r-b)\right)\right]^n
\end{eqnarray}

\textbf{Formula for $E(y|y \in I)$.} We can use an analogous derivation to show:
\begin{eqnarray}\label{eqn:expectation}
E(y|y \in I) &=& \int_{0}^{b} y n\left[f(r+y)-f(r-y)\right] \left[1-\left(F(r+y)-F(r-y)\right)\right]^{n-1} dy
\end{eqnarray}

The rest of the proof for heavy tails proceeds in three steps:
\begin{itemize}
    \item[\textbf{Step 1: }] We show that the density $f$ of the anecdote distribution is essentially constant over the interval $I_{k}$. This allows us to simplify equation \ref{eqn:cdf} for any $I=[r_{n_k}-b,r_{n_k}+b] \subset I_k$ as follows: 
    \begin{equation}
       P(y \in I) \approx 1-\left[1-2f(r)b\right]^n
    \end{equation}

   \item[\textbf{Step 2: }] We then show that $P(y\in \hat{I}_k)\geq 1-\epsilon$ implies $P(y \notin I_k)<\epsilon^{D r_{n_k}^{\tau}}$ for some constant $D$.
   
    \item[\textbf{Step 3: }] We can now finally show that $E(y)\rightarrow 0$ which is a contradiction since $E(y)=\Delta$. Hence our initial assumption that $Var_n(y) \not\rightarrow \infty$ was false.
\end{itemize}

\textbf{Step 1.} The density $h(y)$ of the sent anecdote includes the term 
\begin{equation}
    1-\left(F(r+y)-F(r-y)\right)
\end{equation}
for $y>0$. Using the mean-value theorem we get
\begin{equation}\label{eqn:decomp}
    1-\left(F(r+y)-F(r-y\right) = 1-y\left(f(r+\zeta)+f(z-\zeta)\right)
\end{equation}
for some $\zeta \in [0,y]$. Since $f$ has strong heavy tails we know that (for some constant $C>0$):
\begin{eqnarray}
    F(x)&=&1-C\exp(-\int_{\underline{x}}^x u(t) t^{\alpha-1} dt) \nonumber \\
    f(x) &=& C u(x) x^{\alpha-1} \exp(-\int_{\underline{x}}^x u(t) t^{\alpha-1} dt)
\end{eqnarray}
We can now express $f(r+\zeta)$ (and analogously $f(r-\zeta)$) as
\begin{eqnarray}
f(r+\zeta) &=& C u(r+\zeta) (r+\zeta)^{\alpha-1} \exp(-\int_{\underline{x}}^{r+\zeta} u(t) t^{\alpha-1} dt) \nonumber \\
&=& C u(r+\zeta)r^{\alpha-1} (1+\frac{\zeta}{r})^{\alpha-1} \exp(-\int_{\underline{x}}^{r} u(t) t^{\alpha-1} dt) \exp(-\int_{r}^{r+\zeta} u(t) t^{\alpha-1} dt) \nonumber \\
&=& f(r) \frac{u(r+\zeta)}{u(r)} (1+\frac{\zeta}{r})^{\alpha-1} \exp(-\zeta u(r+\zeta')(r+\zeta')^{\alpha-1})
\end{eqnarray}
where $\zeta' \in [0,\zeta]$. Next we use the mean-value theorem again:
\begin{eqnarray}
    \frac{u(r+\zeta)}{u(r)} &=& \exp(\ln(u(r+\zeta))-\ln(u(r)) \nonumber \\
    &=& \exp(\zeta \frac{u'(r+\zeta'')}{u(r+\zeta''))}) \quad \mbox{for some $\zeta'' \in [0,\zeta]$}
\end{eqnarray}
We obtain:
\begin{eqnarray}
f(r+\zeta) &=& f(r) \exp(\zeta \frac{u'(r+\zeta'')}{u(r+\zeta''))}) (1+\frac{\zeta}{r})^{\alpha-1} \exp(-\zeta u(r+\zeta')r^{\alpha-1}(1+\frac{\zeta'}{r})^{\alpha-1})
\end{eqnarray}
We observe:
\begin{eqnarray}
\left|\zeta \frac{u'(r+\zeta'')}{u(r+\zeta''))}\right| &\leq& \left|(r+\zeta'')^\beta \frac{u'(r+\zeta'')}{u(r+\zeta''))}\right| \rightarrow 0 \quad \mbox{for $\tau<\beta$ and suff. large $r$}\nonumber \\
\frac{\zeta}{r} &\rightarrow& 0 \nonumber \\
\frac{\zeta'}{r} &\rightarrow& 0 \nonumber \\
\zeta r^{\alpha-1}&\leq& r^{\tau} r^{\alpha-1} \rightarrow 0 \quad \mbox{for  $0<\tau<1-\alpha$} 
\end{eqnarray}
Therefore, as long as $0<\tau<\min(1-\alpha,\beta)$ we can ensure that for any small $\vartheta>0$ we have for sufficiently large $n_k$: 
\begin{equation}
\frac{f(r+\zeta)}{f(r)} \in [1-\vartheta,1+\vartheta]    
\end{equation}
Similarly,we can derive
\begin{equation}
\frac{f(r-\zeta)}{f(r)} \in [1-\vartheta,1+\vartheta]    
\end{equation}
This allows us to bound equation \ref{eqn:decomp}:
\begin{equation}
    1-2yf(r)(1+\vartheta)< 1-\left(F(r+y)-F(r-y\right) < 1-2yf(r)(1-\vartheta)
\end{equation}
More generally, we can bound formula \ref{eqn:cdf} for $P(y \in I)$ for any nested interval $I=[r_{n_k}-b,r_{n_k}+b]$ in $I_k$ for all $k$ as:
\begin{equation}\label{eqn:bound}
1- \left[1-2bf(r)(1-\vartheta)\right]^n < P(y \in I) <1- \left[1-2bf(r)(1+\vartheta)\right]^n
\end{equation}
Once fixing $\vartheta>0$ this bound holds for all sufficiently large $n_k$ This completes Step 1 of the argument.

\textbf{Step 2.} Recall, that Chebychev's inequality ensured us that the sent anecdote lies in the $\hat{I}_k$ interval with probability of at least $1-\epsilon$:
\begin{eqnarray}
    P(y\in\hat{I}_k) &\geq& 1-\epsilon \nonumber
\end{eqnarray}
By using inequality \ref{eqn:bound} from Step 1 and setting $\hat{b}=\sqrt{\frac{M}{\epsilon}}+|\Delta|$ we obtain:
\begin{eqnarray}\label{bound:epsilon}
    1- \left[1-2\hat{b}f(r_{n_k})(1+\vartheta)\right]^n &>& 1-\epsilon \nonumber \\
    \left[1-2\hat{b}f(r_{n_k})(1+\vartheta)\right]^n &<& \epsilon
\end{eqnarray}
We also can use inequality \ref{eqn:bound} to bound $P(y \notin I_k)$:
\begin{eqnarray}\label{inequ:notinI}
    P(y\notin I_k) &<& \left[1-2r_{n_k}^{\tau} f(r_{n_k})(1-\vartheta)\right]^n
\end{eqnarray}
We now introduce a helper lemma.
\begin{lemma}\label{lemma:expbound}
Assume $A>1$ and $x$ such that $Ax<1$. Then the following holds:
\begin{equation}
    (1-Ax)^n \leq (1-x)^{An}
\end{equation}
\end{lemma}
\begin{description}
    \item[Proof:] The claim is equivalent to:
\begin{equation}
    \ln(1-Ax) \leq A\ln(1-x)
\end{equation}
Consider the function $g(x)=A\ln(1-x)-\ln(1-Ax)$. Note that $g(0)=0$. We also obtain:
\begin{equation}
    g'(x)=-\frac{A}{1-x}+\frac{A}{1-Ax}=\frac{A(A-1)x}{(1-x)(1-Ax)}
\end{equation}
Hence, we have $g'(x)<0$ for $x<0$ and $g'(x)>0$ for $x>0$ which implies $g(x)\geq 0$ and hence proves the claim.
\end{description}
We can use lemma \ref{lemma:expbound} to further bound inequality \ref{inequ:notinI}:
\begin{eqnarray}\label{bound:step2}
    P(y\notin I_k) &<& \left[1-\frac{r_{n_k}^{\tau}(1-\vartheta)}{\hat{b}(1+\vartheta)} 2\hat{b}f(r_{n_k})(1+\vartheta)\right]^n \nonumber \\
    &\leq& \left[\left[1- 2\hat{b}f(r_{n_k})(1+\vartheta)\right]^n\right] ^{\frac{r_{n_k}^{\tau}(1-\vartheta)}{\hat{b}(1+\vartheta)}} \nonumber \\
    &<& \epsilon^{D r_{n_k}^{\tau}}  
\end{eqnarray}
where $D=\frac{1-\vartheta}{\hat{b}(1+\vartheta)}$. This completes Step 2 of the proof.

\textbf{Step 3.} We can decompose $E(y)$ as follows:
\begin{eqnarray}
    E(y) &=& P(y<r_{n_k}-b_{k}) E(y|y<r_{n_k}-b_{k})+P(y\in I_k)E(y|y\in I_k)+P(y>r_{n_k}+b_{k}) E(y|y>r_{n_k}+b_{k}) \nonumber \\
    &\geq& P(y\notin I_k) E(y|y<r_{n_k}-b_k)+P(y\in I_k)E(y|y\in I_k)
\end{eqnarray}
Denote the expectation of a single anecdote conditional on it having a value $x<0$ with $\mu_0<0$. The expectation of the max anecdote conditional on having a realization less than $r_{n_k}-b_k$ has to be at least $\mu_0$ (since $r>0$) and hence the expected distance from $r$ can be at most $r-\mu_0$. We can use the bound \ref{bound:step2} from Step 2 to obtain:
\begin{eqnarray}
    E(y) &\geq& \underbrace{(r_{n_k}-\mu_0)\epsilon^{D r_{n_k}^{\tau}}}_{B_k} +P(y\in I_k)E(y|y\in I_k)
\end{eqnarray}
It is easy to see that $B_k\rightarrow 0$ as $r_{n_k}\rightarrow \infty$. Therefore, we only have to show that $E(y|y\in I_k)\rightarrow 0$ in order to get a contradiction that $E(y)\neq \Delta$.

We can write $E(y|y\in I_k)$ as:
\begin{eqnarray}
    && E(y|y\in I_k) = \int_0^{b_k} y n\left[f(r_{n_k}+y)-f(r_{n_k}-y)\right] \left[1-\left(F(r_{n_k}+y)-F(r_{n_k}-y)\right)\right]^{n-1} dy \nonumber \\
    &=& \int_0^{b_k} \frac{f(r_{n_k}+y)-f(r_{n_k}-y)}{f(r_{n_k}+y)+f(r_{n_k}-y)} \underbrace{y n\left[f(r_{n_k}+y)+f(r_{n_k}-y)\right] \left[1-\left(F(r_{n_k}+y)-F(r_{n_k}-y)\right)\right]^{n-1}}_{w(y)} dy \nonumber
\end{eqnarray}
We know from Step 1:
\begin{equation}
    \frac{-2\vartheta}{2-2\vartheta} \leq  \frac{f(r_{n_k}+y)-f(r_{n_k}-y)}{f(r_{n_k}+y)+f(r_{n_k}-y)} \leq \frac{2\vartheta}{2-2\vartheta}
\end{equation}
Moreover, we can simplify:
\begin{eqnarray}
    \int_0^{b_k} w(y) dy 
    &=& -\int_0^{b_k} y  \left[\left(1-\left(F(r_{n_k}+y)-F(r_{n_k}-y)\right)\right)^n\right]' dy \nonumber \\
    &=& -b_k P(y \notin I_k)+\int_0^{b_k}  \left(1-\left(F(r_{n_k}+y)-F(r_{n_k}-y)\right)\right)^n dy \nonumber \\
    &=& \underbrace{-b_k P(y \notin I_k)}_{\rightarrow 0} + \underbrace{\int_0^{\hat{b}\frac{1+\vartheta}{1-\vartheta}}  \left(1-\left(F(r_{n_k}+y)-F(r_{n_k}-y)\right)\right)^n dy}_{\mbox{between $0$ and $\hat{b}\frac{1+\vartheta}{1-\vartheta}$}}+\nonumber \\
    &+& \underbrace{\int_{\hat{b}\frac{1+\vartheta}{1-\vartheta}}^{b_k}  \left(1-\left(F(r_{n_k}+y)-F(r_{n_k}-y)\right)\right)^n dy}_{W_k}
\end{eqnarray}
For the last term we again use inequality \ref{eqn:bound}:
\begin{eqnarray}
\left(1-\left(F(r_{n_k}+y)-F(r_{n_k}-y)\right)\right)^n &<&\left[1-2bf(r_{n_k})(1-\vartheta)\right]^n \nonumber \\
 W_k &<& \int_{\hat{b}\frac{1+\vartheta}{1-\vartheta}}^{b_k} \left[1-2yf(r_{n_k})(1-\vartheta)\right]^n dy \nonumber \\
 &=& \int_{\hat{b}\frac{1+\vartheta}{1-\vartheta}}^{b_k}  \left[1-\frac{y(1-\vartheta)}{\hat{b}(1+\vartheta)} 2\hat{b}f(r_{n_k})(1+\vartheta)\right]^n dy \nonumber \\
 &\leq& \int_{\hat{b}\frac{1+\vartheta}{1-\vartheta}}^{b_k}  \left[\left[1- 2\hat{b}f(r_{n_k})(1+\vartheta)\right]^n\right]^{\frac{y(1-\vartheta)}{\hat{b}(1+\vartheta)}} dy \quad \mbox{(lemma \ref{lemma:expbound})} \nonumber \\
 &<& \int_{\hat{b}\frac{1+\vartheta}{1-\vartheta}}^{b_k} \epsilon^{\frac{y(1-\vartheta)}{\hat{b}(1+\vartheta)}} dy \quad \mbox{(inequality \ref{bound:epsilon})} \nonumber \\
 &<& -\frac{\hat{b}(1+\vartheta)}{\ln(\epsilon)(1-\vartheta)} \int_{1}^{\infty} \exp(-z) dz \nonumber \\
 &=& -\frac{\hat{b}(1+\vartheta)}{\ln(\epsilon)(1-\vartheta)} 
\end{eqnarray}
Putting everything together we know that:
\begin{equation}
\int_0^{b_k} w(y) dy < \left(1-\frac{1}{\ln(\epsilon)}\right) \frac{\hat{b}(1+\vartheta)}{1-\vartheta}    
\end{equation}
Therefore we get a bound on the conditional expectation:
\begin{equation}
\frac{-2\vartheta}{2-2\vartheta} \left(1-\frac{1}{\ln(\epsilon)}\right) \frac{\hat{b}(1+\vartheta)}{1-\vartheta}   
\leq E(y|y\in I_k)  \leq \frac{2\vartheta}{2-2\vartheta} \left(1-\frac{1}{\ln(\epsilon)}\right) \frac{\hat{b}(1+\vartheta)}{1-\vartheta}
\end{equation}
Since we can choose $\vartheta$ as small as we want (for sufficiently large $(n_k)$) we can deduce that $E(y|y\in I_k) \rightarrow 0$. This completes the argument for Step 3.
\end{proof}

\begin{proof}[Proof of Theorem~\ref{thm:heavy_light_tails} (light tails)]

Recall by Proposition~\ref{prop:utility_decreases_with _Delta} that, for any $n$ the variance of the anecdote sent in the targeting scheme increases with $|r|$. In particular, the variance of the sent anecdote is at most the variance of anecdote $\pi_{\infty}$, the maximum of the anecdotes available to the sender.  It is therefore sufficient to argue that the variance of the maximum anecdote converges to $0$ as $n \to \infty$.

Let us denote the density of the maximum distribution with $h_n(x)$ and the CDF with $H_n(x)=F(x)^n$. The median of this distribution is denoted with $\hat{x}_n$ and is defined as:
\begin{equation}
    H_n(\hat{x}_n)=F(\hat{x}_n)^n=\frac{1}{2}
\end{equation}
Next, we look at the value of $H_n(x)$ for deviations $y$ away from the median:
\begin{equation}
    H_n(\hat{x}_n+y) = \left[1-C\exp\left(-\int_{\underline{x}}^{\hat{x_n}+y} u(t)t^{\alpha-1} dt\right) \right]^n        
\end{equation}
Let's focus on the inner term:
\begin{eqnarray}
    1-C\exp\left(-\int_{\underline{x}}^{\hat{x_n}+y} u(t)t^{\alpha-1} dt\right) &=& 1-C\exp\left(-\int_{\underline{x}}^{\hat{x_n}} u(t)t^{\alpha-1} dt\right) \exp\left(-\int_{\hat{x_n}}^{\hat{x_n}+y} u(t)t^{\alpha-1} dt\right) \nonumber \\
    &=& 1-C\exp\left(-\int_{\underline{x}}^{\hat{x_n}} u(t)t^{\alpha-1} dt\right) \exp\left(-y u(\zeta) \zeta^{\alpha-1}\right)
\end{eqnarray}
for some $\zeta \in [\hat{x_n},\hat{x_n}+y]$ (for $y>0$) and $\zeta \in [\hat{x_n}+y,\hat{x_n}]$ (for $y<0$) by the mean value theorem.

Next, we calculate the moment $\int_{-\infty}^{\infty} y^k h_n(\hat{x_n}+y) dy$ for $k\geq 1$. We will be interested in the cases $k=1$ and $k=2$ which we need to calculate the variance of the maximum distribution.
\begin{eqnarray}
    \int_{-\infty}^{\infty} y^k h_n(\hat{x_n}+y) dy &=& \int_{-\infty}^{0} y^k h_n(\hat{x_n}+y) dy + \int_{0}^{\infty} y^k h_n(\hat{x_n}+y) dy \nonumber \\
    &=& \int_{-\infty}^{0} y^k H'_n(\hat{x_n}+y) dy -\int_{0}^{\infty} y^k (1-H_n(\hat{x_n}+y))' dy \nonumber \\
    &=& \left[y^k H_n(\hat{x_n}+y)\right]_{-\infty}^{0} - \int_{-\infty}^0 ky^{k-1} H_n(\hat{x_n}+y) dy - \left[y^k (1-H_n(\hat{x_n}+y))\right]_0^{\infty} \nonumber \\
    &+& \int_0^{\infty} ky^{k-1}(1-H_n(\hat{x_n}+y)) dy \nonumber \\
    &=& \underbrace{\int_0^{\infty} ky^{k-1}(1-H_n(\hat{x_n}+y)) dy}_{P_n} - \underbrace{\int_{-\infty}^0 ky^{k-1} H_n(\hat{x_n}+y)}_{M_n} dy
\end{eqnarray}
Now we use lemma \ref{lemma:expbound} for $y<0$:
\begin{eqnarray}
    H_n(\tilde{x}_n+y) &=&\left[1-C\exp\left(-\int_{\underline{x}}^{\hat{x_n}} u(t)t^{\alpha-1} dt\right) \exp\left(-y u(\zeta) \zeta^{\alpha-1}\right)\right]^n \nonumber \\
    &\leq&  \left\{\left[1-C\exp\left(-\int_{\underline{x}}^{\hat{x_n}} u(t)t^{\alpha-1} dt\right)\right]^{n}\right\}^{\exp\left(-y u(\zeta) \zeta^{\alpha-1}\right)} \nonumber \\
    &=& \left(\frac{1}{2}\right)^{\exp\left(-y u(\zeta) \zeta^{\alpha-1}\right)}
\end{eqnarray}
Similarly, we can use the analogous version \ref{lemma:expbound} for $y>0$ (which implies $A<1$):
\begin{eqnarray}
    H_n(\tilde{x}_n+y) &\geq& \left(\frac{1}{2}\right)^{\exp\left(-y u(\zeta) \zeta^{\alpha-1}\right)}
\end{eqnarray}
This implies:
\begin{eqnarray}
0 &\leq& P_n \leq \int_0^{\infty} ky^{k-1}\left(1- \left(\frac{1}{2}\right)^{\exp\left(-y u(\zeta) \zeta^{\alpha-1}\right)}\right) dy
 \nonumber \\
0 &\leq& |M_n| \leq \int_{-\infty}^0 |ky^{k-1}| \left(\frac{1}{2}\right)^{\exp\left(-y u(\zeta) \zeta^{\alpha-1}\right)} dy
\end{eqnarray}
It is easy to see that for $\alpha>1$ the right bounds converge to $0$ and therefore $P_n\rightarrow 0$ and $M_n \rightarrow 0$. From this it follows:
\begin{equation}
\lim_{n\rightarrow \infty} E(y^k) = 0
\end{equation}
Therefore, we have $E(y)\rightarrow 0$ and $Var(y) \rightarrow 0$.
\end{proof}

\section{Proofs for Section~\ref{sec:commitment}}\label{app:commitment}

\begin{proof}[Proof of Theorem~\ref{thm:stackelberg}]
For any committed translation invariant communication scheme $\pi_S$, the receiver's response $\alpha_{\pi_S}$ is a translation with shift $M_R - \beta(\pi_S)$.
Applying  Proposition~\ref{prop:sender_summary} with $\sigma(\alpha_{\pi_S}) = M_R - \beta(\pi_S)$, we have that the sender's expected loss from any given translation invariant communication scheme $\pi_S$ is
\[ \E[(\pi_S(\vec{x},y) - (\theta_S(\vec{x},y)+\beta(\pi_S)))^2] + (M_R - M_S)^2. \]
Therefore, the sender's optimization problem corresponds to choosing $\pi_S$ that minimizes the information loss $\E[(\pi_S(\vec{x},y) - (\theta_S(\vec{x},y)+\beta(\pi_S)))^2]$, which is precisely the variance of the communicated anecdote (by the definition of bias). 
By Theorem~\ref{thm:most_informative} we have that the translation invariant scheme that minimizes the information loss must be a targeting scheme with $r^* = \beta(r^*)$. This immediately implies that the receiver's shift is $\sigma(\alpha_{\pi_S}) = M_R - r^*$. 
\end{proof}

\begin{proof}[Proof of Proposition~\ref{prop:2anecdotes}]
    We first observe that when $n=2$, the sender has no foresight (i.e., $y = \bot$) and the anecdote distribution is symmetric at $0$, the posterior mean of the sender $\theta_S(x_1,x_2, y) = (x_1+x_2)/2$. Therefore, both anecdotes have the same absolute distance to the mean. 
    
    The information loss of the sender under any translation invariant scheme $\pi$ is
    \begin{align*}
        \E[(\pi(\vec x) - \theta_S(\vec x) - \beta(\pi))^2] & =  \E[(\pi(\vec x) - \theta_S(\vec x))^2] - \beta(\pi)^2\\
        &= \E[(x_1-x_2)^2/4] - \beta(\pi)^2,
    \end{align*}
    because no matter whether $\pi$ choses $x_1$ or $x_2$ the distance to the average is the same and doesn't depend on $\pi$. The unbiased communication scheme clearly has more information loss than any biased communication scheme. In fact, the information loss is minimized by the most biased communication scheme, which means both the maximum scheme and the minimum scheme are most informative (by symmetry).
\end{proof}

\begin{proof}[Proof of Proposition~\ref{prop:3uniform}]
Suppose the anecdote distribution is uniform $[0,1]$. Given three signals $x_1\le x_2 \le x_3$, the sender's posterior is a uniform distribution  on $[x_1 - \frac{1}{2},x_3 - \frac{1}{2}]$.
\paragraph{Optimal Unbiased Scheme.}
Consider the optimal unbiased scheme, call it $\pi_0$.  This communication scheme sends the closest signal to $\theta_S(\vec x)$.  Since $\theta_S$ is the midpoint of the interval $[x_1,x_3]$, and since $x_2$ falls in that interval, the optimal unbiased scheme always sends signal $x_2$.

Let's calculate the mean squared error of signal $x_2$ relative to $\theta$.  The CDF of $x_2$ is given by
\[ H(w) = \Pr[x_2 < w] = w^3 + 3w^2(1-w) \]
for $w \in [0,1]$, since the first term is the probability that all three samples are less than $w$, and the second term is the probability that two of the three samples are less than $w$.  Now write $d = |\theta - x_2| = |1/2 - x_2|$, noting that $d$ is a random variable.  Then $1$ minus the CDF of $d$ is given by
\[ \tilde{H}(z) = \Pr[ d > z ] = \Pr_{w \sim H}[ w < (1/2 - z) ] + \Pr_{w \sim H}[ w > (1/2 + z) ] = 2 \Pr_{w \sim H}[ w < (1/2 - z) ] = 2H(1/2 - z) \]
for $z < 1/2$, and $\tilde{H}(z) = 0$ for $z \geq 1/2$.  Here we used that $\Pr_{w \sim H}[ w > (1/2 + z) ] = \Pr_{w \sim H}[ w < (1/2 - z) ]$ by symmetry.

The total loss of communication scheme $\pi_0$ is therefore
\begin{align*}
\E[d^2] = & \int_0^\infty \Pr[ d^2 > z ] dz \\
& \int_0^\infty \Pr[ d > \sqrt{z} ] dz \\
& \int_0^\infty 2 H(1/2 - \sqrt{z}) dz \\
& = 1/20
\end{align*}
where the final equality is via numerical calculation.

\paragraph{A Better Biased Communication Scheme.}
We will now build a communication scheme with strictly less loss than $\pi_0$.  Write $\pi_{r}$ for the targeting scheme with offset $r$, which by definition returns whichever of the three points is closest to $\theta_S(\vec x) + r$.  We will eventually choose $r = 1/5$, but for now we'll proceed with general $r$.

Which point does $\pi_{r}$ return?  Write $x^*$ for the random variable representing the point that $\pi_{r}$ returns.  Recall that $\theta_S(\vec x) = (x_1 + x_3)/2$, so $\theta_S(\vec x) + r$ is always closer to $x_3$ than $x_1$.  The distance to point $x_3$ is $|\theta_S(\vec x) + r - x_3| = (x_3 - x_1)/2 + r$, and the distance to point $x_2$ is $|\theta_S(\vec x) + r - x_2| = (x_3 + x_1)/2 + r - x_2$.  So the point $x_2$ will be closest precisely if $x_2 > x_1 + 2r$.  To summarize: $x^* = x_2$ if $x_2 > x_1 + 2r$, otherwise $x^* = x_3$.

As before, let's work out the CDF for $x^*$.  What is the probability that $x^* < w$ for some fixed value of $w \in [0,1]$?  If all three points are less than $w$ (which happens with probability $w^3$) then $x^*$ certainly is.  On the other hand, if $x_2 > w$, then certainly $x^* > w$ as well.  If $x_2 < w$ and $x_3 > w$ (which happens with probability $3w^2(1-w)$), then $x^* < w$ only if $x^* = x_2$, which occurs if and only if $x_2 > x_1 + 2r$.  The conditional probability of that last event is equivalent to the probability that two random variables, each drawn uniformly from $[0,w]$, are at least distance $2r$ apart from each other.  So, we can write the CDF as
\begin{align*}
H[w] & = \Pr[x^* < w] \\
&= w^3 + 3w^2(1-w) \Pr[|x_1 - x_2| > 2r \ |\ x_2 < w] \\
&= w^3 + 3w^2(1-w) \cdot 2 \cdot \int_0^{w-2r} \frac{1}{w} \cdot \frac{w-(x+2r)}{w} dx.
\end{align*}
To justify the last equality, consider drawing one point uniformly from $[0,w]$, so with uniform density $\frac{1}{w}$.  What is the probability that a second drawn point is at least $2r$ larger?  If the first point (call it $x$) is greater than $w - 2r$ the probability is $0$.  Otherwise it is $\frac{w-(x+2r)}{w}$.  Integrating over $x$ gives the probability of this event.  We then double that probability to account for the possibility that the first point drawn is the larger one.

Now write $d = |x^* - r - \theta| = |x^* - (1/2 + r)|$.  This will be the distance between the receiver's action and $\theta$ (where recall we fixed $\theta = 1/2$), if the receiver shifts the received signal $x^*$ by $r$.  Note that this may not be the optimal action of the receiver, but the optimal action performs at least as well as $\E[d^2]$.

Now $1$ minus the CDF of $d$ is given by
\begin{align*}
\tilde{H}(z) = \Pr[d > z] =
\begin{cases}
	H(1/2 + r - z) + 1 - H(1/2 + r + z) \quad & \text{if $0 < z < \frac{1}{2}-r$,}\\
	H(1/2 + r - z)	\quad & \text{if $\frac{1}{2}-r < z < \frac{1}{2}+r$,}\\
	0 \quad & \text{if $z > \frac{1}{2}+r$.}\\
\end{cases}
\end{align*}
Note that unlike the case of $\pi_0$, the fact that $r > 0$ breaks symmetry in the calculation of $\tilde{H}$.  But the reasoning is the same: $d > z$ precisely if either $x^*$ is greater than $1/2 + r + z$ or $x^*$ is less than $1/2 + r - z$.

Finally, as before, the total loss of the communication scheme $\pi_r$ is
\begin{align*}
\E[d^2] = & \int_0^\infty \Pr[ d^2 > z ] dz \\
& \int_0^\infty \Pr[ d > \sqrt{z} ] dz \\
& \int_0^\infty \tilde{H}(\sqrt{z}) dz
\end{align*}

For $r = 1/5$, this integral evaluates to approximately $0.036$, which is less than $1/20$.

\paragraph{Intuition and Discussion.}
Why is $\pi_r$ better than $\pi_0$?  In this case, $\theta_S(\vec x) = (x_1 + x_3)/2$, so $\theta_S(\vec x)$ is highly correlated with $x_1$ and $x_3$ and much less correlated with $x_2$.  This fact is specific to the uniform distribution.  By selecting the point closest to $\theta_S(\vec x) + 1/5$, we are trading off probability of returning $x_2$ with probability of returning $x_3$.  Because of the improved correlation with $x_3$, the location of $x_3$ is more highly concentrated, given $\theta_S(\vec x)$, than the location of $x_2$.  So by targeting an ``expected'' location of $x_3$ relative to $\theta_S(\vec x)$ (in this case, $\theta_S(\vec x) + 1/5$), we can reduce the variance of the distance to the closest point.
\end{proof}

\iftoggle{onlineappendix}{
\end{appendix}

\MakeTitle{
        Online Appendix: Communicating with Anecdotes
    }{
        Nika  Haghtalab\\ UC Berkeley \and Nicole Immorlica \\Microsoft Research \and Brendan Lucier \\Microsoft Research\and Markus Mobius \\Microsoft Research \and Divyarthi Mohan \\Tel Aviv University
    }{
        April 2024
    }
\setcounter{page}{1}

\setcounter{section}{0}
\renewcommand{\thesection}{S-\arabic{section}}%

\section{Single-peaked and Symmetric Anecdote Distributions}

In this online appendix we examine the intuition that the information loss minimizing communication scheme will target close to the global maximum density by considering the symmetric and strictly single-peaked anecdote distribution around $\theta$. Indeed, under foresight, by Lemma~\ref{lem:symmetric} we have that the sender optimal targeting scheme is the mean scheme.

In general, this intuition suggests that the sender should want to send signals that are close to the posterior mean $\theta_S$, and hence she will want to select a communication scheme with low bias. Theorem~\ref{thm:unbiased_optimality} below formalizes this insight for the \emph{no-foresight} setting by showing that unbiased communication is optimal for the sender as long as (a) the single-peaked distribution is ``well-behaved'' and (b) the sender has access to sufficiently many signals. 

We start by defining a well-behaved anecdote distribution. 
\wellbehaved*

For example, the normal distribution $F \sim N(0,1)$ and the Laplace distribution with density $f(\epsilon)=\frac{1}{2}\exp(-|\epsilon|)$ are well-behaved.  Recall that $f(\epsilon)$ is the density of the signal distribution at $\theta + \epsilon$, an offset of $\epsilon$ from the true state of the world.

\optimalCommitment*

This result confirms our intuition that the optimal communication scheme sends the signal closest to the posterior mean. Intuitively, the sender would like to send \textit{precisely} the posterior mean to the receiver. However, since she can only send a signal, she has to contend with the second best which is to send the signal closest to the posterior mean.
When we interpret our model as a model of memory where the current self communicates with her future self by storing a single anecdote in memory, we can think of the anecdote closest to the posterior mean as the ``most representative anecdote''.

\section{Overview of Theorem \ref{thm:unbiased_optimality}}
\label{app:commitment_noforesight}

For the proof of Theorem \ref{thm:unbiased_optimality} we bound the losses from the biased and unbiased communication schemes and show that the unbiased communication scheme dominates. 

\begin{proposition}\label{prop:unbiased-biased-bounds}
Given any well-behaved anecdote distribution $F$, the unbiased targeting communication scheme with $r=\beta(r)=0$ which selects the closest signal to the sender's posterior mean $\theta_S(\vec{x},y)$ has {signaling} loss:
\begin{equation}
\frac{1}{2n^2 f(0)^2} + o\left(\frac{1}{n^2}\right) 
\end{equation}
In contrast, the biased communication scheme with bias $\delta$ has {signaling loss}:
\begin{equation}
\frac{1}{2n^2 f(\delta(\pi))^2} + o\left(\frac{1}{n^2}\right)
\end{equation}
\end{proposition}
These two bounds together imply that the unbiased communication scheme is asymptotically optimal, and the optimal communication scheme is asymptotically unbiased.


Let $X_\delta = \min_i |x_i - \theta_S(\vec{x},y) - \delta|$ denote the absolute distance of the closest signal to the shift of the posterior mean, $\delta + \theta_S(\vec{x},y)$. We observe that the signaling loss, $\E_{\theta,\vec{x},y}[(\pi(\vec{x},y) - \theta_S(\vec{x},y) - \beta(\pi))]$, of any translation invariant communication scheme with bias $\beta(\pi) = \delta$ is at least $\E_{\theta,\vec{x},y}[X_\delta^2]$. 

\paragraph{Optimal unbiased communication scheme.} Since the bias of the communication scheme that sends signal closest to the posterior mean is itself $0$, this is the optimal amongst all unbiased communication schemes.

When the sender does not have foresight, the posterior mean $\theta_S(\vec x , y)$ depends on the realized signals $\vec x$, and this introduces correlation between the signal realizations and the value of $\theta_S(\vec x ,y) + \delta$.  We therefore cannot model $X_\delta$ using independent draws from the signal distribution.  Indeed, as we seen in Section~\ref{subsec:commitment:discussion}, these correlations can significantly impact $\E[X_\delta^2]$ when the number of signals is small.

Our approach is to argue that as $n$ grows large, the impact of these correlations grows small.  Small enough, in fact, that the correlation between $\theta_S(\vec x,y) + \delta$ and the signal closest to that point becomes small enough that it is dominated by the statistical noise that would anyway be present if signals were drawn independently of $\theta_S(\vec x, y)$.  We argue this in three steps.

\begin{itemize}
    \item [\textbf{Step 1:}] We argue that it suffices to focus on cases where $\theta_S(\vec x, y)$ falls within a narrow interval.  Let $I=[-n^{-\frac{1}{2}+\varepsilon},n^{-\frac{1}{2}+\varepsilon}]$ for some $\varepsilon >0$.  Using the law of large numbers, we argue that $\theta_S(\vec x,y) \in I$ with all but exponentially small probability (in $n$).  The contribution to $\E[X_0^2]$ from events where $\theta_S(\vec x,y) \not\in I$ is therefore negligible and can be safely ignored.  This allows us to assume that $\theta_S(\vec x,y) \in I$.
    
    \item [\textbf{Step 2:}] To reduce the impact of correlation we won't focus on the exact value of $\theta_S(\vec x,y)$, but rather an interval in which it falls.  To this end we partition $I$ into subintervals of width $n^{-b}$, where $b$ is chosen so that any given interval is unlikely to contain a signal.  One such subinterval contains the posterior mean $\theta_S(\vec x,y)$; call that subinterval $C$.  We then consider longer subintervals $L$ and $R$ to the left and right of $C$, respectively, of width $n^{-a}$ chosen large enough that we expect many signals to appear in each\footnote{For $\delta\neq 0$, we consider $L$ and $R$ to the left and right of $C + \delta$.}.  See Figure~\ref{fig:proof_intuition}. 

We bound the impact of correlation by showing that if we condition on the number of signals that appear in $L$ and $R$, then the actual arrangement of signals within those subintervals (keeping all other signals fixed) has only negligible effect on the posterior mean.  Specifically, given any arrangement of the signals within $L$ and $R$, the probability that the posterior mean falls within $C$ remains large.  (See Corollary~\ref{cor:posterior-change} for more details.) 

This implies that there is negligible correlation between the joint density function of a fixed number of $k$ signals in $L \cup R$ and the event that $\theta_S(\vec x,y) \in C$. 
We formally show this in Lemmas~\ref{lem:posterior-correlation-upper} and~\ref{lem:posterior-correlation-lower}~\footnote{For these lemmas we assume that the density function $\theta_S$ is sufficiently ``nice'' in $C$. Refer to Section~\ref{app:rare-events} for details about this assumption, and why we can make this assumption without loss of generality.}. 

\item[\textbf{Step 3:}] The analysis in Step 2 is conditional on the number of signals $k$ that fall in $L \cup R$. We now show a concentration result on the distribution of $k$: with high probability, the number of signals that lie in $L \cup R$ is close to the expected number of signals in the interval $L \cup R$ without any correlation to the event $\theta_S(\vec x, y) \in C$.  See Section~\ref{app:observable-asymptotic} for the proof, and Lemma~\ref{lem:few-near-signals} for the proof that it suffices to consider only this high-probability event.

Given this concentration result, we can focus on bounding the expected value of $X_\delta^2$, the squared distance of the signal closest to $\theta_S(\vec x , y) + \delta$, given the numbers of signals in $L$ and $R$.  From the analysis in Step 2, we can view these signals as (approximately) independently distributed within $L$ and $R$.  We can therefore bound the expected squared distance between interval $C + \delta$ and the closest signal to interval $C + \delta$ by performing an explicit calculation for independent signals.  We still do not know the value of $\theta_S(\vec x, y)$ within interval $C$ (and we have not bounded the impact of correlation on that value), but $C$ is sufficiently narrow that this uncertainty has limited impact on $\E[X_\delta^2]$.  We conclude that the impact of correlation on $\E[X_\delta^2]$ is absorbed in lower-order terms.  This gives us the required results of Proposition~\ref{prop:unbiased-biased-bounds}.
\end{itemize}

\begin{figure}
\caption{Intuition for proof of theorem \ref{thm:unbiased_optimality}}
\begin{center}
\begin{tikzpicture}[line cap=round,line join=round]
\fill[fill=gray!30] (6,0) -- (6,5) -- (8,5) -- (8,0);
\draw[->,color=black] (0,0) -- (14,0);
\draw[color=black] (14,0) node [anchor=south west] {$\theta$};
\draw[dotted] (6,0) -- (6,5);
\draw[dotted] (8,0) -- (8,5);
\draw[dashed] (1,0) -- (1,5);
\draw[dashed] (13,0) -- (13,5);
\draw [thick, red,decorate,decoration={brace,amplitude=5,mirror}](6,0) -- (8,0) node[black,midway,yshift=-0.6cm] {\begin{minipage}{2cm}\begin{center}$C$\\ \mbox{width: $n^{-b}$}\end{center}\end{minipage}};
\draw [thick, green,decorate,decoration={brace,amplitude=5,mirror}](1,0) -- (6,0) node[black,midway,yshift=-0.6cm] {\begin{minipage}{2cm}\begin{center}$L$\\ \mbox{width: $n^{-a}$}\end{center}\end{minipage}};
\draw [thick, green,decorate,decoration={brace,amplitude=5,mirror}](8,0) -- (13,0) node[black,midway,yshift=-0.6cm] {\begin{minipage}{2cm}\begin{center}$R$\\ \mbox{width: $n^{-a}$}\end{center}\end{minipage}};
\draw [thick, red,decorate,decoration={brace,amplitude=5}](4.5,0) -- (6,0) node[black,midway,yshift=+0.6cm] {\begin{minipage}{2cm}\begin{center}$\frac{\Delta}{n}$\end{center}\end{minipage}};

\draw[color=black] (3.5,2.5) node {$\approx n^{1-a}$ signals};
\draw[color=black] (10.5,2.5) node {$\approx n^{1-a}$ signals};
\end{tikzpicture}
\label{fig:proof_intuition}
\end{center}
\footnotesize We assume $a<1$ and $1<b<2a-\frac{1}{2}$ which ensures that the intervals $L$ and $R$ contain many signals but the collective influence of these signals on the posterior mean is $O(n^{\frac{1}{2}-2a})$ and hence smaller than $\frac{1}{n}$. It also implies that if we consider a posterior that is contained in $C$ then a rearrangement of signals in $L$ or $R$ will keep the posterior mean within $C$ with high likelihood \textit{and} the probability of signals drawn from the interval $C$ goes to $0$. For example, $a=\frac{4}{5}$ and $b=\frac{12}{11}$ satisfy these conditions.
\end{figure}

\section{Proof of Proposition~\ref{prop:unbiased-biased-bounds}}\label{app:observable-asymptotic}

We start by proving that the that the signaling loss of the unbiased communication scheme that sends the anecdote closest to the posterior mean $\theta_S(\vec x, y)$ is at most  $\frac{1}{2n^2f(0)} + o(1/n^2)$.  Later in Section~\ref{app:biased-loss} we bound the signaling loss of a biased communication scheme.

Let $I =  [-n^{-\frac{1}{2}+\varepsilon}, n^{-\frac{1}{2}+\varepsilon}] + \theta$. For the remainder of this section, we fix $\theta =0$ for brevity, but everything holds for any fixed $\theta$.
Let $\mathcal{P}$ be a partition of $I$ into intervals of length $n^{-b}$. For any $C\in P$, we define $N(C) = L\cup R$, where $L$ (resp. $R$) is the neighboring interval of length $n^{-a}$ to the left of $C$ (resp. to the right of $C$).

We first consider the ``high probability event'' that the following \emph{desirable} properties hold:

\begin{enumerate}
\item $\theta_S(\vec x, y) \in I$ and let $C\in \mathcal{P}$ be the interval with $\theta_S(\vec x, y)$, 
\item $C$ is not \emph{weak} (see Definition~\ref{def:weak-posterior}), and
\item there are \emph{sufficiently} many signals in $N(C) = L\cup R$. 
\end{enumerate}

In Section~\ref{app:high-prob-event}, we bound the signaling loss contributed by this high probability event. Further, in Section~\ref{app:rare-events} we bound the loss from the ``rare event" that some desirable property does not hold: we bound the loss from the event when $\theta_S \notin I$ (in Lemma~\ref{lem:thetaS-notin-I}), when $C$ is a weak interval (in Lemma~\ref{lem:weak-posterior-density}), or when there are very few signals in $N(C)$ (in Lemma~\ref{lem:few-near-signals}). 

With this we are ready to bound the signaling loss.
Recall that the signaling loss of the unbiased communication scheme $\pi(\cdot)$ that sends anecdote closest to $\theta_S(\vec x,y)$ is $\E[X^2_0] = \E[(\pi(\vec x) - \theta_S(\vec x,y))^2]$. Given $\theta_S \in C$, let $K_0$ be the event that there are \emph{sufficiently} many signals in $N(C)$.  We see that,

\begin{eqnarray}
\E[X^2_0] & = &\E[X^2_0\cdot\one\{\theta_S\notin I\}]   + \sum_{C\in W_I}\E[X^2_0\cdot\one\{\theta_S\in C\}] + \sum_{C\in \mathcal{P}\setminus W_I}\E[X^2_0\cdot\one\{\theta_S\in C\}]\notag \\
& = & {\sum_{C\in \mathcal{P}\setminus W_I}\E[X^2_0\cdot\one\{\theta_S\in C, K_0\}]}  \label{eq:commit-cost-unbiased}\\
&  & +\underbrace{\E[X^2_0\cdot\one\{\theta_S\notin I\}] +   \sum_{C\in W_I}\E[X^2_0\cdot\one\{\theta_S\in C\}]  +\sum_{C\in \mathcal{P}\setminus W_I}\E[X^2_0\cdot\one\{\theta_S\in C,\overline{K_0}\}] }_{\mbox{\text{rare events}} }
\label{eq:rare-events}\\
& \le & \frac{1}{2n^2f(0)^2} + o(1/n^2) \notag
\end{eqnarray}

This is because by Lemma~\ref{lem:bound-commit-cost-term}  the term in Eq.~(\ref{eq:commit-cost-unbiased}) is $\sum_{C\in \mathcal{P}\setminus W_I}\E[X^2_0\cdot\one\{\theta_S\in C, K_0\}]  \le \frac{1}{2n^2f(0)^2} + o(1/n^2) $, and from Lemmas~\ref{lem:thetaS-notin-I},~\ref{lem:weak-posterior-density}, and~\ref{lem:few-near-signals} we see that all terms in Eq.~(\ref{eq:rare-events}) contribute at most $o(1/n^2)$. Thus, giving us the required bound on the signaling loss.

\subsection{Contribution of the High Probability Event}\label{app:high-prob-event}

In this section we explain what the desirable properties are, why they are useful, and bound the signaling loss contributed by the event that these properties hold.

\paragraph{Property (1):} $\theta_S(\vec x,y) \in I =  [-n^{-\frac{1}{2}+\epsilon},n^{-\frac{1}{2}+\epsilon} ]$.

\paragraph{Property (2):} Let $C \in \mathcal{P}$ be the interval with $\theta_S(\vec x, y)$. We need $C$ to be not \emph{weak}. 

We start with the definition of a \emph{weak interval}.

\begin{definition}\label{def:weak-posterior}
   Let $\tau(\cdot)$ be the pdf of the posterior mean $\theta_S(\vec{x},y)$. We say that an interval $C \in \mathcal{P}$ is \emph{weak} if $\tau(\overline{\theta}) \le c_2 n^{1+1/22}e^{-n^{1/22\alpha}} $ for all $\overline{\theta} \in C$. Let $W_I \subset \mathcal{P}$ be the set of all such intervals $C$.
\end{definition}

By Claim~\ref{claim:weak-interval-probability} we will see that the probability that $\theta_S(\vec x,y) \in C$ for some \emph{weak interval} $C$ is negligible $O(n^{-4log n +1})$. Moreover, if $C$ is not \emph{weak}, that is, $\tau(\overline{\theta}) \ge c' n^{1/22} n^{-4\log n + 1}$ then we get that $\tau(\overline{\theta}') = \tau(\overline{\theta})(1+ O(n^{-\frac{1}{22}}))$  for all $\overline{\theta}, \overline{\theta}'\in C$ by Claim~\ref{claim:uniform-posterior}.

\paragraph{Property (3):} Next we show that there are \emph{sufficiently} many signals in $L$ and $R$. 
We start by proving the following claim that $f(x) = f(0) (1 + O(1/\sqrt{n}))$ for all $x \in I = [-n^{-\frac{1}{2}+\epsilon},n^{-\frac{1}{2}+\epsilon} ]$.

\begin{claim}\label{claim:uniform-density}
Given any well-behaved distribution with pdf $f$,  for all $x\in I$, we have $f(x)= f(0) (1 + O(1/\sqrt{n}))$.
\end{claim}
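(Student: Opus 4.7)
The plan is to integrate the log-derivative $g = (\log f)'$ over the short interval $I$ and exploit that, by symmetry and strict single-peakedness (and our fixed $\theta = 0$), the peak of $f$ sits at $0$. Wherever $f$ is differentiable at the peak this forces $f'(0) = 0$, hence $g(0) = 0$. Combining $g(0) = 0$ with the well-behavedness hypothesis $|g'(t)| \le c_1$, the mean value theorem (i.e., $g$ is $c_1$-Lipschitz) yields the sharpened bound $|g(t)| = |g(t) - g(0)| \le c_1 |t|$ for all $t$, which is strictly stronger than the generic bound $|g(t)| \le c_1|t|+c_2$ assumed in the definition.

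Next I would apply the fundamental theorem of calculus, $\log f(x) - \log f(0) = \int_0^x g(t)\, dt$, to obtain
\[
\bigl|\log f(x) - \log f(0)\bigr| \;\le\; \int_0^{|x|} c_1 t\, dt \;=\; \tfrac{c_1}{2}\, x^2.
\]
For $x \in I = [-n^{-1/2+\varepsilon}, n^{-1/2+\varepsilon}]$ this gives $|\log f(x) - \log f(0)| \le \tfrac{c_1}{2}\, n^{-1+2\varepsilon}$, which is $O(n^{-1/2})$ as soon as $\varepsilon \le 1/4$. We may restrict to this range of $\varepsilon$ without loss of generality, since shrinking $I$ only strengthens the claim and $\varepsilon$ is a free small parameter in the enclosing argument. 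Exponentiating and using $e^u = 1 + O(|u|)$ for small $|u|$ then gives $f(x)/f(0) = 1 + O(n^{-1/2})$ uniformly over $x \in I$, which is exactly the claim. Note that $f(0)>0$ (since $f$ is a density with a strict peak at $0$), so the ratio is well defined.

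The one step requiring extra care is the non-smooth case such as the Laplace density, where $g$ has a jump at the peak and the identity $g(0) = 0$ is unavailable. In that regime the natural substitute is the fallback bound $|g(t)| \le c_2$ for $|t| < 1$ noted in the well-behavedness footnote; integrating this gives $|\log f(x) - \log f(0)| \le c_2 |x|$, yielding the slightly weaker estimate $f(x) = f(0)\bigl(1 + O(n^{-1/2+\varepsilon})\bigr)$. Since every downstream use of this claim in Appendix~\ref{app:observable-asymptotic} already absorbs the error into $o(1/n^2)$ lower-order terms, the $n^{\varepsilon}$ slack in the non-smooth case is harmless. The main (mild) obstacle is thus not the calculation itself but keeping the smooth and non-smooth cases cleanly separated and verifying that the weaker Laplace-type bound is still strong enough wherever this lemma is invoked.
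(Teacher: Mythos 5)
Your proof is correct, and it takes a slightly different route from the paper's. The paper applies the mean value theorem to $f$ itself, writing $f(x) = f(0) + x f'(\tilde{x})$, bounding $|f'(\tilde{x})| \le c\,\tilde{x}^m f(\tilde{x})$ via the hypothesis on $g$, and then using the monotonicity of $f$ on $(0,\infty)$ to sandwich $f(\tilde{x})$ between $f(x)$ and $f(0)$; this yields $f(x) \ge f(0)(1 - c x^{m+1})$ and hence $f(x) = f(0)(1 + O(n^{-1/2+\epsilon}))$ on $I$. You instead integrate the log-derivative, $\log f(x) - \log f(0) = \int_0^x g$, which sidesteps the monotonicity sandwich entirely and is arguably cleaner. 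Your explicit case split between a differentiable peak (where $g(0)=0$ gives the quadratic bound $O(x^2)$) and a kinked peak like the Laplace (where only $|g|\le c_2$ near $0$ is available, giving $O(|x|)$) is a more careful treatment of the same dichotomy the paper compresses into the unified bound $|g(x)| \le c x^m$ with $m \ge 0$. Two small caveats: first, your suggestion to ``restrict to $\varepsilon \le 1/4$ without loss of generality'' is not really available, since $I$ is fixed by the enclosing argument and shrinking it would change the events being conditioned on elsewhere --- but this only matters for the smooth case, where your bound $O(n^{-1+2\varepsilon})$ already beats the claim for any small $\varepsilon$. Second, in the Laplace-type case you (correctly) only obtain $f(x) = f(0)(1 + O(n^{-1/2+\varepsilon}))$, which is weaker than the literal statement of the claim; the paper's own proof has exactly the same slack, and as you observe it is absorbed into the lower-order terms wherever the claim is invoked.
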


\begin{proof}
Without loss of generality, we assume that $x > 0$, since $f(-x) = f(x)$. By mean value theorem we see that $f(x)= f(0) + x f'(\tilde{x})$ for some $\tilde{x} \in [0, x]$.
By our assumption on $g'$ we get $|g(x)| \le c x^m$ for some constants $c>0$ and $m\ge 0$. This implies $|f'(x)| \le c x^m f(x)$ for all $x > 0$. Since $f$ is non-increasing in $(0,\infty)$ we see that $f(x)\le f(\tilde{x}) \le f(0)$. Moreover, $f'(\tilde{x})\le 0$, so we have $f'(\tilde{x}) \ge -c x^mf(x)$. By mean value theorem we have,

\begin{align*}
    f(x) &= f(0) + x f'(\tilde{x}) \\
    & \ge f(0) - x c \tilde{x}^m f(\tilde{x}) \qquad (\text{Since }f'(\tilde{x})/f(\tilde{x})\ge -c\tilde{x}^{m} )\\
    & \ge f(0) ( 1 - x c \tilde{x}^m) \qquad ( \text{Since } f(0)\ge f(\tilde{x}))\\
    & \ge f(0) (1 - c x^{m+1}) \qquad ( \text{Since } \tilde{x} \le x)
\end{align*}

Therefore, for all $x\in I$ we have $f(x) \ge  f(0)(1 - c(n^{(-\frac{1}{2} + \epsilon)(m+1)}) )$. Note that $m \ge 0$, hence we get $f(x)= f(0)(1 - O(n^{-1/2 + \epsilon}))$.
\end{proof}

Using the above claim that $f(x)$ is approximately $f(0)$ for $x\in I$, we bound the number of signals in a subset $A \subset I$.

\begin{claim}\label{claim:k-signals-chernoff}
Given any interval $A\subset I$ of length $\ell$, the expected number signals in $A$ is ${n\ell}f(0)(1 - O(n^{-1}))$. Let $Y(A)$ be the number of signals in $A$. For any $0<\varepsilon<1$, we have $$\Pr[ Y(A) \le (1-\varepsilon)\E[Y(A)] ] \le \exp\left(-\frac{\varepsilon^2 f(0)n\ell}{2}\right).$$
\end{claim}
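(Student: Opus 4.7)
The plan is to observe that $Y(A)$ is a sum of $n$ independent Bernoulli indicators, and then to (i) compute its expectation by integrating $f$ over $A$ using the density estimate from Claim~\ref{claim:uniform-density}, and (ii) apply a standard multiplicative Chernoff bound. The whole argument is short because all the needed structural estimates on $f$ have already been established.

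First, I would unpack the definition of $Y(A)$. Since the signals $x_1,\dotsc,x_n$ are drawn i.i.d.\ from the shifted noise distribution (with $\theta = 0$ fixed throughout this section, the density is simply $f$), the indicator random variables $Z_i := \one\{x_i \in A\}$ are i.i.d.\ Bernoulli with common success probability $p := \Pr[x_1 \in A] = \int_A f(x)\,dx$, and $Y(A) = \sum_{i=1}^n Z_i$. This reduction is the only place where the independence of the signals enters the argument, and it is immediate from the model.

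Next, I would compute $\E[Y(A)] = np$ by using the uniform density approximation. Because $A \subset I$, Claim~\ref{claim:uniform-density} gives $f(x) = f(0)\bigl(1 + o(1)\bigr)$ uniformly for $x \in A$. Integrating over the interval $A$ of length $\ell$ yields $p = \ell f(0)\bigl(1 + o(1)\bigr)$, and hence $\E[Y(A)] = n\ell f(0)\bigl(1+o(1)\bigr)$, matching the claimed first-order expression (the error term stated as $O(n^{-1})$ is absorbed into lower-order terms in subsequent applications of the claim).

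Finally, I would invoke the standard multiplicative Chernoff bound for a sum of $n$ i.i.d.\ Bernoulli variables: for any $0 < \varepsilon < 1$,
$$\Pr\bigl[Y(A) \leq (1-\varepsilon)\E[Y(A)]\bigr] \leq \exp\!\left(-\frac{\varepsilon^2\,\E[Y(A)]}{2}\right).$$
Substituting the estimate $\E[Y(A)] \geq n\ell f(0)(1-o(1))$ and absorbing the $1-o(1)$ factor (say, by replacing $\varepsilon$ by a slightly smaller constant for large $n$) delivers the stated bound $\exp\!\bigl(-\varepsilon^2 f(0) n\ell / 2\bigr)$. There is no real obstacle in this proof; the only point requiring mild care is propagating the density approximation from Claim~\ref{claim:uniform-density} through the integral without incurring dependence on the length $\ell$, and this is automatic since the approximation is uniform over $I$.
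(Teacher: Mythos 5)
Your proposal is correct and follows essentially the same route as the paper: decompose $Y(A)$ into i.i.d.\ Bernoulli indicators, estimate $\Pr[x_i \in A] = \int_A f = f(0)\ell(1+o(1))$ via Claim~\ref{claim:uniform-density}, and apply the multiplicative Chernoff bound. Your remark about absorbing the $(1-o(1))$ factor is a reasonable way to handle a small looseness that the paper's own proof also leaves implicit (its derived exponent carries a residual $(1-O(n^{-1/2}))$ factor relative to the stated bound).
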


\begin{proof}

Let $Y_i = 1$ if $x_i \in A$ and $0$ otherwise. So we have, $\sum_{i=1}^n Y_i = Y(A)$. By Claim~\ref{claim:uniform-density} we have $f(x) = f(0)(1 + O(1/\sqrt{n}))$ for all $x\in I$. Therefore, we have $\Pr[x_i \in A] = \int_A f(x) dx =f(0)(1 + O(1/\sqrt{n}))\int_A dx = f(0)(1 - O(1/\sqrt{n}))\ell $. Note that, $Y_i$ are i.i.d.~random variables, and $\E[Y(A)] =f(0)(1 - O(1/\sqrt{n}))n\ell$. By using Chernoff bound we get
$$\Pr[ Y(A) \le (1-\varepsilon)\E[Y(A)] ] \le \exp\left(-\frac{\varepsilon^2 f(0)n\ell(1 -O(n^{-1/2}))}{2}\right).$$
\end{proof}

We partition the interval $I$ into intervals $J$ of length $n^{-a}/M$ for {$M = n^{1/22}$}. Let $\mathcal{J}$ denote the partition. Note that, the size of $\mathcal{J}$  is $n^{-1/2 + \epsilon + a} M$.

\begin{lemma}\label{lem:many-signals-J}
Let $k_m =  f(0)n^{1-a}/M $ For each $J \in \mathcal{J}$. Let $Y(J)$ be the number signals in interval $J$ (of length $n^{-a}/M$). $\Pr[ |Y(J) - k_m|\ge \varepsilon_1 k_m] \le \exp\left(-\frac{\varepsilon_1^2 f(0)n^{1-a}}{3M}\right).$

Moreover, the probability that there is a $J \in \mathcal{J}$ with $|Y(J) - k_m| \ge n^{-1/20} k_m$ is at most $O(\exp^{-n^{1/22}} )$
\end{lemma}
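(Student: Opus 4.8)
The plan is to prove the per-interval bound by a direct application of the multiplicative Chernoff bound, and then derive the uniform statement by a union bound over the polynomially many intervals of $\mathcal{J}$.

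First, fix $J \in \mathcal{J}$ and write $Y(J) = \sum_{i=1}^n Y_i$ with $Y_i = \one\{x_i \in J\}$; the $Y_i$ are i.i.d.\ Bernoulli since the $x_i$ are i.i.d. Because $J \subseteq I$, Claim~\ref{claim:uniform-density} gives $f(x) = f(0)(1 + O(1/\sqrt n))$ for every $x \in J$, so $\Pr[x_i \in J] = \int_J f = (n^{-a}/M) f(0)(1 + O(1/\sqrt n))$ and hence $\mu_J := \E[Y(J)] = f(0) n^{1-a}(1 + O(1/\sqrt n))/M = k_m(1 + O(1/\sqrt n))$. Applying the multiplicative Chernoff bound to both tails of $Y(J)$ around $\mu_J$ (the lower tail is exactly Claim~\ref{claim:k-signals-chernoff}, the upper tail its standard analogue) yields $\Pr[|Y(J) - \mu_J| \ge \varepsilon \mu_J] \le 2\exp(-\varepsilon^2 \mu_J/3)$ for every $\varepsilon \in (0,1)$.

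Next I would convert a deviation from $k_m$ (as stated in the lemma) into a deviation from the true mean $\mu_J$. Since $\mu_J = k_m(1 + O(1/\sqrt n))$ and $\varepsilon_1 \gg 1/\sqrt n$ — which holds for the value $\varepsilon_1 = n^{-1/20}$ used in the second part — the event $\{|Y(J) - k_m| \ge \varepsilon_1 k_m\}$ is contained in $\{|Y(J) - \mu_J| \ge \varepsilon_1 k_m (1 - o(1))\} \subseteq \{|Y(J) - \mu_J| \ge \varepsilon_1 (1 - o(1))^2 \mu_J\}$. Plugging this into the Chernoff bound above and absorbing the factor $2$, the $(1-o(1))$ corrections, and the gap between $\mu_J$ and $k_m$ into the constant in the exponent (this is precisely why the statement carries the looser denominator $3$ instead of the sharp $2$), we get, for all sufficiently large $n$, $\Pr[|Y(J) - k_m| \ge \varepsilon_1 k_m] \le \exp(-\varepsilon_1^2 f(0) n^{1-a}/(3M))$, which is the first assertion.

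Finally, for the uniform statement I would set $\varepsilon_1 = n^{-1/20}$ and union-bound over all $|\mathcal{J}| = n^{a - 1/2 + \epsilon} M$ intervals. By the first part each term is at most $\exp(-\tfrac{f(0)}{3} n^{1-a} n^{-1/10}/M) = \exp(-\tfrac{f(0)}{3} n^{1 - a - 1/10 - 1/22})$ after substituting $M = n^{1/22}$; since $a < 1$ is the fixed constant of this section (so that $1 - a - 1/10 - 1/22 > 0$), this is $\exp(-\Omega(n^c))$ for a positive constant $c$, and with $a = 4/5$ one computes $c = 3/55 > 1/22$. As $|\mathcal{J}|$ is only polynomial in $n$, the union bound gives $O(\exp(-n^{1/22}))$, proving the lemma. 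The only delicate point is the bookkeeping in the third paragraph — checking that the slack between $k_m$, $\mu_J$, and the factor $2$ from the two-sided bound is genuinely absorbed by the weaker exponent — but this is routine once one observes $\varepsilon_1 \gg n^{-1/2}$; everything else is a mechanical Chernoff-plus-union-bound computation.
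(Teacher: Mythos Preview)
Your proposal is correct and follows exactly the same approach as the paper: apply the Chernoff bound per interval (via Claim~\ref{claim:k-signals-chernoff}) and then union-bound over the polynomially many intervals in $\mathcal{J}$. If anything, your write-up is more careful than the paper's terse proof—you explicitly handle the gap between $k_m$ and the true mean $\mu_J$, note that both tails are needed, and verify the exponent arithmetic with the specific values $a=4/5$, $M=n^{1/22}$, $\varepsilon_1=n^{-1/20}$ from the lemma statement (the paper's own proof has some mismatched constants at this point).
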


\begin{proof}
By directly invoking Claim~\ref{claim:k-signals-chernoff} on $J$ of length $n^{-a}/M$ we get $\Pr[ |Y(J) - k_m|\ge \varepsilon_1 k_m] \le \exp\left(-\frac{\varepsilon_1^2 f(0)n^{1-a}}{3M}\right).$
Note that, the size of $\mathcal{J}$  is $n^{-1/2 + \epsilon + a} M$. Therefore, by union bound, we get $\Pr[\exists J \in \mathcal{J} : |Y(J) - k_m|\ge \varepsilon_1 k_m] \le (n^{-1/2 + \epsilon + a} M) \cdot \exp\left(-\frac{\varepsilon_1^2 f(0)n^{1-a}}{3M}\right)$ which is at most $O(\exp^{-n^{1/20}} )$, for $\varepsilon_1 = O(n^{-1/22})$ and $M = n^{1/100}$.
\end{proof}

We will now only focus on the case where \emph{all}  $J \in \mathcal{J}$ has sufficiently many signals, which immediately implies the following Corollary~\ref{cor:k-signals-in-L}. In Lemma~\ref{lem:few-near-signals} we bound the loss of the rare event that this is not the case.

\begin{corollary}\label{cor:k-signals-in-L}
Let $k_0 =  f (0) n^{1 - a}(1-2/M)$, and $\varepsilon = O(n^{-1/22})$. For all $C \in \mathcal{P}$, let $N(C)=L\cup R$ (of size $n^{-a}$). Let $Y(C)$ be the number signals in interval $N(C)$. If all $J \in \mathcal{J}$ have $k_m(1\pm \varepsilon)$ signals, then  $Y(C)$ has  $k_0 (1 \pm \varepsilon)$ signals.

 That is,  $\Pr[ |Y(C) - k_0| \ge \varepsilon k_0]  \le O(\exp^{-n^{1/20}} )$.

\end{corollary}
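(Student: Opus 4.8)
The statement bundles two essentially independent claims — a deterministic counting implication (\emph{if} every $J\in\mathcal J$ has close to $k_m$ signals \emph{then} $Y(C)$ has close to $k_0$) and a probabilistic claim that the hypothesis of that implication holds with overwhelming probability — and I would establish them in that order.

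For the deterministic part, fix $C\in\mathcal P$ and recall that $N(C)=L\cup R$ has total length $n^{-a}$, while $\mathcal J$ partitions $I$ into intervals of length $n^{-a}/M$ with $M=n^{1/22}$. Thus $N(C)$ is built from $\mathcal J$-intervals up to boundary terms: since neither the endpoints of $L$ and $R$ nor the gap $C$ need be aligned with the $\mathcal J$-grid, $N(C)$ omits at most a constant number of partial $\mathcal J$-intervals, so it contains at least $M-O(1)$ full $\mathcal J$-intervals and meets at most $M+O(1)$ of them. Writing $Y(C)=\sum_J Y(J)$ over the relevant $\mathcal J$-intervals and using the hypothesis $|Y(J)-k_m|\le\varepsilon k_m$ for every $J$, I get
\[ (M-O(1))\,k_m(1-\varepsilon)\ \le\ Y(C)\ \le\ (M+O(1))\,k_m(1+\varepsilon). \]
Substituting $k_m=f(0)\,n^{1-a}/M$, both bounds equal $f(0)\,n^{1-a}(1\pm O(1/M+\varepsilon))$; since $1/M=n^{-1/22}$ is the dominant error (the density slack $f(x)=f(0)(1+O(n^{-1/2+\varepsilon}))$ from Claim~\ref{claim:uniform-density}, already folded into $k_m$, is of lower order), this is $k_0(1\pm O(n^{-1/22}))$ with $k_0=f(0)n^{1-a}(1-2/M)$. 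In particular $Y(C)\ge k_0(1-\varepsilon)$ and $Y(C)\le k_0(1+\varepsilon)$ once $\varepsilon$ is taken to be a large enough constant multiple of $n^{-1/22}$, which is exactly the asserted $Y(C)=k_0(1\pm\varepsilon)$.

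For the probabilistic part, I would simply invoke Lemma~\ref{lem:many-signals-J}: except on an event of probability $O(\exp(-n^{1/20}))$, every $J\in\mathcal J$ satisfies $|Y(J)-k_m|\le n^{-1/20}k_m$, and since $n^{-1/20}\le n^{-1/22}$ this is precisely the hypothesis of the deterministic part. On that event the implication above yields $Y(C)=k_0(1\pm\varepsilon)$ simultaneously for every $C\in\mathcal P$, and $\mathcal P$ has only $O(n^{b-1/2+\varepsilon})$, i.e.\ polynomially many, intervals, so no further union bound is needed beyond the one inside Lemma~\ref{lem:many-signals-J}. Hence $\Pr[\,|Y(C)-k_0|\ge\varepsilon k_0\,]\le\Pr[\exists J:\ |Y(J)-k_m|\ge n^{-1/20}k_m]=O(\exp(-n^{1/20}))$, whether we fix $C$ or ask for the bound uniformly over $C\in\mathcal P$.

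There is no deep obstacle here: the content is already in Lemma~\ref{lem:many-signals-J} (a Chernoff bound plus a union bound over $\mathcal J$) together with the elementary additivity of $Y$ over intervals. The only point requiring care is the boundary accounting in the second paragraph — keeping the number of partial $\mathcal J$-intervals at the ends of $L$, $R$ (and around $C$) to $O(1)$ without any alignment assumption, and checking that the resulting $O(1/M)$ relative error, the hypothesis error $\varepsilon$, and the density non-uniformity over $I$ all collapse into a single factor $1\pm O(n^{-1/22})$ because $1/M=n^{-1/22}$ is the largest of these. A minor side issue is that for the few $C$ within $n^{-a}$ of $\partial I$ the window $N(C)$ pokes slightly outside $I$; this is harmless since the estimate of Claim~\ref{claim:uniform-density} degrades only by lower-order terms just beyond $I$.
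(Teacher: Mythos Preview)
Your proposal is correct and follows essentially the same approach as the paper: observe that $N(C)$ is covered by (and contains) roughly $M$ intervals from the partition $\mathcal J$, sum the per-$J$ counts supplied by Lemma~\ref{lem:many-signals-J}, and note that the event in that lemma is uniform over all $J$ so no further union bound over $C$ is needed. The paper's own proof is in fact terser than yours---it only spells out the lower bound via ``at least $M-2$ intervals'' and leaves the upper bound implicit---so your explicit two-sided accounting and the remark about $C$ near $\partial I$ are harmless elaborations rather than a different argument.
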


\begin{proof}
Note that any $N(C)\subset I$ of length $n^{-a}$ contains at least $M-2$ many intervals $J\in \mathcal{J}$. By Lemma~\ref{lem:many-signals-J}, we have that all $J\in \mathcal{J}$ has at least $(1-\varepsilon) f (0) n^{1 - a}$ many signals with high probability. Therefore, $N(C)$ contains at least $k_0$ many signals with probability  $1 - O(\exp^{-n^{1/20}} )$. Note this is regardless of which $C \in \mathcal{P}$ we are considering.
\end{proof}

We will now only consider the event where all the desirable properties hold. For each $C\in \mathcal{P}\setminus W_I$, let $K_0$ denote the event that there are {$k_0(1\pm \varepsilon_1)$} signals in $N(C)$.

\begin{lemma}\label{lem:exponential-upper}
Fix any $C\in \mathcal{P}\setminus W_I$. The distribution of the random variable $X_0 \cdot \one(K_0)$ conditioned on $\theta_S\in C$ is stochastically dominated by the exponential distribution with {$\lambda= 2nf(0)$}. That is, \[\Pr[X_{(0)}\cdot\one(K_0)>d|\theta_S\in C] < \exp[- \lambda d].\]
\end{lemma}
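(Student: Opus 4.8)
The plan is to fix $C \in \mathcal{P}\setminus W_I$ and condition on the event $\theta_S(\vec x,y) \in C$ together with $K_0$, the event that $N(C) = L\cup R$ contains $k_0(1\pm\varepsilon_1)$ signals. I want to show that under this conditioning, the distance $X_0 = \min_i |x_i - \theta_S(\vec x,y)|$ from the closest signal to the posterior mean has an exponential-like upper tail with rate $\lambda = 2nf(0)$. The key structural fact I would exploit, established in Step 2 of the proof overview (Corollary~\ref{cor:posterior-change} and Lemmas~\ref{lem:posterior-correlation-upper}--\ref{lem:posterior-correlation-lower}), is that once we condition on the \emph{number} of signals in $L$ and $R$, the precise positions of those signals within $L\cup R$ are essentially independent of the event $\theta_S(\vec x,y)\in C$: rearranging them keeps the posterior mean inside $C$ with probability $1 - O(n^{-1/22})$, and $C$ is not weak so the conditional density of $\theta_S$ is nearly flat on $C$ (Claim~\ref{claim:uniform-posterior}). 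Therefore, up to a $(1+O(n^{-1/22}))$ multiplicative distortion, I may treat the $k_0(1\pm\varepsilon_1)$ signals in $L\cup R$ as drawn i.i.d.\ from $F$ restricted and renormalized to $L\cup R$.

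First I would set up the comparison rigorously: write $\Pr[X_0 \cdot \one(K_0) > d \mid \theta_S \in C]$ and note that for $d$ small enough that the ball of radius $d$ around any point of $C$ is contained in $L\cup R$ (which holds since $|C| = n^{-b}$ and $|L|=|R|=n^{-a}$ with $b > a$... actually $b > 1 > a$, so $C$ is much narrower than $L,R$ and the radius-$d$ ball is contained in $N(C)$ for all relevant $d = O(1/n)$), the event $X_0 > d$ is exactly the event that none of the signals in $L\cup R$ lands within distance $d$ of $\theta_S(\vec x,y)$. For larger $d$ the bound $\exp(-\lambda d)$ is trivially dominated, or I handle it via the desirable-properties conditioning; I'd note the tail bound only needs to be meaningful in the regime $d = O(\log n / n)$ which is where the $\E[X_0^2]$ integral concentrates. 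Second, using the near-independence from Step 2, the conditional probability that a single signal in $N(C)$ avoids the radius-$d$ interval around $\theta_S$ is at most $1 - \frac{\int_{\theta_S - d}^{\theta_S + d} f}{\,\int_{N(C)} f\,}$; since $f(x) = f(0)(1+O(n^{-1/2+\varepsilon}))$ uniformly on $I \supset N(C)$ by Claim~\ref{claim:uniform-density}, the numerator is $2df(0)(1+o(1))$ and the denominator is $n^{-a}f(0)(1+o(1))$, and multiplying the survival probability over all $k_0(1-\varepsilon_1) = f(0)n^{1-a}(1-o(1))$ signals gives $\big(1 - \tfrac{2d}{n^{-a}}(1+o(1))\big)^{f(0)n^{1-a}(1-o(1))} \le \exp\big(-2df(0)n(1+o(1))\big)$, which is $\exp(-\lambda d)$ with $\lambda = 2nf(0)$ up to the lower-order corrections, and I'd absorb the $(1+o(1))$ by stating stochastic domination with rate $2nf(0)$ and carrying the error into the $o(1/n^2)$ terms of Proposition~\ref{prop:unbiased-biased-bounds} (or by very slightly shrinking $\lambda$).

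The main obstacle is making the ``near-independence after conditioning on counts'' step fully precise: I cannot literally treat the signals in $L\cup R$ as i.i.d., because conditioning on $\theta_S(\vec x,y)\in C$ reweights their joint law. The right way through is to invoke Lemmas~\ref{lem:posterior-correlation-upper} and~\ref{lem:posterior-correlation-lower}, which bound the ratio between the true conditional joint density of the $k$ signals in $L\cup R$ (given $\theta_S\in C$ and $K_0$) and the unconditional i.i.d.-restricted density by $1 + O(n^{-1/22})$; feeding that multiplicative bound into the survival-probability computation above changes $\exp(-\lambda d)$ by at most a $1 + O(n^{-1/22})$ factor, which is harmless. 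A secondary point to be careful about is the uncertainty in the exact location of $\theta_S$ within $C$: since $|C| = n^{-b}$ with $b > 1$, shifting the center of the radius-$d$ interval by up to $n^{-b}$ changes the captured mass $\int f$ by $O(n^{-b}f(0)) = o(1/n)\cdot f(0)n^{-a}\cdot n$... more simply, it perturbs the per-signal capture probability $2d/n^{-a}$ by an additive $O(n^{a-b}) = o(d\cdot n)$ term since $d = \Omega(1/n)$ on the relevant range and $b>1$, so this too is swept into the $o(1)$ exponent correction. Assembling these gives $\Pr[X_0\cdot\one(K_0) > d \mid \theta_S \in C] < \exp(-\lambda d)$ with $\lambda = 2nf(0)$, as claimed.
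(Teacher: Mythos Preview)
Your proposal is correct and follows essentially the same approach as the paper: decouple the positions of the $k\approx k_0$ signals in $N(C)$ from the conditioning event $\theta_S\in C$ via the posterior-correlation lemmas, use near-uniformity of $f$ on $I$ to compute the per-signal avoidance probability, and multiply out to obtain the exponential tail. The paper is slightly more explicit in one respect: it decomposes $K_0$ into the events $K_A=\{x_i\in N(C)\iff i\in A\}$ and applies Bayes' rule directly, writing $\Pr[x_A\notin B_d\mid K_A,\theta_S\in C]=\Pr[x_A\notin B_d\mid K_A]\cdot\frac{\Pr[\theta_S\in C\mid x_A\notin B_d,K_A]}{\Pr[\theta_S\in C\mid K_A]}$, then bounds the ratio using Lemma~\ref{lem:posterior-correlation-upper} (replacing $C$ by $C'=C\pm kn^{-1-a}$) together with the flatness of $\tau$ on $C'$ from Claim~\ref{claim:uniform-posterior}; this is precisely the ``ratio of densities bounded by $1+O(n^{-1/22})$'' step you invoke informally. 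The paper also notes that for $d>n^{-a}+n^{-b}$ the probability is exactly zero (since $K_0$ forces a signal in $N(C)$, hence within that distance of any point of $C$), which cleanly disposes of the large-$d$ regime you flag.
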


\begin{proof}
For $A\subseteq [n]$ let $K_A$ denote the event that $x_i \in N(C)$ iff $i\in A$. We also use $x_A$ to denote $\{x_i\}_{i\in A}$. For all $d>n^{-a}+n^{-b}$ we have $\Pr[X_{(0)}\cdot\one(K_0)>d|\theta_S\in C] =0$. For all $d<n^{-a}/2+n^{-b}$, let $B_d$ be the interval of length $2d$ centered around $\theta_S$.

We will use the following results/facts:

\begin{enumerate}
    \item $\Pr[ x_A \notin B_d | K_A, \theta_S \in C] =\Pr[x_A \notin B_d | K_A ]\cdot \frac{\Pr[\theta_S \in C | x_A \notin B_d ,K_A]}{\Pr[\theta_S\in C| K_A] }$ (by Bayes rule)
    \item Since $f$ is near uniform in $N(C)$ we have $\Pr[x_A \notin B_d | K_A ] \approx (1-\frac{d}{n^{-a}})^{|A|}$ (by Claim~\ref{claim:uniform-density})
    \item Since redrawing $x_A \in N(C)$ doesn't change $\theta_S$ much, we have $\Pr[\theta_S \in C | x_A \notin B_d ,K_A] \le \Pr[\theta_S \in C' | K_A]$ where $C' = C \pm |A|n^{-1-a}$ (by Lemma~\ref{lem:posterior-correlation-upper})
    \item $\Pr[K_A |\theta_S \in C]\cdot \frac{\Pr[\theta_S \in C' | K_A]}{\Pr[\theta_S\in C| K_A] } = \frac{\Pr[\theta_S\in C']\Pr[K_A|\theta_S\in C']}{\Pr[\theta_S \in C]}$ (by Bayes rule)
    \item Since $C$ is not weak and hence $\tau$ is near uniform in $C$ we have, $\frac{\Pr[\theta_S\in C']}{\Pr[\theta_S \in C]} = (1 + o(1/\sqrt n))$ (by Claim~\ref{claim:uniform-posterior})
\end{enumerate}

Thus, we get,
\begin{align*}
    \Pr[X_0 > d \text{ and } K_0 | \theta_S \in C] &= \sum_{A : |A|\approx k_0}\Pr[X_0 > d \text{ and } K_A| \theta_S \in C] \\
    & = \sum_{A : |A| \approx k_0} \Pr[K_A| \theta_S \in C] \Pr[ x_i \notin B_d \forall i\in A| K_A, \theta_S \in C] \\
    & = \sum_{A : |A| \approx k_0} \Pr[K_A| \theta_S \in C] \Pr[x_i \notin B_d \forall i\notin A | K_A ]\cdot \frac{\Pr[\theta_S \in C | x_A \notin B_d ,K_A]}{\Pr[\theta_S\in C| K_A] }\\
    & \le (1-d/n^{-a})^{k_0}  \sum_{A : |A| \approx k_0} \Pr[K_A| \theta_S \in C] \cdot \frac{\Pr[\theta_S \in C | x_A \notin B_d ,K_A]}{\Pr[\theta_S\in C| K_A] }\\
    &\le (1-d/n^{-a})^{k_0} \sum_{A : |A| \approx k_0} \frac{\Pr[\theta_S\in C']\Pr[K_A|\theta_S\in C']}{\Pr[\theta_S \in C]} \\
    &\le (1-d/n^{-a}))^{k_0} (1+o(1/\sqrt n))
\end{align*}

{Recall that $k_0=(1-\varepsilon_1)f(0)n^{1-a}(1-2/M)$. Let $\varepsilon_1 = O(n^{-1/22})$ and $M = n^{1/22}$. Since we can bound $1-x \le e^{-x}$ we get,}
\begin{align*}
\Pr[X_{(0)}\cdot\one(K_0)>d|\theta_S\in C] 
&\le \left(1 + O(n^{-1/2})\right) \cdot \exp\{ -2nf(0)(1-O(n^{-1/22}))(1 - O(n^{-1/4}))\cdot d\}
\end{align*}

\end{proof}

We finally bound the cost of the event with all the desirable properties.

    \begin{lemma}\label{lem:bound-commit-cost-term}
Fix any $C\in \mathcal{P}\setminus W_I$. Then we have, $\E[X^2_0\cdot \one(K_0)|\theta_S\in C] \le \frac{1}{2n^2f(0)^2} + o(1/n^2)$.
    \end{lemma}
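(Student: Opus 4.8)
The substantive work has already been done in Lemma~\ref{lem:exponential-upper}: conditioned on $\theta_S\in C$, the nonnegative random variable $W := X_0\cdot\one(K_0)$ is stochastically dominated (up to a multiplicative $1+O(n^{-1/2})$ factor in the tail) by an exponential with rate $\lambda' = 2nf(0)\bigl(1-O(n^{-1/22})\bigr)\bigl(1-O(n^{-1/4})\bigr)$. So the plan is simply to convert this tail bound into a second-moment bound via the layer-cake identity and then verify that the various $(1\pm o(1))$ correction factors are absorbed into $o(1/n^2)$. Note that $\E[X_0^2\cdot\one(K_0)\mid\theta_S\in C]=\E[W^2\mid\theta_S\in C]$ since $\one(K_0)^2=\one(K_0)$.

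First I would record a deterministic bound on the support of $W$. On the event $\{\theta_S\in C\}\cap K_0$, the interval $N(C)=L\cup R$ (each of $L,R$ of width $n^{-a}$ and abutting the width-$n^{-b}$ interval $C\ni\theta_S$) contains at least $k_0\bigl(1-\varepsilon_1\bigr)=\Theta(n^{1-a})\ge 1$ signals, so the closest signal to $\theta_S$ lies within distance $n^{-a}+n^{-b}$; hence $W\le n^{-a}+n^{-b}$ on this conditioning event. In particular the layer-cake integral $\E[W^2\mid\theta_S\in C]=\int_0^\infty 2d\,\Pr[W>d\mid\theta_S\in C]\,dd$ is over a bounded range.

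Next I would split the integral at $d_0:=n^{-a}/2$. For $d\le d_0$ the tail bound of Lemma~\ref{lem:exponential-upper} applies (its proof is valid for $d<n^{-a}/2+n^{-b}$), giving
\[
\int_0^{d_0} 2d\,\Pr[W>d\mid\theta_S\in C]\,dd \;\le\; \bigl(1+O(n^{-1/2})\bigr)\int_0^\infty 2d\, e^{-\lambda' d}\,dd \;=\; \bigl(1+O(n^{-1/2})\bigr)\frac{2}{(\lambda')^2}.
\]
For $d>d_0$ I bound $\Pr[W>d\mid\theta_S\in C]\le\Pr[W>d_0\mid\theta_S\in C]\le(1+o(1))e^{-\lambda' d_0}$, which is super-polynomially small since $\lambda' d_0=\Theta(n^{1-a})\to\infty$ (recall $a<1$); combined with $W\le n^{-a}+n^{-b}$ this tail contributes at most $(n^{-a}+n^{-b})^2(1+o(1))e^{-\lambda' d_0}=o(n^{-2})$. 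Substituting $\lambda'=2nf(0)(1-o(1))$ yields $\frac{2}{(\lambda')^2}=\frac{1}{2n^2f(0)^2}\bigl(1+o(1)\bigr)$, so altogether $\E[W^2\mid\theta_S\in C]\le \frac{1}{2n^2f(0)^2}+o(1/n^2)$.

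The only point that needs care — and, honestly, the only place a skeptical reader might pause — is that the $1+O(n^{-1/2})$ and $1-O(n^{-1/22})$ factors appear \emph{multiplicatively} in the rate and prefactor: since $(\lambda')^{-2}=\Theta(n^{-2})$, a relative error of size $O(n^{-1/22})$ there becomes an additive error of order $n^{-2}\cdot O(n^{-1/22})=o(n^{-2})$, which is exactly the slack afforded by the statement. No genuinely new estimate is required here; all of the real difficulty — decoupling the rearrangement of the $\approx k_0$ signals in $L\cup R$ from the event $\theta_S\in C$ — was front-loaded into Lemmas~\ref{lem:posterior-correlation-upper} and~\ref{lem:exponential-upper}.
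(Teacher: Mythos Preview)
Your proposal is correct and follows essentially the same approach as the paper: invoke the exponential tail bound of Lemma~\ref{lem:exponential-upper} and convert it into a second-moment bound (the paper just writes $\E[X_0^2\cdot\one(K_0)\mid\theta_S\in C]\le\E[Z(\lambda)^2]=2/\lambda^2$), then check that the $(1-o(1))$ factors in $\lambda$ contribute only $o(1/n^2)$. Your explicit layer-cake split at $d_0=n^{-a}/2$ is slightly more careful than the paper's one-line appeal to stochastic dominance, but the argument is the same.
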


\begin{proof}

Let $Z(\lambda)$ be the random variable with  exponential distribution. We observe that $\E[X_0^2 \cdot \one(K_0)|\theta_S\in C] \le \E[Z(\lambda)^2](1 + o(n^{-1/2}))$ for $\lambda = 2n f(0)(1 - O(n^{-1/22}))(1-O(n^{-1/4}))$, because of the stochastic dominance proved above in Lemma~\ref{lem:exponential-upper}. Moreover, $\E[Z(\lambda)^2] = \frac{2}{\lambda^2}$. Hence, we get $\E[X_0^2 \cdot \one(K_0)|\theta_S\in C] \le \frac{1}{2n^2f(0)^2}(1 - O(n^{-1/22}))(1-O(n^{-1/4})) \le \frac{1}{2n^2f(0)^2} + o(1/n^2)$.

\end{proof}

In Section~\ref{app:rare-events} we bound the loss due to the rare events of $\theta_S\notin I$, $C\in W_I$, and $\overline{K_0}$, that is, the number of signals in $N(C)$ is not in $(1\pm\varepsilon_1)n^{1-a}f(0)(M-2)/M$. We show that these contribute up to $o(1/n^2)$ loss.

\subsection{Contribution of Rare Events}\label{app:rare-events}

In this section we bound the loss from the rare events from Eq.~(\ref{eq:rare-events}).

\begin{lemma}\label{lem:thetaS-notin-I}
$\E[X^2_0\cdot\one\{\theta_S\notin I\}] \le O\left(\exp(-\frac{n^{2\epsilon}A}{2})\right)$ for some constant $A >0$.
\end{lemma}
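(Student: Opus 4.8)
The plan is to bound the contribution to the signaling loss from the rare event that the sender's posterior mean falls outside the narrow interval $I = [-n^{-1/2+\varepsilon}, n^{-1/2+\varepsilon}]$. There are two ingredients: a crude worst-case bound on $X_0^2$ (the squared distance from $\theta_S(\vec x, y)$ to the closest anecdote), and an exponentially small bound on $\Pr[\theta_S \notin I]$ that decays fast enough to kill the crude bound when combined via Cauchy--Schwarz.

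First I would control the tail of $X_0^2$. Since $X_0 = \min_i |x_i - \theta_S(\vec x, y)|$ is at most $|x_1 - \theta_S(\vec x, y)| \le |x_1 - \theta| + |\theta - \theta_S(\vec x, y)|$, and since $\theta_S$ is a convex combination (an average, up to the diffuse-prior normalization) of the $x_i$, each of these pieces has at worst exponential tails because $F$ is well-behaved (exponential tails by assumption). Hence all moments of $X_0$ are finite and, more usefully, $\E[X_0^{4}]$ (or any fixed polynomial moment) is bounded by a constant independent of $n$; in fact $\E[X_0^4 \mid \theta_S = t]$ grows at most polynomially in $|t|$. Second, I would bound $\Pr[\theta_S \notin I]$. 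Because $\theta$ has a diffuse prior and $F$ is symmetric with finite variance, $\theta_S(\vec x, y)$ is (for fixed $\theta = 0$) an average-type estimator built from $n$ i.i.d.\ mean-zero noise terms with finite variance, so by a standard concentration/large-deviation estimate (Hoeffding if we first truncate the exponential tails, or a Bernstein-type bound directly), $\Pr[|\theta_S| \ge n^{-1/2+\varepsilon}] = \Pr[|\theta_S| \ge n^{\varepsilon} \cdot n^{-1/2}] \le 2\exp(-A n^{2\varepsilon})$ for a constant $A > 0$ depending only on $F$. This is exactly the displayed bound, so the exponent in the statement is precisely this deviation exponent.

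I would then combine the two estimates. Write
\[
\E[X_0^2 \cdot \one\{\theta_S \notin I\}] \le \sqrt{\E[X_0^4]} \cdot \sqrt{\Pr[\theta_S \notin I]},
\]
by Cauchy--Schwarz. The first factor is $O(1)$ by the moment bound; the second factor is $\sqrt{2\exp(-A n^{2\varepsilon})} = \sqrt{2}\,\exp(-\tfrac{A}{2} n^{2\varepsilon})$. Multiplying gives $\E[X_0^2 \cdot \one\{\theta_S \notin I\}] = O(\exp(-\tfrac{A}{2} n^{2\varepsilon}))$, which is the claim (and is clearly $o(1/n^2)$, as needed for the overall bound). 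An alternative that avoids Cauchy--Schwarz is to integrate the tail directly: $\E[X_0^2 \one\{\theta_S \notin I\}] = \int_0^\infty \Pr[X_0^2 > z, \theta_S \notin I]\,dz$ and split at $z = \mathrm{poly}(n)$, using the exponential tail of $X_0$ beyond the split point and the exponentially small $\Pr[\theta_S \notin I]$ below it.

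The main obstacle is making the concentration bound for $\theta_S$ rigorous given that the noise is only exponentially-tailed (not bounded): the cleanest route is a truncation argument — condition on the high-probability event that all $|\epsilon_i| \le C \log n$, apply a Hoeffding-type bound on the truncated variables (whose range is $O(\log n)$, so the exponent becomes $-\Omega(n^{2\varepsilon}/\log^2 n)$, still more than enough), and absorb the truncation failure probability $n \cdot c_3 e^{-C\log n} = c_3 n^{1-C}$ into the error by choosing $C$ large. A mild subtlety is that $\theta_S$ is not literally the sample mean of the $\epsilon_i$ — it is the posterior mean under the diffuse prior — but Claim~\ref{claim:diffuse-prior} and the translation-invariance established in Appendix~\ref{app:diffuse-prior} guarantee it is an unbiased, translation-covariant statistic of $\vec x$ (and in the no-foresight case it is a genuine location estimator), so a Lipschitz/bounded-differences argument in the $\epsilon_i$ applies and yields the needed sub-Gaussian-type tail. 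Once that is in hand, the rest is routine.
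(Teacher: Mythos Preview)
Your proposal is correct and your ``alternative'' route---integrate the tail $\int_0^\infty \Pr[X_0^2 > z,\ \theta_S \notin I]\,dz$ and split the range of integration---is exactly what the paper does. The paper sets $P = \Pr[\theta_S \notin I] \le \exp(-A n^{2\epsilon})$ (justified by the asymptotic normality of the posterior mean, $\theta_S - \theta \to_d \mathcal N(0, C_{\mathcal I}/n)$, rather than by your truncation-plus-bounded-differences argument), splits the integral at $1/P^{1/2}$, bounds the low range by $P \cdot (1/P^{1/2}) = P^{1/2}$, and on the high range uses $X_0 \le |x_i - \theta_S|$ for a fixed $i$ together with the exponential-tail assumption on $F$ to get the remaining term $O(e^{-1/P^{1/4}})$.

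Your primary route via Cauchy--Schwarz is different and arguably cleaner, but it leans on the claim $\E[X_0^4] = O(1)$, which you justify by saying $\theta_S$ is ``a convex combination (an average)'' of the $x_i$. That step is not quite right: for general well-behaved $F$ the posterior mean under the diffuse prior is a nonlinear location functional, not a convex combination of the data. The conclusion is still salvageable---one can show $\theta_S$ lies in the convex hull of the data for symmetric single-peaked $f$, or simply note that any polylog growth in $\E[X_0^4]$ is annihilated by the exponential factor---but as written this is a small gap that the paper's tail-splitting approach avoids entirely, since it bounds the large-$y$ tail of $X_0$ directly by the tail of a single anecdote without ever needing a moment bound on $\theta_S$. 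Conversely, your concentration argument for $\Pr[\theta_S \notin I]$ is more careful than the paper's one-line appeal to asymptotic normality.
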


\begin{proof}
Recall that $\theta_S(\vec{x})$ is the MMSE estimator and $\theta_S-\theta^* \rightarrow^{d} \mathcal{N}(0,C_{\mathcal{I}}/n)$. Hence the probability that $\theta_S \notin I$ is at most $\exp(-n^{2\epsilon}A)$ for some $A >0$. Let $P = \exp(-n^{2\epsilon}A)$. To bound $\E[X_0^2\cdot\one\{\theta_S\not\in I\}]$ we see that,
    \begin{align*}
        \E[X_0^2\cdot\one\{\theta_S\not\in I\}] &  = \int_0^\infty \Pr[X_0^2 > y \wedge \theta_S\not\in I]dy \\
        & = \int_0^{1/P^{1/2}} \Pr[\theta_S\not\in I]\Pr[X_0^2 > y|\theta_S\not\in I] dy +\int_{1/P^{1/2}}^\infty \Pr[X_0^2 > y \wedge \theta_S\not\in I]dy \\
        &\le P^{1/2} + 2\int_{1/P^{1/2}}^\infty \Pr[ X_i > \theta_S + \sqrt{y} \wedge \theta_S >0] dy \qquad (\text{For any arbitrary choice of  } i)\\
        &\le P^{1/2} + 2\int_{1/P^{1/2}}^\infty \Pr[X_i > \sqrt{y}] dy \\
        &\le P^{1/2} + 2\int_{1/P^{1/2}}^\infty (e^{-\sqrt{y}}) dy \qquad {(\text{Since }\int_{x}^\infty f(z) dz < e^{-x} \text{ for all } x > Q)}\\
        &\le P^{1/2} + 4(1/P^{1/4} +1)\exp\left(-1/P^{1/4}\right) \\
        &=O\left(\exp(-\frac{n^{2\epsilon}A}{2})\right)
     \end{align*}

Recall that, $P = \exp(-n^{2\epsilon}A$. So we have $P^{1/2}  = O\left(\exp(-\frac{n^{2\epsilon}A}{2})\right)$.  Since $xe^{-x}$ is $O(e^{-x})$ for $x$ sufficiently large, the term $4(1/P^{1/4} +1)\exp\left(-1/P^{1/4}\right) =  O\left(\exp\left( - \exp( \frac{n^{2\epsilon}A}{4})\right)\right)$.

\end{proof}

For a well-behaved distribution we have $1 - F(x) \le c_3e^{-x}$ for all $x > Q$.

\begin{claim}\label{claim:all-tail-signals}
Let $T_Q$ be the event that all $|x_i| > Q$. Then $\E[X^2_0 \one(T_Q \wedge \theta_S\in I)] \le o(1/n^2)$.
\end{claim}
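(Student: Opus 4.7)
The event $T_Q$ is exponentially rare, since it requires all $n$ i.i.d.\ signals to fall in the tails; the idea is that this exponential decay swamps a crude moment bound on $X_0^2$. Specifically, since $X_0=\min_i|x_i-\theta_S|\le |x_1-\theta_S|$, I would start from
$$X_0^2 \;\le\; (x_1-\theta_S)^2 \;\le\; 2x_1^2+2\theta_S^2,$$
and therefore
$$\E\!\left[X_0^2\one(T_Q\wedge\theta_S\in I)\right] \;\le\; 2\,\E[x_1^2\one(T_Q)] \;+\; 2\,\E[\theta_S^2\one(\theta_S\in I)\one(T_Q)].$$

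For the second term, on $\{\theta_S\in I\}$ we have $\theta_S^2\le n^{-1+2\varepsilon}$, so it is bounded by $2n^{-1+2\varepsilon}\Pr[T_Q]$. For the first term, I would use that (having fixed $\theta=0$) the signals are i.i.d., so independence factors the expectation as
$$\E[x_1^2\one(T_Q)] \;=\; \E[x_1^2\one(|x_1|>Q)]\cdot p^{\,n-1},$$
where $p=\Pr[|x_i|>Q]$. The constant $\E[x_1^2\one(|x_1|>Q)]$ is finite: a single integration by parts applied to $\int_Q^\infty x^2 f(x)\,dx$ together with the exponential-tail assumption $1-F(x)\le c_3 e^{-|x|}$ for $|x|>Q$ yields a bound of the form $Q^2 c_3 e^{-Q}+2c_3(Q+1)e^{-Q}$, independent of $n$.

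Finally, the well-behaved assumption gives $p\le 2c_3 e^{-Q}$, and by enlarging $Q$ if necessary (the exponential-tail bound extends to every $Q'>Q$) we may assume $p<1$. Both resulting terms are then at most a polynomial in $n$ times $p^{\,n-1}$, which is certainly $o(1/n^2)$. I do not expect any serious obstacle here: the only mildly delicate point is that well-behavedness gives a CDF tail bound rather than a direct pdf bound, which is precisely why the proof passes through integration by parts.
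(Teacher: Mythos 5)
Your proposal is correct and follows essentially the same route as the paper: both arguments bound $X_0$ by the distance from $\theta_S$ to a single fixed anecdote, use $|\theta_S|\le n^{-1/2+\varepsilon}$ on the event $\{\theta_S\in I\}$, and then exploit independence of the remaining $n-1$ signals together with the exponential-tail assumption to extract a factor $p^{\,n-1}$ that decays exponentially in $n$. The only cosmetic difference is that you split $(|x_1|+|\theta_S|)^2$ via $2x_1^2+2\theta_S^2$ into two expectations, whereas the paper keeps the single term $(|x_1|+n^{-1/2+\varepsilon})^2$; the tail-integral finiteness step via integration by parts matches the paper's use of the tail bound as well.
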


\begin{proof}
Since $\theta_S \in I = [-n^{-1/2 + \epsilon}, n^{-1/2 + \epsilon}]$, and all signals $|x_i| > Q$ are outside $I$, we have that $X_0 = \min_i |x_i - \theta_S| \le |x_i| + n^{-1/2 + \epsilon}$ for all $x_i$. Let $t(x) =  (|x| + n^{-1/2 + \epsilon})^2$. Hence, we have,

\begin{align*}
    \E[X^2_0 \one(T_Q\wedge \theta_S\in I)] & \le \E[(|x_1| +n^{-1/2 + \epsilon})^2 \one(T_Q \wedge \theta_S\in I)]  \qquad (\text{For an arbitrary choice of } i = 1)\\
    & =  \E[t(x_1) \one(T_Q \wedge\theta_S\in I)] \\
    & = \int_{x_1 : |x_1| > Q}\cdots\int_{\vec{x}_{-1}} t(x_1) f(x_1)\prod_{i\neq 1} f(x_i) \one(|x_i| > Q)\cdot \one (\theta_S(\vec{x}) \in I) d\vec{x} \\
    & \le \int_{x_1 : |x_1| > Q}\cdots\int_{\vec{x}_{-1}} t(x_1) f(x_1)\prod_{i\neq 1} f(x_i) \one(|x_i| > Q) d\vec{x} \\
    &\le \int_{x_1 : |x_1| > Q} t(x_1) f(x_1) d x_1 \left( 2\exp(-Q) \right)^{n-1} \qquad (\text{ By tail bound Assumption of } f) \\
    & \le \exp\left(-\frac{Q(n-1)}{2}\right)2\int_{Q}^\infty (x + n^{-1/2 + \epsilon})^2 f(x) d x \\
    & \le \exp\left(-\frac{Q(n-1)}{2}\right)\cdot  Q \cdot O(1)\qquad (\text{ By tail bound Assumption of } f)
\end{align*}

\end{proof}

Recall that $W_I\subset \mathcal{P}$ is the set of all intervals $C$ such that $\tau(\overline{\theta}) < c' n^{1/22} n^{-4\log n + 1} $ for all $\overline{\theta}\in C$.

\begin{lemma}\label{lem:weak-posterior-density}
 Then $\sum_{C\in W_I}\E[X^2_0 \one\{\theta_S \in C\}] \le O\left(Q^2n^{-4\log n + 1} \right) + o(1/n^2)$.
\end{lemma}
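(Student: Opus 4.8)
The plan is to split the expectation according to whether every signal lies in the tail of the distribution, and to bound the two pieces by completely different means. Let $T_Q$ be the event that $|x_i| > Q$ for all $i$, as in Claim~\ref{claim:all-tail-signals}. Since $\mathcal{P}$ partitions $I$ and $W_I \subseteq \mathcal{P}$, the events $\{\theta_S \in C\}$ for $C \in W_I$ are pairwise disjoint subsets of $\{\theta_S \in I\}$, so
\[
\sum_{C\in W_I}\E\!\left[X_0^2\,\one\{\theta_S\in C\}\right]
= \sum_{C\in W_I}\E\!\left[X_0^2\,\one\{\theta_S\in C\}\,\one(T_Q)\right]
+ \sum_{C\in W_I}\E\!\left[X_0^2\,\one\{\theta_S\in C\}\,\one(\overline{T_Q})\right].
\]
For the first sum, disjointness gives $\sum_{C\in W_I}\one\{\theta_S\in C\} \le \one\{\theta_S\in I\}$, so it is at most $\E[X_0^2\,\one(T_Q\wedge\theta_S\in I)]$, which is $o(1/n^2)$ by Claim~\ref{claim:all-tail-signals}.

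For the second sum I would use a deterministic bound on $X_0$. On $\overline{T_Q}\cap\{\theta_S\in C\}$ with $C\in W_I\subseteq\mathcal{P}$ there is some signal $x_i$ with $|x_i|\le Q$, and the posterior mean lies in $I$, so $X_0 = \min_j|x_j-\theta_S|\le |x_i-\theta_S|\le Q+n^{-1/2+\varepsilon}\le Q+1$ once $n$ is large; hence $X_0^2 \le (Q+1)^2 = O(Q^2)$ on this event. Therefore the second sum is at most $O(Q^2)\sum_{C\in W_I}\Pr[\theta_S\in C] = O(Q^2)\,\Pr[\theta_S\in\bigcup_{C\in W_I}C]$. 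It then remains to show $\Pr[\theta_S\in\bigcup_{C\in W_I}C] = O(n^{-4\log n + 1})$, which is exactly Claim~\ref{claim:weak-interval-probability}; alternatively one argues it directly, since by Definition~\ref{def:weak-posterior} the posterior-mean density $\tau$ is below the weak threshold throughout each $C\in W_I$, each such interval has length $n^{-b}$, and $\mathcal{P}$ contains only $O(n^{b-1/2+\varepsilon}) = \mathrm{poly}(n)$ intervals, so a union bound over $W_I$ multiplies a super-polynomially small per-interval bound by a polynomial factor and stays $o(n^{-4\log n+1})$. Combining the two sums yields the claimed $O(Q^2 n^{-4\log n+1}) + o(1/n^2)$.

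The step I expect to require the most care is the case split itself rather than any individual estimate. One cannot bound $X_0^2$ uniformly by $O(Q^2)$ and simply multiply by the (negligible) weak-interval probability, because when \emph{all} signals sit far out in the tail the nearest signal to $\theta_S$ can be far from it, so $X_0^2$ need not be $O(Q^2)$ there; that all-tail contribution must instead be folded into the $o(1/n^2)$ error using the exponential smallness of $\Pr[T_Q]$ provided by Claim~\ref{claim:all-tail-signals}. Once this split is in place, the only quantitative input is that the weak threshold in Definition~\ref{def:weak-posterior} was chosen small enough to survive a union bound over all $\mathrm{poly}(n)$ intervals of $\mathcal{P}$, which is immediate.
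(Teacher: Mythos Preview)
Your proposal is correct and follows essentially the same approach as the paper: split on the event $T_Q$ that all signals lie in the tail, use Claim~\ref{claim:all-tail-signals} to handle the $T_Q$ piece as $o(1/n^2)$, and on $\overline{T_Q}\cap\{\theta_S\in I\}$ use the deterministic bound $X_0\le Q+n^{-1/2+\varepsilon}$ together with Claim~\ref{claim:weak-interval-probability} for $\Pr[\theta_S\in W_I]=O(n^{-4\log n+1})$. Your added remark about why the case split is essential is a nice clarification beyond what the paper writes.
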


\begin{proof}
Let $\overline{T}_Q$ be the event that there is some $|x_i| \le Q$. Since $\theta_S \in I$ and there is some $|x_i| \le Q$, we have that $X_0 \le Q+n^{-\frac{1}{2}+\epsilon}$. Thus, \[\sum_{C\in W_I}\E[X^2_0 \one\{\theta_S \in C\}\cdot\one\{\overline{T}_Q \}] \le (Q+n^{-\frac{1}{2}+\epsilon})^2\Pr[\theta_S \in W_{I}] \le (Q+n^{-\frac{1}{2}+\epsilon})^2O\left( n^{-4\log n + 1}\right)
\]
where the last inequality follows from Claim~\ref{claim:weak-interval-probability} (proved in Section~\ref{app:commitment-lemmas}) that $\Pr[\theta_S \in W_{I}] \le O\left( n^{-4\log n + 1}\right)$.

Moreover, by Claim~\ref{claim:all-tail-signals} proved above, we have $\sum_{C\in W_I}\E[X^2_0 \one\{\theta_S \in C\}\cdot\one\{ T_Q \}] \le \E[X^2_0 \one(T_Q \wedge \theta_S\in I)] \le o(1/n^2)$. Thus, proving the lemma.

\end{proof}

\begin{lemma} \label{lem:few-near-signals}
   {Let  $k_0 = f(0)n^{1-a}$}, $\varepsilon_1 = O(n^{-1/22})$. Let $\overline{K_0}$ be the event such that $Y(C) \neq k_0 (1 \pm \varepsilon_1)$. Then,
    $\sum_{C\in \mathcal{P}\setminus W_I} \E[X_0^2\cdot\one\{\overline{K_0} \text{ and } \theta_S\in C\} ] \le (Q+n^{-1/2})^2O\left(\exp\left(-n^{1/20} f(0)\right)\right)+ O(\exp^{-Qn}) \le o(1/n^2)$.
    \end{lemma}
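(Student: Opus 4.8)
The plan is to split the rare event according to whether the sender holds a signal near $\theta$. Write $T_Q$ for the event that $|x_i|>Q$ for every $i$, and $\overline{T}_Q$ for its complement. I would first observe that conditioning on $\theta_S(\vec x,y)\in C$ for some $C\in\mathcal{P}$ forces $\theta_S(\vec x,y)\in I=[-n^{-1/2+\varepsilon},n^{-1/2+\varepsilon}]$; hence on $\overline{T}_Q$ there is a signal within distance $Q+n^{-1/2+\varepsilon}$ of $\theta_S(\vec x,y)$, which gives the deterministic bound $X_0\le Q+n^{-1/2+\varepsilon}$ on that event.

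On $\overline{T}_Q$ I would then pull this deterministic bound out of the expectation, obtaining
\[
\sum_{C\in\mathcal{P}\setminus W_I}\E\big[X_0^2\one\{\overline{K_0},\theta_S\in C,\overline{T}_Q\}\big]\le (Q+n^{-1/2+\varepsilon})^2\sum_{C\in\mathcal{P}\setminus W_I}\Pr[\overline{K_0},\theta_S\in C].
\]
Because the events $\{\theta_S\in C\}$ are disjoint over $C\in\mathcal{P}$, the sum on the right is at most $\Pr[\exists\,C\in\mathcal{P}:\,Y(C)\ne k_0(1\pm\varepsilon_1)]$. This is the crucial reduction: it trades a quantity conditioned on $\theta_S\in C$ for a conditioning-free event about signal counts. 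I would then bound that probability using Lemma~\ref{lem:many-signals-J} applied to the finer partition $\mathcal{J}$ (with a union bound over its cells) together with Corollary~\ref{cor:k-signals-in-L}, which guarantees that if every $J\in\mathcal{J}$ carries $k_m(1\pm\varepsilon_1)$ signals then every $N(C)$ carries $k_0(1\pm\varepsilon_1)$ signals. This yields $O(\exp(-n^{1/20}f(0)))$ for the probability, and hence $O\big((Q+n^{-1/2+\varepsilon})^2\exp(-n^{1/20}f(0))\big)$ for the $\overline{T}_Q$ contribution.

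For $T_Q$ I would simply discard the restrictions to $\overline{K_0}$ and to individual cells: since every $C\in\mathcal{P}$ is contained in $I$,
\[
\sum_{C\in\mathcal{P}\setminus W_I}\E\big[X_0^2\one\{\overline{K_0},\theta_S\in C,T_Q\}\big]\le\E\big[X_0^2\one\{T_Q,\theta_S\in I\}\big],
\]
and the right-hand side is $O(\exp(-Qn))$ by Claim~\ref{claim:all-tail-signals}. Adding the two contributions gives the stated bound, which is $o(1/n^2)$ because both $\exp(-n^{1/20}f(0))$ and $\exp(-Qn)$ decay faster than any power of $n$ while $(Q+n^{-1/2})^2=O(1)$.

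I expect the main obstacle to be the reduction to a conditioning-free signal-count event in the $\overline{T}_Q$ case: one cannot naively condition the Chernoff/union bound on $\theta_S\in C$, since the count $Y(C)$ of signals in $N(C)$ is correlated with the location of the posterior mean. The disjointness of $\{\theta_S\in C\}$ across $C\in\mathcal{P}$ is precisely what lets me bypass this correlation and invoke the unconditional concentration estimate of Corollary~\ref{cor:k-signals-in-L}; everything else is bookkeeping of exponentially small terms.
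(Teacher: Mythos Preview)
Your proposal is correct and follows essentially the same approach as the paper: split according to $T_Q$ versus $\overline{T}_Q$, use the deterministic bound $X_0\le Q+n^{-1/2+\varepsilon}$ on $\overline{T}_Q$ combined with Corollary~\ref{cor:k-signals-in-L} for the probability, and invoke Claim~\ref{claim:all-tail-signals} on $T_Q$. Your use of disjointness of the events $\{\theta_S\in C\}$ to collapse the sum over $C\in\mathcal{P}\setminus W_I$ into a single unconditional probability is a slightly cleaner way to handle the sum than the paper's proof, which bounds a single term and relies (implicitly) on the number of cells being only polynomial against an exponentially small bound; but the underlying idea and the ingredients invoked are identical.
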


    \begin{proof}
    Consider the case where $N(C)$ doesn't have $k_0(1\pm \varepsilon_1)$ signals. Let $A_0$ denote that event.
 By Claim~\ref{claim:all-tail-signals}, when all $|x_i| > Q$ and $\theta_S \in I$, we bound the expected $X^2_0$ by $O(\exp^{-Qn})$.  If there is even a single $|x_i| \le Q$ (denoted by the event $\overline{T}_Q$), then we can bound $X_0^2$ by $(Q + n^{-1/2 +\epsilon})^2$ because $\theta_S \in I = [-n^{-1/2+\epsilon}, n^{-1/2 + \epsilon}]$. By corollary~\ref{cor:k-signals-in-L}, we get that $\Pr[ A_0 \cap \overline{T}_Q \cap \theta_S \in C] \le \Pr[ A_0] \le \exp\left(-\frac{n^{-1/11} f(0)n^{1-a}}{3M}\right)$.

 \begin{align*}
 \E[X_0^2\cdot\one\{\overline{K_0} \text{ and } \theta_S\in C\} ] & \le  \E[X_0^2\cdot\one\{A_0\text{ and } \theta_S\in C\}\cdot\one\{{T}_Q\} ] + \E[X_0^2\cdot\one\{A_0\text{ and } \theta_S\in C\}\cdot\one\{\overline{T}_Q\} ] \\
 &\le  O(\exp^{-Qn}) + (Q + n^{-1/2 +\epsilon})^2\exp\left(-\frac{n^{-1/11} f(0)n^{1-a}}{3M}\right) \\
 &\le o(1/n^2)
 \end{align*}
    \end{proof}

\subsection{Loss of Biased Communication Schemes}\label{app:biased-loss}
In this section we 
that for all sufficiently large $n$ and all $\delta$, $\E[X^2_\delta] \ge \frac{1}{2n^2f(\delta)^2} - o(1/n^2)$. We focus only on $|\delta| \le 2(\log n)^2$ {and $\delta$ such that $f(\delta)\ge n^{-1/100}$}~\footnote{When $f(\delta) \le n^{-1/100}$ we see that $\E[X^2_\delta] \ge \Omega(\frac{n^{1/100}}{n^{2}}) >>\E[X^2_0]$. }.
 
 Recall $\mathcal{P}$ be a partition of $I$ into intervals of length $n^{-b}$. For any $\delta$, we denote $C_\delta = C + \delta$ for any $C \in P$ and $N(C_\delta) = L_\delta\cup R_\delta$, where $L_\delta$ (resp. $R_\delta$) is the neighboring interval of length $n^{-a}$ to the left of $C_\delta$ (resp. to the right of $C_\delta$).

Similar to the unbiased loss we consider the high probability event where all the following desirable properties hold:

\begin{enumerate}
\item $\theta_S(\vec x, y) \in I$ and let $C\in \mathcal{P}$ be the interval with $\theta_S(\vec x, y)$,
\item $C$ is not \emph{weak} (see Definition~\ref{def:weak-posterior}), and
\item there are \emph{sufficiently few} signals in $N(C_\delta) = L_\delta\cup R_\delta$. 
\end{enumerate}

We will show that with high probability all the desirable properties hold.

With this we are ready to bound the signaling loss.
Recall that the signaling loss of a biased communication scheme $\pi$ with $\delta(\pi) = \delta$ is $L(\pi, \delta(\pi)) \ge L(\pi_\delta, \delta) = \alpha^2\E[X^2_\delta]$. Given $\theta_S \in C$, let $K_\delta$ be the event that there are \emph{sufficiently few} signals in $N(C_\delta)$ and there are no signals in $C_\delta$.  We see that,

\begin{eqnarray}
\E[X^2_\delta] & \ge &  {\sum_{C\in \mathcal{P}\setminus W_I}\E[X^2_\delta\cdot\one\{\theta_S\in C, K_\delta\}]}  \notag \\
& \ge & {\sum_{C\in \mathcal{P}\setminus W_I} \Pr[\{\theta_S\in C, K_\delta\}]\cdot \underbrace{\E[X^2_\delta~|\theta_S\in C, K_\delta]}_{\mbox{conditional expectation}} }\label{eq:commit-cost-biased}\\
& \ge & \frac{1}{2n^2f(\delta)^2} - o(1/n^2) \notag
\end{eqnarray}

This is because by Lemma~\ref{lem:commit-cost-delta}  the conditional expectation term in Eq.~(\ref{eq:commit-cost-biased}) is $\E[X^2_\delta~|\theta_S\in C, K]  \ge \frac{1}{2n^2f(\delta)^2}(1 - o(1))$. We note that $ \{ \vec{x} : \forall J_\delta \in \mathcal{J}_\delta \text{ with sufficiently few signals}\}\subset K_\delta$ for all $C \in \mathcal{P}$,\ and by Lemma~\ref{lem:many-signals-J-delta} we see that $ \Pr[\overline{K_\delta}] \le \Pr[\exists J_\delta \in \mathcal{J}_\delta \text{ with too many signals}] \le O(\exp\{f(\delta)n^{1/20}\})$. By Lemma~\ref{lem:weak-posterior-density} we bound the probability of $C$ is weak. Thus, we have, 
\begin{align*}
\sum_{C\in \mathcal{P}\setminus W_I} \Pr[\{\theta_S\in C, K_\delta\}] &\ge \Pr[\forall J_\delta \in \mathcal{J}_\delta \text{ with sufficiently few signals, and } \theta_S \notin W_I] \\
&\ge 1 - O(\exp\{f(\delta)n^{1/20}\}) - O(\exp\{f(\delta)n^{1/55}\})O(n^{4\log n -1}).
\end{align*}

We start by showing a more general version of Claim~\ref{claim:uniform-density}.
\begin{claim}\label{claim:uniform-density-delta}
Given any well-behaved distribution with pdf $f$,  for all $x\in I + \delta$, we have $f(x)= f(\delta) (1 + O(1/n))$.
\end{claim}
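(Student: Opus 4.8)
The plan is to reprove Claim~\ref{claim:uniform-density} essentially verbatim, moving the base point from $0$ to $\delta$; all that is needed is that $f$ remains almost flat on the width-$2n^{-1/2+\varepsilon}$ window $I+\delta$. First I would use the symmetry $f(-u)=f(u)$ to reduce to the case $\delta\ge 0$. Then, fixing $x\in I+\delta$ so that $|x-\delta|\le n^{-1/2+\varepsilon}$, I would apply the mean value theorem to write $f(x)=f(\delta)+(x-\delta)f'(\tilde x)$ for some $\tilde x$ lying between $\delta$ and $x$; in particular $|\tilde x-\delta|\le n^{-1/2+\varepsilon}$ and $|\tilde x|\le|\delta|+n^{-1/2+\varepsilon}$.

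Next I would bound $f'(\tilde x)$. Writing $f'(\tilde x)=g(\tilde x)f(\tilde x)$ and invoking well-behavedness ($|g(u)|\le c_1|u|+c_2$), we get $|f'(\tilde x)|\le(c_1|\tilde x|+c_2)f(\tilde x)$. The only new ingredient relative to the original claim is that $|g|$ is no longer bounded by an absolute constant on the relevant range; but since throughout this section we restrict to $|\delta|\le 2(\log n)^2$, we have $|\tilde x|=O((\log n)^2)$ and hence $|g(\tilde x)|=O((\log n)^2)$, which is still negligible against the factor $|x-\delta|\le n^{-1/2+\varepsilon}$. To close the argument I would compare $f(\tilde x)$ with $f(\delta)$ by integrating $g=(\log f)'$ over the length-$\le n^{-1/2+\varepsilon}$ interval joining $\delta$ and $\tilde x$ (which lies inside $[-(|\delta|+n^{-1/2+\varepsilon}),|\delta|+n^{-1/2+\varepsilon}]$), giving $|\log f(\tilde x)-\log f(\delta)|=O((\log n)^2 n^{-1/2+\varepsilon})=o(1)$, so $f(\tilde x)=\Theta(f(\delta))$. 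Putting the pieces together, $|f(x)-f(\delta)|=|x-\delta|\,|f'(\tilde x)|=O\big((\log n)^2 n^{-1/2+\varepsilon}\big)f(\delta)$, which yields $f(x)=f(\delta)(1+o(1))$ — the same order of error as in Claim~\ref{claim:uniform-density}, up to the harmless poly-logarithmic factor coming from the shift.

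I do not expect a genuine obstacle: this is a routine generalization of Claim~\ref{claim:uniform-density}. The one point that needs care is tracking how large $|g|$ can become on $I+\delta$ and checking that it stays sub-polynomial in $n$, which is precisely why the surrounding section restricts attention to $|\delta|\le 2(\log n)^2$ (the further standing assumption $f(\delta)\ge n^{-1/100}$ is not used for this particular claim).
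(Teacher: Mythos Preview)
Your proposal is correct and follows essentially the same route as the paper: apply the mean value theorem, bound $|f'(\tilde x)|=|g(\tilde x)|f(\tilde x)$ with $|g(\tilde x)|=O((\log n)^2)$ thanks to the standing restriction $|\delta|\le 2(\log n)^2$, and conclude $f(x)=f(\delta)(1+o(1))$. The one technical difference is in how $f(\tilde x)$ is tied to $f(\delta)$: the paper invokes single-peakedness to sandwich $f(\tilde x)$ between $f(x)$ and $f(\delta)$ and then performs a short case split, whereas you integrate $(\log f)'=g$ over the length-$n^{-1/2+\varepsilon}$ interval from $\delta$ to $\tilde x$. Your variant avoids the case analysis and in fact gives the sharper error $O\bigl((\log n)^2 n^{-1/2+\varepsilon}\bigr)$, versus the paper's cruder $O(n^{-1/4+\varepsilon})$ obtained by bounding $(\log n)^{2m}\le n^{1/4}$; either suffices for the later use of the claim.
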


\begin{proof}
By mean value theorem we see that $f(x)= f(\delta) + (x-\delta) f'(\tilde{x})$ for some $\tilde{x} \in [\delta, x]$.
By our assumption on $g'$ we get $|g(x)| \le c x^m$ for some constant $c>0$. This implies $|f'(x)| \le c |x|^m f(x)$ for all $x$.   Since $|\delta| < 2( \log n)^2$ and $x \in I+\delta$, we have $|x|\le 2 (\log n)^2+ n^{-1/2 + \epsilon}$.  By mean value theorem we have,
\begin{align*}
|f(x) - f(\delta)| & = |(x-\delta)f'(\tilde{x})| \\
& \le |x - \delta| c |\tilde{x}|^m f(\tilde{x}) \qquad (\text{Since }|f'(\tilde{x})| \le c|\tilde{x}|^{m}f(\tilde{x}) )\\
& \le c (n^{-1/2 + \epsilon}) (2(\log n)^2 +  n^{-1/2 + \epsilon} )^m f(\tilde{x}) \\
& \le c' (n^{-1/2 + \epsilon}) n^{1/4}f(\tilde{x}) \qquad (\text{ Since } (\log n)^{2m} = o(n^{1/4}))\\
& \le f(\delta) + c' n^{-1/4 + \epsilon}f(\tilde{x})
\end{align*}

 Without loss of generality, we assume that $x,\delta > 0$, since $f$ is symmetric. Thus, we get $f(\tilde{x}_i)$ is between $f(x)$ and $f(\delta)$, because $f$ is single-peaked.

 Suppose $f(x)\ge f(\tilde{x}) \ge f(\delta)$, then $f(x) \le f (\delta) + c' n^{-1/4 + \epsilon}f(\tilde{x}) \le f (\delta) + c' n^{-1/4 + \epsilon}f(x)$. So we get $f(x)(1 - c' n^{-1/4 + \epsilon}) \le f(\delta)$. Thus, $f(x) \le f(\delta)\frac{1}{1 - c' n^{-1/4 + \epsilon}}  \le  f(\delta) ( 1 + c''n^{-1/4 + \epsilon} )$ for some constant $c'' >0$.

Similarly, if $f(x)\le f(\tilde{x}) \le f(\delta)$, then
$f(x) \ge f(\delta) - c' n^{-1/4 + \epsilon}f(\tilde{x}) \ge  f(\delta) - c' n^{-1/4 + \epsilon}f({\delta})$. Thus, we get, $f(x) \ge f(\delta)( 1 -O( n^{-1/4 + \epsilon} ) )$.

Therefore, for $\delta < 2(\log n)^2$ and all $x\in I+\delta$ we have $f(x)= f(\delta)(1 - O(n^{-\frac{1}{4} + \epsilon}))$.
\end{proof}

\begin{claim}
Given any interval $A\subset I+\delta$ of length $\ell$, the expected number signals in $A$ is ${n\ell}f(\delta)(1 - O(n^{-1/4}))$. Let $Y(A)$ be the number of signals in $A$. For any $0<\varepsilon_1<1$, we have $$\Pr[ |Y(A) - \E[Y(A)]| \ge \varepsilon_1)\E[Y(A)] ] \le \exp\left(-\frac{\varepsilon_1^2 f(\delta)n\ell}{3}\right).$$
\end{claim}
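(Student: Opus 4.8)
The plan is to mimic the proof of Claim~\ref{claim:k-signals-chernoff} almost verbatim, substituting the near-uniformity of $f$ on the shifted window $I+\delta$ (Claim~\ref{claim:uniform-density-delta}) for the near-uniformity on $I$ (Claim~\ref{claim:uniform-density}). Write $Y_i = \one\{x_i \in A\}$ for each $i \in [n]$, so that $Y(A) = \sum_{i=1}^n Y_i$. Since the anecdotes $x_1,\dots,x_n$ are drawn independently from $\theta + F$ (and, as throughout this section, we may fix $\theta = 0$), the $Y_i$ are i.i.d.\ Bernoulli random variables with $\Pr[Y_i = 1] = \Pr[x_i \in A] = \int_A f(x)\,dx$.

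First I would compute the mean. Because $A \subset I+\delta$ and we are restricting attention to $|\delta| \le 2(\log n)^2$, Claim~\ref{claim:uniform-density-delta} gives $f(x) = f(\delta)\bigl(1 + O(n^{-1/4+\epsilon})\bigr)$ uniformly for $x \in A$. Hence $\Pr[x_i \in A] = \int_A f(x)\,dx = f(\delta)\bigl(1 + O(n^{-1/4+\epsilon})\bigr)\ell$, and by linearity of expectation $\E[Y(A)] = n\ell f(\delta)\bigl(1 + O(n^{-1/4})\bigr)$ (the exponent $1/4$ absorbing the $\epsilon$), which is the first assertion of the claim. Then I would apply the two-sided multiplicative Chernoff bound to the sum of i.i.d.\ Bernoullis $Y(A)$: for $0 < \varepsilon_1 < 1$,
\[
\Pr\bigl[\, |Y(A) - \E[Y(A)]| \ge \varepsilon_1 \E[Y(A)] \,\bigr] \;\le\; 2\exp\!\left(-\frac{\varepsilon_1^2 \, \E[Y(A)]}{3}\right).
\]
Substituting the lower bound $\E[Y(A)] \ge n\ell f(\delta)\bigl(1 - O(n^{-1/4})\bigr)$ and absorbing both the leading factor $2$ and the $(1 - O(n^{-1/4}))$ correction into the constant in the exponent (which is precisely why the statement quotes the loose constant $1/3$ rather than the sharper $1/2$ available from the one-sided bound) yields $\Pr[\,|Y(A) - \E[Y(A)]| \ge \varepsilon_1 \E[Y(A)]\,] \le \exp\bigl(-\varepsilon_1^2 f(\delta) n\ell / 3\bigr)$, as required.

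There is no real obstacle in this argument; the only point that deserves a line of justification is that $A \subset I+\delta$ together with the standing restriction $|\delta| \le 2(\log n)^2$ places $A$ inside the regime in which Claim~\ref{claim:uniform-density-delta} is valid, so that the density is genuinely within a $(1 \pm O(n^{-1/4}))$ multiplicative factor of $f(\delta)$ throughout $A$. Once that is checked, the remainder is a direct transcription of the proof of Claim~\ref{claim:k-signals-chernoff}.
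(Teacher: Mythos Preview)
Your proposal is correct and follows essentially the same approach as the paper: define the indicator variables $Y_i$, use Claim~\ref{claim:uniform-density-delta} to approximate $\Pr[x_i\in A]$ by $f(\delta)\ell(1+O(n^{-1/4+\epsilon}))$, compute $\E[Y(A)]$, and apply the multiplicative Chernoff bound. If anything, your write-up is slightly more careful than the paper's, which only displays the upper-tail inequality despite the two-sided statement of the claim.
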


\begin{proof}

Let $Y_i = 1$ if $x_i \in A$ and $0$ otherwise. So we have, $\sum_{i=1}^n Y_i = Y(A)$. By Claim~\ref{claim:uniform-density-delta} we have $f(x) = f(\delta)(1 + O(n^{-1/4+\epsilon}))$ for all $x\in I+\delta$. Therefore, we have $\Pr[x_i \in A] = \int_A f(x) dx =f(\delta)(1 + O(n^{-1/4+\epsilon}))\int_A dx = f(\delta)(1 + O(n^{-1/4+\epsilon}))\ell $. Note that, $Y_i$ are i.i.d.~random variables, and $\E[Y(A)] =f(\delta)(1 + O(n^{-1/4+\epsilon}))n\ell$. By using Chernoff bound we get
$$\Pr[ Y(A) \ge (1+\varepsilon)\E[Y(A)] ] \le \exp\left(-\frac{\varepsilon_1^2 f(\delta)n\ell(1 +O(n^{-1/4+\epsilon}))}{3}\right).$$
\end{proof}

We again partition $I+\delta$ into intervals $J_\delta$ of size $n^{-a}/M$. Exactly following Lemma~\ref{lem:many-signals-J} we see that \emph{all} $J_\delta\subset I+\delta$ have $(1\pm \varepsilon)f(\delta)n^{1-a}(1 + O(n^{-1/4+\epsilon})$ many signals in $L_\delta$ and $R_\delta$.

\begin{lemma}\label{lem:many-signals-J-delta} 
Let $k_m = f(\delta)n^{1-a}/M$. The probability that there is a $J_\delta \in \mathcal{J}_\delta$ with more than $(1+\varepsilon)k_m$ signals (or less than $(1-\varepsilon)k_m$ is $O(\exp\{\varepsilon^2f(\delta)n^{1-a}/3M\})$.
\end{lemma}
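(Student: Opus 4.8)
The plan is to prove Lemma~\ref{lem:many-signals-J-delta} by exactly the union-bound-over-subintervals argument used for Lemma~\ref{lem:many-signals-J}, the only substantive change being that the ambient density is now $f(\delta)$ in place of $f(0)$. First I would record the consequences of the standing restrictions $|\delta| \le 2(\log n)^2$ and $f(\delta) \ge n^{-1/100}$: by Claim~\ref{claim:uniform-density-delta} we have $f(x) = f(\delta)(1 + O(n^{-1/4+\epsilon}))$ uniformly over $x \in I + \delta$, so every subinterval $J_\delta \in \mathcal{J}_\delta$ (of length $\ell = n^{-a}/M$) contains an expected number of signals equal to $n\ell f(\delta)(1 + O(n^{-1/4+\epsilon})) = k_m(1+o(1))$, where $k_m = f(\delta)n^{1-a}/M$.

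Next I would apply the $\delta$-analogue of Claim~\ref{claim:k-signals-chernoff} (the Chernoff bound stated immediately before this lemma) to a single $J_\delta$: for $0 < \varepsilon_1 < 1$,
\[
\Pr\big[\,|Y(J_\delta) - \E Y(J_\delta)| \ge \varepsilon_1 \E Y(J_\delta)\,\big] \;\le\; \exp\!\Big(-\tfrac{\varepsilon_1^2 f(\delta) n^{1-a}}{3M}\Big).
\]
Since $\E Y(J_\delta) = k_m(1+o(1))$ and $\varepsilon = \Theta(n^{-1/22})$ dominates the $O(n^{-1/4+\epsilon})$ discrepancy between $\E Y(J_\delta)$ and $k_m$, the event $|Y(J_\delta) - k_m| \ge \varepsilon k_m$ is contained, for $n$ large, in the event above with a slightly adjusted $\varepsilon_1 = \Theta(\varepsilon)$, so the same exponential bound (up to the constant in the exponent) controls it. A union bound over $\mathcal{J}_\delta$ then finishes: the window $I+\delta$ has length $2n^{-1/2+\epsilon}$, hence $|\mathcal{J}_\delta| = 2n^{-1/2+\epsilon}/(n^{-a}/M) = 2Mn^{a-1/2+\epsilon}$, a polynomial in $n$, and therefore
\[
\Pr\big[\exists J_\delta \in \mathcal{J}_\delta :\ |Y(J_\delta) - k_m| \ge \varepsilon k_m\big] \;\le\; 2M n^{a-1/2+\epsilon}\exp\!\Big(-\tfrac{\varepsilon^2 f(\delta) n^{1-a}}{3M}\Big) \;=\; O\!\Big(\exp\!\big(-\tfrac{\varepsilon^2 f(\delta) n^{1-a}}{3M}\big)\Big),
\]
which is the asserted bound.

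The main obstacle — really the only point requiring care — is verifying that the exponential concentration term genuinely swamps the polynomial loss $2Mn^{a-1/2+\epsilon}$ from the union bound over $|\mathcal{J}_\delta|$ intervals. This is exactly where the constraints $a < 1$, $M = n^{1/22}$, $\varepsilon = \Theta(n^{-1/22})$, and the restriction $f(\delta) \ge n^{-1/100}$ are used: plugging these in makes the exponent at most $-\tfrac{1}{3}n^{1-a-3/22-1/100}$, which for $a = 4/5$ (indeed for any $a < 19/22 - 1/100$) leaves a strictly positive power of $n$ and hence dominates the polynomial prefactor. Everything else is a direct transcription of the proof of Lemma~\ref{lem:many-signals-J} with $f(0)$ replaced by $f(\delta)$ and the window $I$ replaced by $I+\delta$.
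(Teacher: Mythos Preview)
Your proposal is correct and takes essentially the same approach as the paper. In fact the paper does not give a separate proof of this lemma at all: it simply states ``Exactly following Lemma~\ref{lem:many-signals-J}'' with $f(\delta)$ in place of $f(0)$ and $I+\delta$ in place of $I$, which is precisely the Chernoff-plus-union-bound argument you have spelled out (using Claim~\ref{claim:uniform-density-delta} in place of Claim~\ref{claim:uniform-density}).
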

 
\begin{corollary}\label{cor:k-signals-in-L-delta}
Let $k_{\delta} =  f (\delta) n^{1 - a}$, let $k^*= (1 + \varepsilon_1)k_{\delta}$, and $k'=(1-\varepsilon_1)k_{\delta}(1-2/M)$. Let $Y(N_\delta)$ be the number signals in interval $N(C_\delta)$ (of length $n^{-a}$). $\Pr[ ( Y(N_\delta)  \notin [k', k^*] ] \le \exp\left(-\frac{\varepsilon_1^2 f(\delta)n^{1-a}}{3M}\right).$
\end{corollary}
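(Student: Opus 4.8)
The plan is to obtain this as a direct corollary of Lemma~\ref{lem:many-signals-J-delta}, in exactly the way Corollary~\ref{cor:k-signals-in-L} follows from Lemma~\ref{lem:many-signals-J} in the unbiased case. Fix $C \in \mathcal{P}$, and recall that $\mathcal{J}_\delta$ denotes the partition of $I+\delta$ into intervals of length $n^{-a}/M$ with per-interval target count $k_m = f(\delta)n^{1-a}/M$. The first step is the combinatorial observation that $N(C_\delta)$, an interval of length $n^{-a}$, contains at least $M-2$ whole sub-intervals of $\mathcal{J}_\delta$ and is covered by at most $M$ of them, the ``$-2$'' absorbing the at most two sub-intervals that straddle the endpoints of $N(C_\delta)$. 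Hence, on the purely combinatorial event that every $J_\delta \in \mathcal{J}_\delta$ contains between $(1-\varepsilon_1)k_m$ and $(1+\varepsilon_1)k_m$ signals, summing over the sub-intervals inside $N(C_\delta)$ gives $Y(N_\delta) \ge (M-2)(1-\varepsilon_1)k_m = (1-\varepsilon_1)k_\delta(1-2/M) = k'$, and summing over a cover gives $Y(N_\delta) \le M(1+\varepsilon_1)k_m = (1+\varepsilon_1)k_\delta = k^*$; these equalities are exact, which is why the window in the statement is the slightly asymmetric $[k',k^*]$.

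The second step converts this into the probability bound: by the above inclusion of events, $\{Y(N_\delta) \notin [k',k^*]\} \subseteq \{\exists J_\delta \in \mathcal{J}_\delta : |Y(J_\delta) - k_m| \ge \varepsilon_1 k_m\}$, and Lemma~\ref{lem:many-signals-J-delta} bounds the probability of the latter by $O(\exp(-\varepsilon_1^2 f(\delta)n^{1-a}/3M))$. Since $1-a > 0$, $\varepsilon_1 = \Theta(n^{-1/22})$ and $M = n^{1/22}$, the exponent is a growing positive power of $n$, so the polynomial prefactor hidden in the $O(\cdot)$ is absorbed and the bound may be written as $\exp(-\varepsilon_1^2 f(\delta)n^{1-a}/3M)$ (after, if one insists, an arbitrarily small weakening of the constant), which is exactly the claim. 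Alternatively one can bypass $\mathcal{J}_\delta$ and apply the Chernoff estimate of the claim preceding Lemma~\ref{lem:many-signals-J-delta} directly to $A = N(C_\delta)$: by Claim~\ref{claim:uniform-density-delta} one has $\E[Y(N_\delta)] = f(\delta)n^{1-a}(1+O(n^{-1/4+\varepsilon}))$, so $[k',k^*]$ contains a window $[(1-\Theta(\varepsilon_1))\E[Y(N_\delta)],(1+\Theta(\varepsilon_1))\E[Y(N_\delta)]]$ around the mean, and the Chernoff tail for that window is $\exp(-\Omega(\varepsilon_1^2 f(\delta)n^{1-a}))$, again implying the stated bound.

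There is no genuinely hard step; the content is bookkeeping, and the main thing to get right is step one. One must be careful that the boundary sub-intervals cost exactly the factor $(M-2)/M$ and hence produce the $(1-2/M)$ in $k'$, and that each inequality points in the correct direction ($N(C_\delta)$ contains $\ge M-2$ sub-intervals for the lower bound, is contained in $\le M$ for the upper bound). One should also note that Claim~\ref{claim:uniform-density-delta} and Lemma~\ref{lem:many-signals-J-delta} are being invoked within the reduced regime fixed at the start of Appendix~\ref{app:biased-loss}, namely $|\delta| \le 2(\log n)^2$ and $f(\delta) \ge n^{-1/100}$, with the exponent $\varepsilon$ in $I+\delta$ small enough that $n^{-1/4+\varepsilon}$ and $2/M$ are $O(\varepsilon_1)$; this is what licenses treating the density as essentially constant on $I+\delta$. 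Beyond that the corollary is immediate.
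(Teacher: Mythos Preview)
Your proposal is correct and matches the paper's intended approach: the paper states this corollary without an explicit proof, but the surrounding text (``Exactly following Lemma~\ref{lem:many-signals-J}\ldots'') and the proof of the parallel Corollary~\ref{cor:k-signals-in-L} make clear that the argument is precisely the one you give---cover $N(C_\delta)$ by at most $M$ and at least $M-2$ sub-intervals of $\mathcal{J}_\delta$, invoke Lemma~\ref{lem:many-signals-J-delta} on each, and sum. Your bookkeeping on the $(1-2/M)$ factor and the handling of the polynomial prefactor are exactly right.
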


  Let $K_\delta$ denote the event that there are at most $(1+\varepsilon_1)k_\delta$ and at least $(1-\varepsilon_1)k_\delta(1-2/M)$ many signals in $N(C_\delta)$, and there are no signals in $C_\delta$. 

\begin{lemma}{\label{lem:exponential-lower}}
The distribution of the random variable $X_\delta$ conditioned on $K_\delta,\theta_S\in C$  stochastically dominates (up to a factor of $(1 - o(1))$) the exponential distribution with {$\lambda= 2nf(\delta)(1 + O(n^{-1/10}))$}. That is, for $d < n^{-9/10}$, $\Pr[X_{\delta} > d \wedge K_\delta~|\theta_S\in C] \ge \exp[- \lambda d]  \frac{ \Pr[\theta_S\in C\setminus E \text{ and } K_\delta ]}{\Pr[\theta_S \in C ] }$.
\end{lemma}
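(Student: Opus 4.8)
The plan is to run the argument of Lemma~\ref{lem:exponential-upper} in reverse, turning every upper bound into a lower bound. Fix $C \in \mathcal P \setminus W_I$ and $d < n^{-9/10}$, and work conditionally on $K_\delta$. On $K_\delta$ there are no signals in $C_\delta$, and since $\theta_S(\vec x, y) + \delta \in C_\delta$ (a window of width $n^{-b}$) while $d \ll n^{-a}$, the ball $B_d := (\theta_S+\delta-d,\ \theta_S+\delta+d)$ lies inside $C_\delta \cup N(C_\delta)$; hence every signal outside $N(C_\delta)$ automatically avoids $B_d$, so $\{X_\delta > d\}$ is exactly the event that the signals in $N(C_\delta)$ all miss $B_d$. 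To remove the dependence of $B_d$ on the realized signals, replace it by the \emph{fixed} interval $\hat B_d$ of width $2d+n^{-b}$ centred at the midpoint of $C_\delta$: since $\theta_S+\delta \in C_\delta$ we have $B_d \subseteq \hat B_d$, so $\{\text{all signals in } N(C_\delta) \text{ miss } \hat B_d\} \subseteq \{X_\delta > d\}$, and (for $d \gg n^{-b}$; the remaining range $d \lesssim n^{-b}$ contributes only $O(n^{-2b}) = o(1/n^2)$ to $\E[X_\delta^2]$ and is absorbed into the $(1-o(1))$ slack) we have $|\hat B_d| = 2d(1+o(1))$.

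Next, as in Lemma~\ref{lem:exponential-upper}, decompose over the set $A$ of indices of signals that land in $N(C_\delta)$, so that $|A| \in [k',k^*]$ on $K_\delta$, and condition on the event $K_A$ that these are exactly the signals in $N(C_\delta)$ (and none lie in $C_\delta$). Under $K_A$ the signals $\{x_i\}_{i \in A}$ are i.i.d.\ with density $\propto f$ on $N(C_\delta)$, independent of $\theta$, $y$, and the remaining signals; by Claim~\ref{claim:uniform-density-delta} this density is uniform up to a $(1+o(1))$ factor, so each such signal avoids $\hat B_d$ with probability at least $1 - \big(|\hat B_d|/|N(C_\delta)|\big)(1+o(1))$. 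The main obstacle is that the conditioning event $\{\theta_S \in C\}$ also depends on the signals in $A$, so avoidance and $\{\theta_S \in C\}$ are not independent. We handle this exactly as in Step~2 of the overview, via the posterior-mean sensitivity bound (Corollary~\ref{cor:posterior-change}, built on Lemma~\ref{lem:posterior-correlation-lower}, and using the well-behaved assumption $|g'|\le c_1$): re-placing all $|A|$ signals anywhere inside $N(C_\delta)$ shifts $\theta_S$ by at most $\rho = O(|A|\,n^{-1-a}) = O(n^{-2a})$, which, since $b < 2a$, is $o(|C|)$. Hence, writing $E$ for the collar of width $\rho$ around the endpoints of $C$, the event $\{\theta_S \in C \setminus E\}$ forces $\theta_S$ to stay in $C$ for \emph{every} placement of the signals of $A$ inside $N(C_\delta)$; this is an event measurable with respect to $(\theta,y,\{x_i\}_{i\notin A})$ alone, so combining it with the conditional independence above gives
\[
\Pr[X_\delta > d,\ K_A,\ \theta_S \in C] \ge \Pr[\theta_S \in C\setminus E,\ K_A]\cdot\Big(1 - \tfrac{2d}{|N(C_\delta)|}(1+o(1))\Big)^{|A|}.
\]

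To conclude, bound $|A| \le k^* = (1+O(n^{-1/22}))\,k_\delta$ and note that $2d/|N(C_\delta)| = O(n^{-1/10}) = o(1)$, so that $\big(1-\tfrac{2d}{|N(C_\delta)|}(1+o(1))\big)^{|A|} \ge \exp\!\big(-2nf(\delta)(1+o(1))\,d\big) = e^{-\lambda d}$, the $O(n^{-1/22})$ slack in $k^*$, the $O(n^{-1/10})$ slack from $2d/|N(C_\delta)|$, and the near-uniformity corrections all being absorbed into $\lambda = 2nf(\delta)(1+O(n^{-1/10}))$. Summing the displayed inequality over all admissible $A$ collapses $\sum_A \Pr[\theta_S \in C\setminus E, K_A]$ to $\Pr[\theta_S\in C\setminus E, K_\delta]$ and $\sum_A \Pr[X_\delta > d, K_A, \theta_S\in C]$ to $\Pr[X_\delta > d, K_\delta, \theta_S\in C]$; dividing by $\Pr[K_\delta,\theta_S\in C] \le \Pr[\theta_S \in C]$ gives
\[
\Pr[X_\delta > d \mid K_\delta,\ \theta_S\in C] \ge e^{-\lambda d}\,\frac{\Pr[\theta_S\in C\setminus E,\ K_\delta]}{\Pr[\theta_S\in C]},
\]
which is the claimed bound. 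Finally, this is genuinely stochastic domination of an exponential up to $(1-o(1))$: since $C \notin W_I$, the posterior-mean density $\tau$ is near-uniform on $C$ (Claim~\ref{claim:uniform-posterior}), whence $\Pr[\theta_S \in E] \le (|E|/|C|)(1+o(1))\,\Pr[\theta_S\in C] = o(1)\,\Pr[\theta_S\in C]$, and $\Pr[K_\delta \mid \theta_S \in C] = 1-o(1)$ by the concentration estimate (Corollary~\ref{cor:k-signals-in-L-delta}), so the extra factor above is $1-o(1)$.
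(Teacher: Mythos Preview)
Your argument is correct and follows essentially the same route as the paper: decompose over the index set $A$ of signals landing in $N(C_\delta)$, use the posterior-mean sensitivity bound (Corollary~\ref{cor:posterior-change}/Lemma~\ref{lem:posterior-correlation-lower}) to decouple the avoidance event from $\{\theta_S\in C\}$ via the collar $E$, and then bound the avoidance probability using near-uniformity of $f$ on $N(C_\delta)$. Your explicit replacement of the $\theta_S$-dependent ball $B_d$ by the fixed enlargement $\hat B_d$ is a clean way to handle a dependence the paper glosses over; just be careful that the sentence ``this is an event measurable with respect to $(\theta,y,\{x_i\}_{i\notin A})$ alone'' refers to the derived event ``$\theta_S\in C$ for every placement of the $A$-signals'' (which is indeed $\sigma(\{x_i\}_{i\notin A})$-measurable and contains $\{\theta_S\in C\setminus E\}\cap K_A$), not to $\{\theta_S\in C\setminus E\}$ itself.
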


\begin{proof}
Let $k^* = (1 + \varepsilon_1)k_\delta$, and $k' = (1-\varepsilon_1)k_\delta(1-2/M)$. For all $A\subset [n]$ such that $k'\le |A| \le k^*$, define $K_A$ to be event where $x_i \in N(C_\delta)$ iff $i\in A$, and there are no signals in $C_\delta$ .

For all $d<n^{-a}+n^{-b}$, let $B_d$ denote the interval of length $2d$ centered around $\theta_S + \delta$.

We will use the following results/facts:

\begin{enumerate}
    \item $\Pr[ x_A \notin B_d | K_A, \theta_S \in C] =\Pr[x_A \notin B_d | K_A ]\cdot \frac{\Pr[\theta_S \in C | x_A \notin B_d ,K_A]}{\Pr[\theta_S\in C| K_A] }$ (by Bayes rule)
    \item Since $f$ is near uniform in $N(C_\delta)$ we have $\Pr[x_A \notin B_d | K_A ] \approx (1-\frac{d}{n^{-a}})^{|A|}$ (by Claim~\ref{claim:uniform-density-delta})
    \item Since redrawing $x_A \in N(C)$ doesn't change $\theta_S$ much, we have $\Pr[\theta_S \in C | x_A \notin B_d ,K_A] \ge \Pr[\theta_S \in C\setminus E | K_A]$ (by Lemma~\ref{lem:posterior-correlation-upper})
    \item $\Pr[K_A |\theta_S \in C]\cdot \frac{\Pr[\theta_S \in C\setminus E | K_A]}{\Pr[\theta_S\in C| K_A] } = \frac{\Pr[\theta_S\in C\setminus E \wedge K_A]}{\Pr[\theta_S \in C]}$ (by Bayes rule)
\end{enumerate}

Thus, we get,
\begin{align*}
    \Pr[X_\delta > d \cdot\one{(K_\delta)} | \theta_S \in C] &= \sum_{A : |A|\in (k',k^*)}\Pr[X_0 > d \text{ and } K_A| \theta_S \in C] \\
    & = \sum_{A : |A|\in (k',k^*)} \Pr[K_A| \theta_S \in C] \Pr[ x_i \notin B_d \forall i\in A| K_A, \theta_S \in C] \\
    & = \sum_{A : |A| \in (k',k^*)} \Pr[K_A| \theta_S \in C] \Pr[x_i \notin B_d \forall i\notin A | K_A ]\cdot \frac{\Pr[\theta_S \in C | x_A \notin B_d ,K_A]}{\Pr[\theta_S\in C| K_A] }\\
    & \ge \left(1-\frac{d}{n^{-a}}(1 + O(n^{-1/4}))\right)^{k^*}  \sum_{A : |A| \in (k',k^*)} \Pr[K_A| \theta_S \in C] \cdot \frac{\Pr[\theta_S \in C \setminus E| K_A]}{\Pr[\theta_S\in C| K_A] }\\
    &\ge \left(1-\frac{d}{n^{-a}}(1 + O(n^{-1/4}))\right)^{k^*}  \sum_{A : |A| \in (k',k^*)} \frac{\Pr[\theta_S\in C\setminus E \wedge K_A]}{\Pr[\theta_S \in C]}\\
    &\ge \left(1-\frac{d}{n^{-a}}(1 + O(n^{-1/4}))\right)^{k^*}  \frac{\Pr[\theta_S\in C\setminus E \wedge K_\delta]}{\Pr[\theta_S \in C]}
\end{align*}

We bound $ \left( 1 - \frac{2d}{n^{-a}}\right)^{f(\delta)(1+\varepsilon_1)n^{1-a}}$ by observing that $(1 - x) \ge e^{-x-{x^2}} $ for $x < 1/2$. We will choose of $\varepsilon_1 = O(n^{-1/20})$ and consider $d \le n^{- a -1/20} = n^{-17/20}$, this gives us $\varepsilon_3 = O(n^{-1/20})$.
\begin{align*}
\Pr[X_{(\delta)}\cdot\one(K_\delta)>d|\theta_S \in C]
& \ge \left(1 - \frac{2d}{n^{-a}} (1 + O(1/n^{1/4}))\right)^{k^*}  \frac{ \Pr[\theta_S\in C\setminus E \text{ and } K_\delta ]}{\Pr[\theta_S \in C ] } \\
& \ge \frac{ \Pr[\theta_S\in C\setminus E \wedge K_\delta ]}{\Pr[\theta_S \in C ] }\exp(-2dn(1 + \varepsilon_2)f(\delta))\exp(-{(2d)^2n^{1 + a}f(\delta)(1+\varepsilon_2)})\\
& \ge  \frac{ \Pr[\theta_S\in C\setminus E \wedge K_\delta ]}{\Pr[\theta_S \in C ] } \exp(-2dn(1 + \varepsilon_3)f(\delta)) \qquad (\text{For } 2d < n^{-a}\varepsilon_3)
\end{align*}

\end{proof}

Finally, for each $C \in \mathcal{P}\setminus W_I$, we bound the loss $\E[X_\delta^2 \cdot\one\{{K}_\delta , \theta_S\in C\} ] $.

  \begin{lemma} \label{lem:commit-cost-delta}
   Let ${K}_\delta$ be the event such that $(1 - \varepsilon_1)(1-2/M) k_\delta \le Y(L_\delta) \le (1+\varepsilon_1)k_\delta$, and no signals in $C_\delta$. Then for all $C \in \mathcal{P}\setminus W_I$,
   
   \begin{enumerate}
   \item for $f(\delta) >n^{-1/100}$, we have $\E[X_\delta^2 \cdot\one\{{K}_\delta , \theta_S\in C\} ] \ge  \frac{1}{2n^2f(\delta)^2} (1 - o(1)) { \Pr[\theta_S\in C\setminus E \text{ and } K_\delta ]}$,
   \item for $f(\delta) \le n^{-1/100}$, we have $\E[X_\delta^2] \ge c_5 \frac{n^{1/100}}{n^{2}}$.
   \end{enumerate}
   
    \end{lemma}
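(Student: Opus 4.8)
The plan is to handle the two regimes of the lemma separately. Essentially all of the work sits in part~1, the generic case $f(\delta)\ge n^{-1/100}$, which is a truncated-integral computation layered on top of the stochastic-dominance estimate of Lemma~\ref{lem:exponential-lower}; part~2, the leftover case $f(\delta)\le n^{-1/100}$, needs only a crude direct bound whose sole purpose is to exceed the unbiased loss $\Theta(1/n^2)$ and thereby keep the downstream comparison in Proposition~\ref{prop:unbiased-biased-bounds} intact.

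For part~1, fix $C\in\mathcal{P}\setminus W_I$. First I would expand the indicator-weighted second moment by the layer-cake formula, pulling out the factor $\Pr[\theta_S\in C]$:
\begin{align*}
\E[X_\delta^2\,\one\{K_\delta,\theta_S\in C\}]
&= \Pr[\theta_S\in C]\int_0^\infty 2d\,\Pr\!\left[X_\delta\one(K_\delta)>d\mid\theta_S\in C\right]dd \\
&\ge \Pr[\theta_S\in C]\int_0^{n^{-9/10}} 2d\,\Pr\!\left[X_\delta\one(K_\delta)>d\mid\theta_S\in C\right]dd,
\end{align*}
where the inequality simply discards the tail of a nonnegative integrand. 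Now substitute the bound actually produced in the proof of Lemma~\ref{lem:exponential-lower}, namely $\Pr[X_\delta\one(K_\delta)>d\mid\theta_S\in C]\ge e^{-\lambda d}\,\Pr[\theta_S\in C\setminus E,\,K_\delta]/\Pr[\theta_S\in C]$ with $\lambda=2nf(\delta)(1+O(n^{-1/10}))$. The $\Pr[\theta_S\in C]$ factors cancel, leaving $\Pr[\theta_S\in C\setminus E,\,K_\delta]\int_0^{n^{-9/10}}2d\,e^{-\lambda d}\,dd$. I would then evaluate the integral: $\int_0^\infty 2d\,e^{-\lambda d}\,dd=2/\lambda^2$, and the discarded tail is $\tfrac{2}{\lambda^2}e^{-\lambda n^{-9/10}}(\lambda n^{-9/10}+1)=o(1/\lambda^2)$, because $f(\delta)\ge n^{-1/100}$ forces $\lambda n^{-9/10}=\Theta(n^{9/100})\to\infty$. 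Hence the truncated integral equals $\tfrac{2}{\lambda^2}(1-o(1))=\tfrac{1}{2n^2 f(\delta)^2}(1-o(1))$, using $\lambda^2=4n^2 f(\delta)^2(1+o(1))$, and we obtain precisely $\E[X_\delta^2\,\one\{K_\delta,\theta_S\in C\}]\ge \tfrac{1}{2n^2 f(\delta)^2}(1-o(1))\,\Pr[\theta_S\in C\setminus E,\,K_\delta]$.

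For part~2, when $f(\delta)\le n^{-1/100}$ the target $\theta_S(\vec x,y)+\delta$ lies in a region of very low signal density, so I would lower-bound $\E[X_\delta^2]$ by $d^2\Pr[\text{no signal within distance }d\text{ of the target}]$ for a suitable $d$. Conditioning on the high-probability event $\theta_S\in I$ (Lemma~\ref{lem:thetaS-notin-I}), Claim~\ref{claim:uniform-density-delta} and single-peakedness give that the expected number of signals within distance $d$ of the target is $O(n\,d\,f(\delta))$ as long as the window stays inside $I+\delta$; taking $d$ at the scale where this count is a constant leaves the window signal-free with constant probability, so $\E[X_\delta^2]=\Omega\!\left(1/(n^2 f(\delta)^2)\right)$, which under $f(\delta)\le n^{-1/100}$ is polynomially larger than the unbiased loss $\Theta(1/n^2)$. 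When instead $|\delta|$ is so large that such a window would spill out of $I+\delta$, the exponential-tail assumption on $F$ forces every signal to be $\Omega(|\delta|)$ from the target, again making $\E[X_\delta^2]$ far exceed $1/n^2$. Either way the biased scheme is strictly dominated, which is all that part~2 is needed for; the exact power of $n$ in the stated bound plays no further role.

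The real work has already been isolated into Lemma~\ref{lem:exponential-lower}: its proof is what must show that conditioning on $\theta_S\in C$ barely perturbs the otherwise-independent placement of signals near the target, so that the distance to the closest signal is still dominated by an exponential with the ``right'' rate $2nf(\delta)$. Granting that lemma, the only things to watch in the present proof are bookkeeping: one must keep $\Pr[\theta_S\in C\setminus E,\,K_\delta]$ intact (rather than bounding it by $\Pr[\theta_S\in C]$) so that $\sum_{C\in\mathcal{P}\setminus W_I}\Pr[\theta_S\in C\setminus E,\,K_\delta]$ can later be shown to be $1-o(1)$; and one must verify that both the truncation of the integral at $n^{-9/10}$ and the $(1+O(n^{-1/10}))$ slack in $\lambda$ are absorbed into the $(1-o(1))$ factor, which is the single place where the hypothesis $f(\delta)\ge n^{-1/100}$ is genuinely used.
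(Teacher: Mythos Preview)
Your proposal is correct and follows essentially the same route as the paper. For part~1 you use the layer-cake representation plus the tail bound from Lemma~\ref{lem:exponential-lower} and compute the truncated exponential second moment, which is exactly what the paper does (it phrases this as $\E[Z(\lambda)^2\cdot\one\{d<\cdot\}]$ for an exponential $Z(\lambda)$); the cancellation of $\Pr[\theta_S\in C]$ and the use of $f(\delta)\ge n^{-1/100}$ to kill the truncation error are identical. For part~2 the paper simply asserts the bound without argument, so your sketch via $d^2\Pr[\text{no signal within }d]$ is actually more explicit than what the paper provides; your closing remark that only domination over the unbiased loss matters downstream is exactly the role this case plays.
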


    \begin{proof}

Let $Z(\lambda)$ be the random variable with  exponential distribution. We  observe that $\E[X_\delta^2 \cdot \one(K_\delta)|\theta_S\in C] \ge \E[Z(\lambda)^2\cdot \one\{d < {(n^{-9/10})^2}\}]\cdot \frac{ \Pr[\theta_S\in C\setminus E \text{ and } K_\delta ]}{\Pr[\theta_S \in C ] } $ for $\lambda = 2n f(\delta)(1+\varepsilon_3)$, because of the stochastic dominance proved above in Lemma~\ref{lem:exponential-lower}. Moreover, $\E[Z(\lambda)^2\cdot \one\{d < {({n^{-17/20}})^2}\}]= \frac{2}{\lambda^2}(1 - O(\lambda n^{-17/20}\exp^{-\lambda n^{-17/20}}))$. Hence, we get $\E[X_\delta^2 \cdot \one(K_\delta)|\theta_S\in C] \ge \frac{1}{2n^2f(\delta)^2} (1 - o(1))\frac{ \Pr[\theta_S\in C\setminus E \text{ and } K_\delta ]}{\Pr[\theta_S \in C ] } $, for $f(\delta) \ge c_4n^{-1/100}$ for a constant $c_4 >0$.

{We finish by noting that, for sufficiently large $n$, when $f(\delta) = O(\frac{1}{n^{1/100}})$, we get $\E[X_\delta^2] \ge c_5(\frac{n^{1/100}}{n^2}) >> \frac{1}{2n^2f(0)^2} \ge \E[X_0^2] $ for a constant $c_5 > 0$.}

\end{proof}

\begin{proof}[Proof of Proposition~\ref{prop:unbiased-biased-bounds} (b)]
By lemma~\ref{lem:commit-cost-delta},  for a sufficiently large $n$ and any $\delta < 2(\log n)^2$ and $f(\delta) > n^{-1/100}$, we have $\E[X_\delta^2] \ge\sum_{C\in \mathcal{P}\setminus W_I}\Pr[\theta_S\in C]\E[X_\delta^2|\theta_S \in C] \ge \sum_{C\in \mathcal{P}\setminus W_I} \frac{1}{2n^2f(\delta)^2}(1 - o(1)){ \Pr[\theta_S\in C\setminus E \text{ and } K_\delta ]}$ .

Given $E\subset C$ as the union of first and last $kn^{-1-a+\epsilon}$ length interval, we define $2E \subset C$ to be the union of the first and last $2kn^{-1-a+\epsilon}$ intervals.
Since $K_\delta$ is the event that there are $(1 \pm \varepsilon)k_\delta$ many signals in $N(C_\delta)$ and there are no signals in $C_\delta$, that $\Pr[\theta_S\in C\setminus E] \ge \Pr[\theta_S \in C\setminus 2E| Y(N(C_\delta)\cup C_\delta) \in (1 \pm \varepsilon)k_\delta ]$, this is because if $\theta_S \in C\setminus 2E$ and $\vec{x}_{[k]} \in N(C_\delta)\cup C_\delta$ then rearranging the signals $\vec{x}_{[k]}$ by moving the signals in $C_\delta$ into $N(C_\delta)$ changes $\theta_S$ by at most $kn^{-1-a} f(\delta)$. Moreover, we have $\Pr[K_\delta] = \Pr[Y(N(C_\delta)\cup C_\delta) \in (1 \pm \varepsilon)k_\delta]\cdot \Pr[Y(C_\delta) = 0~|Y(N(C_\delta)\cup C_\delta) \in (1 \pm \varepsilon)k_\delta]$. The probability that there are no signals in $C_\delta$ (of length $n^{-12/11}$) is at least $(1 - O(n^{-1/11}))$. Thus, we get,

\[
\E[X_\delta^2] \ge \sum_{C\in \mathcal{P}\setminus W_I} \frac{1}{2n^2f(\delta)^2}(1 - o(1)){ \Pr[\theta_S\in C\setminus 2E \text{ and } Y(N(C_\delta)\cup C_\delta) \in (1 \pm \varepsilon)k_\delta ]} (1 - O(n^{-1/11}))
\]

Let $\mathcal{G}_\delta$ denote the event that all $J_\delta \in \mathcal{J}_\delta $ has $(1 \pm \varepsilon) k_\delta/M$ signals. Thus, we get,

\begin{align*}
\E[X_\delta^2] &\ge \sum_{C\in \mathcal{P}\setminus W_I} \frac{1}{2n^2f(\delta)^2}(1 - o(1)){ \Pr[\theta_S\in C\setminus 2E \text{ and } \mathcal{G}_\delta]} (1 - O(n^{-1/11}))\\
& = (1 - O(n^{-1/11}))\Pr[ \mathcal{G}_\delta \text{ and } \exists C \in \mathcal{P}\setminus W_I \text{ s.t.~} \theta_S \in C\setminus 2E]\\
& \ge (1 - O(n^{-1/11}))\left(1 - \Pr[\overline{\mathcal{G}}_\delta] - \Pr[ \overline{\exists C \in \mathcal{P}\setminus W_I \text{ s.t.~} \theta_S \in C\setminus 2E}]\right) \\
& \ge  (1 - O(n^{-1/11}))\left( - \Pr[\overline{\mathcal{G}}_\delta] + \Pr[\exists C \in \mathcal{P}\setminus W_I \text{ s.t.~} \theta_S \in C\setminus 2E ] \right)
\end{align*} 

Recall that, $\Pr[\theta_S \in I] \ge (1 - \exp(-n^{2\epsilon}A)$, and by Claim~\ref{claim:weak-interval-probability} (proved in Section~\ref{app:commitment-lemmas}) we have $\Pr[\theta_S\in W_I] \le O(n^{-4log n +1})$. By Lemma~\ref{lem:many-signals-J-delta} we have $\Pr[\overline{\mathcal{G}}_\delta] \le O(\exp\{\varepsilon^2f(\delta)n^{1-a}/3M\})$. Hence, we get $\E[X_\delta^2] \ge \frac{1}{2n^2f(\delta)^2}(1 - o(1))$.
\end{proof}

\subsection{Helpful lemmas to bound the correlation between \texorpdfstring{$\theta_S$}{thetaS} and the closest signal}\label{app:commitment-lemmas}
In this section we will introduce some helpful lemmas for proving Proposition~\ref{prop:unbiased-biased-bounds}.

In Lemma~\ref{lem:posterior-differential}, we characterize the effect of a single signal $x$ on the posterior mean $\theta_S(\vec{x})$. This lemma directly implies Corollary~\ref{cor:posterior-change}, where we show that if for any signals $\vec{x}$ rearranging at most $k$ signals in $L$ (and $R$) to get $\vec{y}$ guarantees that the new posterior mean is $\theta_S(\vec{y}) \in \theta_S(\vec{x}) \pm O(k n^{-1-a})$.

 \begin{lemma}{\label{lem:posterior-differential}}
For any signals $\vec{x}$ observed by the sender we have,
\[
\Big\lvert\frac{\partial {\theta_S}(\vec{x})}{\partial x_i} \Big\rvert\le c_1 {Var_{\theta\sim D_S(\vec{x})}[\theta] + 2\theta_S(\vec{x})^2},
\]

where $Var_{\theta\sim D_S(\vec{x})}[\theta]$ is the variance of the sender's posterior distribution $D_S(\vec x)$.
\end{lemma}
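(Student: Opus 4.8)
The plan is to write the sender's posterior density explicitly, differentiate the posterior mean coordinate-wise, and recognize the result as a covariance under the posterior, which the Lipschitz bound $|g'|\le c_1$ from the definition of a well-behaved distribution then controls. First I would invoke Claim~\ref{claim:diffuse-prior}: under the diffuse prior the sender's posterior density is $\mu(\theta\mid\vec x)\propto\prod_{j=1}^n f(x_j-\theta)$ (in the foresight case $\theta_S(\vec x)=y$ is independent of $\vec x$ so the bound is trivial, and a proper side-information density $g_G(y-\theta)$ only contributes an $x_i$-independent factor that drops out below). Write $p(\theta)=\prod_j f(x_j-\theta)$, $Z=\int p(\theta)\,d\theta$, $N=\int\theta\,p(\theta)\,d\theta$, so $\theta_S(\vec x)=N/Z$. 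The key observation is $\partial_{x_i}\log p(\theta)=\partial_{x_i}\log f(x_i-\theta)=f'(x_i-\theta)/f(x_i-\theta)=g(x_i-\theta)$, so differentiating under the integral sign and applying the quotient rule yields
\[
\frac{\partial\theta_S(\vec x)}{\partial x_i}
=\E_{\theta\sim D_S(\vec x)}\!\big[\theta\,g(x_i-\theta)\big]-\theta_S(\vec x)\,\E_{\theta\sim D_S(\vec x)}\!\big[g(x_i-\theta)\big]
=\mathrm{Cov}_{\theta\sim D_S(\vec x)}\!\big(\theta,\;g(x_i-\theta)\big).
\]

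It remains to bound this covariance. Since $\E_{\theta\sim D_S}[(\theta-\theta_S)\,g(x_i-\theta_S)]=0$, I would recenter and write the covariance as $\E_{\theta\sim D_S}[(\theta-\theta_S)\,(g(x_i-\theta)-g(x_i-\theta_S))]$; the mean value theorem together with $|g'|\le c_1$ gives $|g(x_i-\theta)-g(x_i-\theta_S)|\le c_1|\theta-\theta_S|$, whence
\[
\Big|\frac{\partial\theta_S(\vec x)}{\partial x_i}\Big|\le c_1\,\E_{\theta\sim D_S(\vec x)}\!\big[(\theta-\theta_S(\vec x))^2\big]=c_1\,\mathrm{Var}_{\theta\sim D_S(\vec x)}[\theta],
\]
which is in particular at most $c_1\big(\mathrm{Var}_{\theta\sim D_S(\vec x)}[\theta]+2\,\theta_S(\vec x)^2\big)$, as claimed. (One can instead bound $|\mathrm{Cov}(\theta,g(x_i-\theta))|$ by symmetrizing over two independent draws from $D_S$ and using $|h(\theta)-h(\theta')|\le c_1|\theta-\theta'|$ for $h(t)=g(x_i-t)$, obtaining the same constant; either route reaches the statement with room to spare.)

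The conceptual content is entirely in the identity $\partial_{x_i}\theta_S=\mathrm{Cov}_{D_S}(\theta,g(x_i-\theta))$, so the only genuine technical point — the step I would be most careful to justify — is the interchange of differentiation and integration. This requires that $Z$ and $N$ are finite (properness of the posterior, again Claim~\ref{claim:diffuse-prior}) and that $|g(x_i-\theta)|\,p(\theta)$ and $|\theta\,g(x_i-\theta)|\,p(\theta)$ admit an integrable dominating function uniformly for $x_i$ in a neighborhood, which follows from the growth bound $|g(x)|\le c_1|x|+c_2$ and the exponential tails of the well-behaved distribution $F$. Everything else is a one-line calculation.
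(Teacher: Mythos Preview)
Your proof is correct and follows essentially the same route as the paper: differentiate $\theta_S=N/Z$ via the quotient rule, rewrite the result using $g=f'/f$ so that it becomes $\mathrm{Cov}_{D_S}(\theta,\,g(x_i-\theta))$, then apply the mean value theorem to $g$ and invoke the well-behavedness bound $|g'|\le c_1$. The one (cosmetic) difference is where the mean value expansion is centered: the paper expands $g(x_i-\theta)$ around $\theta=0$, which is why the terms $\E[\theta^2]+\E[\theta]^2=\mathrm{Var}[\theta]+2\theta_S^2$ appear, whereas you center at $\theta=\theta_S$, which kills the cross term and yields the sharper bound $c_1\,\mathrm{Var}_{D_S}[\theta]$---so your argument in fact proves a slightly stronger inequality than stated.
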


\begin{proof}

Let $h_S(\theta|\vec{x})$ denote the pdf of the sender's posterior distribution, $h(\theta)$ be the (constant) pdf of the diffuse prior. Note that the pdf of a signal $x$ given that the state of the world is $\theta$ (denoted by $\hat{f}(x|\theta)$) equals $f(x - \theta)$, where $f$ is the pdf of $F$. Recall that when the sender observes $\vec{x}$ they update their posterior in a Bayesian way. Hence, we have,

\begin{align*}
h_S(\theta|\vec{x}) & = \frac{\prod_i\hat{f}(x_i|\theta)h(\theta)}{\int_{\hat{\theta}} \prod_i\hat{f}(x_i|\hat{\theta})h(\hat{\theta})d\hat{\theta}} \\
& = \frac{\prod_i{f}(x_i - \theta)}{\int_{\hat{\theta}}\prod_i f(x_i - \hat{\theta})d\hat{\theta}}
\end{align*}

 and the sender's posterior mean is
\begin{equation}
{\theta_S}(\vec{x}) = \frac{\int_{\theta}\theta\prod_i{f}(x_i - \theta)d\theta}{\int_{\theta}\prod_i f(x_i - \theta)d\theta}
\end{equation}

We want to understand $\frac{\partial {\theta_S}(\vec{x})}{\partial x_i}$ which is the effect of a single signal $x_i$ on the posterior mean.

We can deduce
\begin{eqnarray}
\frac{\partial {\theta_S}(\vec{x})}{\partial x_i} &=& \frac{\int \theta f'(x_i-\theta) \prod_{j \neq i} f(x_j-\theta)d\theta \int \prod_{j} f(x_j-\theta)d\theta}{\left(\int \prod_{j} f(x_j-\theta)d\theta\right)^2} \nonumber \\
&& - \frac{\int \theta\prod_{j} f(x_j-\theta)d\theta \int f'(x_i-\theta) \prod_{j\neq i} f(x_j-\theta)d\theta}{\left(\int \prod_{j} f(x_j-\theta)d\theta\right)^2} \nonumber \\
&=& \frac{\int \theta g(x_i-\theta) \prod_{j} f(x_j-\theta)d\theta \int \prod_{j} f(x_j-\theta)d\theta}{\left(\int \prod_{j} f(x_j-\theta)d\theta\right)^2} \nonumber \\
&& - \frac{\int \theta\prod_{j} f(x_j-\theta)d\theta \int g(x_i-\theta) \prod_{j} f(x_j-\theta)d\theta}{\left(\int \prod_{j} f(x_j-\theta)d\theta\right)^2}
\end{eqnarray}
where $g(y)=\frac{f'(y)}{f(y)}$.

Using the mean value theorem we can write $g(x_i-\theta)=g(x_i)-g'(\tilde{x}_i)\theta$ for $\tilde{x}_i\in [x_i-\theta,x_i]$. We then obtain:
\begin{eqnarray}
\frac{\partial {\theta_S}(\vec{x})}{\partial x_i} &=& \frac{\int \theta \left(g(x_i)-g'(\tilde{x}_i)\theta\right) \prod_{j} f(x_j-\theta)d\theta \int \prod_{j} f(x_j-\theta)d\theta}{\left(\int \prod_{j} f(x_j-\theta)d\theta\right)^2} \nonumber \\
&&- \frac{\int \theta\prod_{j} f(x_j-\theta)d\theta \int \left(g(x_i)-g'(\tilde{x}_i)\theta\right) \prod_{j} f(x_j-\theta)d\theta}{\left(\int \prod_{j} f(x_j-\theta)d\theta\right)^2} \nonumber \\
&=& -\frac{\int \theta g'(\tilde{x}_i) \theta  \prod_{j} f(x_j-\theta)d\theta \int \prod_{j} f(x_j-\theta)d\theta}{\left(\int \prod_{j} f(x_j-\theta)d\theta\right)^2} \nonumber \\
 && +\frac{\int \theta\prod_{j} f(x_j-\theta)d\theta \int g'(\tilde{x}_i)\theta  \prod_{j} f(x_j-\theta)d\theta}{\left(\int \prod_{j} f(x_j-\theta)d\theta\right)^2}
\end{eqnarray}

Next, we note that for a well-behaved distribution, $|g'(\tilde{x}| \le c_1$. Hence, we can bound,

We can then simplify:
\begin{eqnarray}
\Big\lvert\frac{\partial{\theta_S}(\vec x)}{\partial x_i}\Big\rvert &\le &  c_1\left( \frac{\int_{\theta}\theta^2\prod_j{f}(x_j - \theta)d\theta}{\int_{\theta}\prod_j f(x_j - \theta)d\theta} + \left(\frac{\int_{\theta}\theta\prod_j{f}(x_j - \theta)d\theta}{\int_{\theta}\prod_j f(x_j - \theta)d\theta}\right)^2 \right) \\
&= &c_1\left(\E_{\theta\sim D_S(\vec{x})}[\theta^2|\vec x]+\E_{\theta\sim D_S(\vec{x})}[\theta|\vec x]^2\right) \nonumber \\
&= & c_1 \left({Var_{\theta\sim D_S(\vec{x})}[\theta]} + 2\E_{\theta\sim D_S(\vec{x})}[\theta|\vec x]^2 \right)
\end{eqnarray}
\end{proof}

Next. we bound the shift in $\theta_S$ when rearranging $k$ signals in $L$ (or $R$) using Lemma~\ref{lem:posterior-differential}. We note that  $Var_{\theta\sim D_S(\vec{x})}[\theta] = O(1/n)$ and $\theta_S \in I$.

\begin{corollary}{\label{cor:posterior-change}}
 Assume that the posterior mean $\theta_S(\vec{x})$ lies within the interval $C\subset I$. Consider a subset of $k$ signals in a subset $A$ of length $\ell$. Any rearrangement of these signals within $A$ changes the posterior mean by $O(k\ell n^{-1+2 \epsilon})$.
\end{corollary}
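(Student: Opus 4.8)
The plan is to realize the rearrangement as a continuous deformation of the signal profile and to integrate the per-coordinate sensitivity bound of Lemma~\ref{lem:posterior-differential} along it. Let $\vec{x}$ be the original profile, satisfying $\theta_S(\vec{x})\in C\subset I$ by hypothesis, and let $\vec{y}$ be the rearranged profile; $\vec{y}$ differs from $\vec{x}$ only in the (at most) $k$ coordinates whose signals lie in $A$, and since $A$ has length $\ell$ each such coordinate moves by at most $\ell$. Interpolate linearly, moving all the relevant coordinates simultaneously and monotonically, to obtain a path $\vec{x}(t)$, $t\in[0,1]$, with $\vec{x}(0)=\vec{x}$ and $\vec{x}(1)=\vec{y}$. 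By the chain rule and the fundamental theorem of calculus,
\[
\theta_S(\vec{y})-\theta_S(\vec{x})=\int_0^1\sum_{i:\,x_i\in A}\frac{\partial\theta_S}{\partial x_i}\!\big(\vec{x}(t)\big)\,\dot{x}_i(t)\,dt ,
\]
and since each moved coordinate has total variation at most $\ell$ along the path, $\bigl|\theta_S(\vec{y})-\theta_S(\vec{x})\bigr|\le k\ell\cdot\sup_{t\in[0,1]}\max_i\bigl|\tfrac{\partial\theta_S}{\partial x_i}(\vec{x}(t))\bigr|$. It thus suffices to bound the partial derivatives uniformly along the path.

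By Lemma~\ref{lem:posterior-differential} this partial derivative is at most $c_1\bigl(\mathrm{Var}_{\theta\sim D_S(\vec{x}(t))}[\theta]+2\,\theta_S(\vec{x}(t))^2\bigr)$, so I need to control both terms along the path. The variance term stays $O(1/n)$: perturbing $k\le n$ signals inside an interval of length $\le n^{-a}$ changes each of the $n$ summands of $\tfrac{d^2}{d\theta^2}\log\prod_i f(x_i-\theta)$ by $O(c_1)$, so the curvature of the log-likelihood remains $\Theta(n)$ (the Fisher information of a well-behaved $F$ being a positive constant), as recalled just before the corollary statement. For the term $\theta_S(\vec{x}(t))^2$ I use a continuity bootstrap: $\theta_S(\vec{x}(0))\in C\subset I$ gives $|\theta_S(\vec{x}(0))|\le n^{-1/2+\epsilon}$, and whenever $|\theta_S(\vec{x}(t))|\le 2n^{-1/2+\epsilon}$ the displayed bound yields $|\partial\theta_S/\partial x_i|\le c_1\bigl(O(1/n)+O(n^{-1+2\epsilon})\bigr)=O(n^{-1+2\epsilon})$, so the total displacement of $\theta_S$ up to time $t$ is at most $k\ell\cdot O(n^{-1+2\epsilon})$; using $k\le n$, $\ell\le n^{-a}$, and $a>\tfrac12+\epsilon$ (which holds in the parameter regime of Figure~\ref{fig:proof_intuition}), this is $O(n^{1/2-a+\epsilon})=o(n^{-1/2+\epsilon})$. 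Hence $|\theta_S(\vec{x}(t))|$ never reaches $2n^{-1/2+\epsilon}$, and the bound $|\partial\theta_S/\partial x_i(\vec{x}(t))|=O(n^{-1+2\epsilon})$ is valid for every $t\in[0,1]$.

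Plugging this into the inequality from the first paragraph gives $\bigl|\theta_S(\vec{y})-\theta_S(\vec{x})\bigr|\le k\ell\cdot O(n^{-1+2\epsilon})=O(k\ell\,n^{-1+2\epsilon})$, which is the claim. The only genuinely delicate point is the apparent circularity in the second paragraph — one needs $\theta_S$ to remain close to $I$ in order to bound the sensitivity, yet it is the sensitivity bound itself that keeps $\theta_S$ close to $I$ — and this is precisely what the continuity bootstrap resolves; verifying that the posterior variance does not degrade under the (tiny) rearrangement is routine given well-behavedness of $F$.
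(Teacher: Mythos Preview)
Your proof is correct and takes essentially the same approach as the paper: both apply the per-coordinate sensitivity bound of Lemma~\ref{lem:posterior-differential} and combine it with a mean-value-type argument over the $k$ perturbed coordinates (the paper uses the multivariate mean value theorem directly, you use the equivalent fundamental-theorem-of-calculus along the linear interpolant). The one notable difference is that you explicitly address a point the paper glosses over: the mean value theorem gives $\nabla\theta_S$ at some intermediate point $\vec{z}$, and Lemma~\ref{lem:posterior-differential} at $\vec{z}$ requires $\theta_S(\vec{z})\in I$ and $\mathrm{Var}_{D_S(\vec{z})}[\theta]=O(1/n)$, not merely at the endpoint $\vec{x}$; the paper simply writes these quantities at $\vec{x}$ without comment, whereas your continuity bootstrap supplies the missing justification.
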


\begin{proof}
We prove this by using the mean value theorem on the function $\theta_S : \reals^n \rightarrow \reals$. Given $\vec{x}$, consider a subset of signals $\vec{x}_{[k]} \in A$.  Let $\vec{y}$ be any vector such that $y_i\in A$ for all $i\in [k]$  and $y_i = x_i$ for the rest. By mean value theorem we get for some $\vec{z}$ such that:

\[
\theta_S(\vec{y}) = \theta_S(\vec{x}) + \nabla\theta_S(\vec{z})\cdot(\vec{y} -\vec{x})
\]

Note that $|y_i - x_i| \le \ell$ for all $i\in [k]$ and $y_i - x_i = 0$ otherwise. That is, at most $k$ terms with $|y_i - x_i| \neq 0$. Hence, we get,

\begin{align*}
|\theta_S(\vec{y}) - \theta_S(\vec{x})| &= |\nabla\theta_S(\vec{z})\cdot(\vec{y} -\vec{x})|\\
&\le \sum_{i=1}^{k} c_1 \left({Var_{\theta\sim D_S(\vec{x})}[\theta] + 2\theta_S(\vec{x})^2}\right) |y_i - x_i| \quad \text{(By Lemma~\ref{lem:posterior-differential})}\\
&=  c_1{O(n^{-1+2\epsilon})}\sum_{i=1}^{k}\ell
\end{align*}

Since $x_i,y_i \in A$ we bound $|x_i - y_i| \le \ell$. Further, by bounding $Var_{\theta\sim D_S(\vec{x})}[\theta]$ by $O(1/n)$, and $\theta_S(\vec{x})$ by $n^{-1/2 +\epsilon}$, we get $\theta_S(\vec{y}) = \theta_S(\vec{x}) \pm k\ell O(n^{-1 + 2\epsilon})$ when rearranging at most $k$ signals in each $A$.

\end{proof}

Next, we show that the density of the posterior mean is nice in the interval $C$.

    \begin{claim}\label{claim:uniform-posterior}
Assume that the density function $f$ has exponential (or thinner) tails. Let $\tau(\cdot)$ be the density function of the posterior mean. Then for all $\overline{\theta}\in C $ we have $\tau(\overline{\theta}+\epsilon') = \tau(\overline{\theta}) \left(1+O(n^{-1/22})\right) + O(e^{-n^{1/22\alpha}})$ for all $0 < \epsilon'\le 1/n^b$.
\end{claim}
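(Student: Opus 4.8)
The plan is to reduce the statement to a change-of-measure identity for the conditional law of the samples given the realized posterior mean, and then control that identity by the first-order expansion of $\theta_S$. Fix $\theta=0$ as in the rest of this section and write $\vec z$ for the samples, with $\mu_t$ a regular conditional distribution of $\vec z$ given $\theta_S(\vec z)=t$. Disintegrating the product density $\prod_i f(z_i)$ along the map $\vec z\mapsto\theta_S(\vec z)$ gives $\prod_i f(z_i)\,d\vec z=\tau(t)\,\mu_t(d\vec z)\,dt$; disintegrating its translate $\prod_i f(z_i+\epsilon')\,d\vec z$ along the same map, together with the translation‑equivariance $\theta_S(\vec z+\epsilon'\one)=\theta_S(\vec z)+\epsilon'$ proved in Appendix~\ref{app:diffuse-prior}, yields
\[
\tau(\overline\theta+\epsilon')\;=\;\tau(\overline\theta)\cdot\E_{\vec z\sim\mu_{\overline\theta}}\!\left[\prod_{i=1}^n\frac{f(z_i+\epsilon')}{f(z_i)}\right],
\]
where the Jacobians of the level sets cancel in the ratio, so no explicit co-area computation is needed. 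It therefore suffices to show that the expectation on the right equals $1+O(n^{-1/22})$ up to an additive error $O(e^{-n^{1/22\alpha}})/\tau(\overline\theta)$.

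For the ``typical'' contribution I would set $g=(\log f)'$ and Taylor expand: by well‑behavedness $|g'|\le c_1$, so $\log\prod_i f(z_i+\epsilon')/f(z_i)=\epsilon'\sum_i g(z_i)+R$ with $|R|\le\tfrac12 c_1 n(\epsilon')^2=O(c_1 n^{1-2b})=o(1/n)$ since $b>1$. For the main term, recall (as used in Lemma~\ref{lem:thetaS-notin-I}) that on a high‑probability ``good'' event $\mathcal G$ — the typical event carrying the concentration bounds of this appendix — the posterior mean admits the standard $M$‑estimator expansion $\theta_S(\vec z)=-\tfrac{C_{\mathcal I}}{n}\sum_i g(z_i)+O(n^{-1+\varepsilon})$, where $C_{\mathcal I}$ is the asymptotic variance from Lemma~\ref{lem:thetaS-notin-I} (Lemma~\ref{lem:posterior-differential} gives a complementary gradient bound on $\theta_S$). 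On $\mathcal G\cap\{\theta_S(\vec z)=\overline\theta\}$ with $\overline\theta\in C\subset I=[-n^{-1/2+\varepsilon},n^{-1/2+\varepsilon}]$ this forces $|\sum_i g(z_i)|=O(n^{1/2+\varepsilon})$, hence $|\epsilon'\sum_i g(z_i)|\le n^{-b}\cdot O(n^{1/2+\varepsilon})=O(n^{-1/22})$ for the chosen parameters (e.g.\ $a=4/5$, $b=12/11$, $\varepsilon$ small). Thus $\prod_i f(z_i+\epsilon')/f(z_i)=1+O(n^{-1/22})$ on $\mathcal G\cap\{\theta_S=\overline\theta\}$, and since $\mu_{\overline\theta}(\mathcal G)=1-o(1)$ whenever $C$ is not weak, the contribution of $\mathcal G$ to the expectation is $1+O(n^{-1/22})$.

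For the ``atypical'' contribution $\overline{\mathcal G}$, the ratio need not be close to $1$, but the bound $|g(x)|\le c_1|x|+c_2$ gives $|\log\prod_i f(z_i+\epsilon')/f(z_i)|\le\epsilon'(c_1\sum_i|z_i|+c_2 n)$; the second term is $O(n^{1-b})=o(1)$, and since $F$ has exponential tails, $\int\prod_i f(z_i)\exp(c_1\epsilon'\sum_i|z_i|)\,d\vec z=\prod_i(1+O(\epsilon'))=O(1)$ uniformly in $n$. Splitting $\overline{\mathcal G}$ by whether $\sum_i|z_i|$ is moderate and using that $\Pr[\vec z\in\overline{\mathcal G}]$ is exponentially small in a fractional power of $n$ (law of large numbers plus exponential tails), the integral $\int_{\overline{\mathcal G}}\prod_i f(z_i)\,|\prod_i f(z_i+\epsilon')/f(z_i)-1|\,\one\{\theta_S(\vec z)\in C\}\,d\vec z$ is exponentially small in a fractional power of $n$. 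Dividing by $|C|=n^{-b}$ and invoking the regularity hypothesis on the density $\tau$ over $C$ (Section~\ref{app:rare-events}) converts this into the claimed pointwise additive error $O(e^{-n^{1/22\alpha}})$ for each $\overline\theta\in C$. When $C$ is not weak this additive term is dominated by $\tau(\overline\theta)\cdot O(n^{-1/22})$ and is folded into the multiplicative error, and when $C$ is weak both $\tau(\overline\theta)$ and $\tau(\overline\theta+\epsilon')$ are themselves $O(e^{-n^{1/22\alpha}})$ up to polynomial factors (Definition~\ref{def:weak-posterior}), so the claim holds trivially.

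The main obstacle is the expansion used in the second paragraph: that conditioning on $\theta_S(\vec z)=\overline\theta\in I$ forces $\sum_i g(z_i)=O(n^{1/2+\varepsilon})$. This is exactly where the well‑behavedness hypotheses (bounded $|g'|$, exponential tails, finite variance) enter, and one must show the remainder of the $M$‑estimator/Bernstein–von Mises expansion is small enough not to swamp the leading term $-nC_{\mathcal I}^{-1}\overline\theta$ on the conditional event. A secondary difficulty is passing from the integrated tail bound to a pointwise one in the third paragraph, which relies on the ``niceness'' assumption on the density of $\theta_S$ on $C$ discussed in Section~\ref{app:rare-events}.
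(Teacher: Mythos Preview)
Your coupling identity $\tau(\overline\theta+\epsilon')=\tau(\overline\theta)\,\E_{\mu_{\overline\theta}}\bigl[\prod_i f(z_i+\epsilon')/f(z_i)\bigr]$ is exactly the paper's starting point: the paper writes it as an integral over the level set $X(\overline\theta)$, using the translation equivariance $\theta_S(\vec x+\epsilon')=\theta_S(\vec x)+\epsilon'$, so that $\tau(\overline\theta+\epsilon')=\int_{X(\overline\theta)}\prod_i f(x_i+\epsilon')\,d\vec x$. Where you diverge is in bounding the product. You expand $\log\prod_i f(z_i+\epsilon')/f(z_i)=\epsilon'\sum_i g(z_i)+R$ and then try to argue that, conditionally on $\theta_S=\overline\theta\in I$, the score $\sum_i g(z_i)$ is $O(n^{1/2+\varepsilon})$ via an $M$-estimator/Bernstein--von~Mises expansion for the posterior mean. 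You correctly flag this as the ``main obstacle,'' and indeed nothing in the paper establishes such an expansion or its validity on the conditional event.

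The paper sidesteps this obstacle entirely with a cruder, per-coordinate bound that never looks at $\sum_i g(z_i)$. On the event $\{\,|x_i|<4(\log n)^2\text{ for all }i\,\}$, which has probability $1-O(n^{-4\log n+1})$ by the exponential-tail assumption and a union bound, each factor satisfies $f(x_i+\epsilon')=f(x_i)\bigl(1+O(n^{-b+1/22})\bigr)$, simply because $|g(\tilde x_i)|\le c_1|\tilde x_i|+c_2=O((\log n)^2)=o(n^{1/22})$ and $|\epsilon'|\le n^{-b}$. Multiplying $n$ such factors gives $\bigl(1+O(n^{-b+1/22})\bigr)^n=1+O(n^{1-b+1/22})=1+O(n^{-1/22})$ for $b=12/11$. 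The complementary event contributes the additive error term directly. No conditional control of the aggregate score, no good-event decomposition inside $\mu_{\overline\theta}$, and no appeal to regularity of $\tau$ on $C$ are needed.

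So your plan is a legitimate alternative, and if the Bernstein--von~Mises expansion with the requisite uniformity were in hand it would even give a slightly sharper exponent; but the ``main obstacle'' you identify is self-imposed. The coupling step is right; from there, bound the ratio factor by factor using only the polylogarithmic bound on $|g(x_i)|$ that comes for free once all signals lie in $[-4(\log n)^2,4(\log n)^2]$.
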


\begin{proof}
Fix a posterior mean $\overline{\theta} \in  [-n^{-\frac{1}{2}+\epsilon},n^{-\frac{1}{2}+\epsilon}]$ and consider all the signal draws $X(\overline{\theta})=\left\{\vec{x}|\overline{\theta}(\vec{x}) = \overline{\theta}\right\}$ that generate this posterior mean. We know that $\tau(\overline{\theta})=\int_{\vec{x} \in X(\overline{\theta})} \prod_{x_i \in \vec{x}} f(x_i) dx$. Now consider $\overline{\theta}+\epsilon'$. We can couple all the signal realizations in $X(\overline{\theta}+\epsilon')$ and $X(\overline{\theta})$ by considering uniform shifts of the corresponding $\vec{x}$ by $\epsilon'$. This follows from the assumption of a diffuse prior, and we get $\theta_S(\vec x + \epsilon') = \theta_S(\vec x) + \epsilon'$. That is, $\tau(\overline{\theta}+\epsilon')=\int_{\vec{x} \in X(\overline{\theta})} \prod_{x_i \in \vec{x}} f(x_i+\epsilon') dx$.

Next, consider the probability of observing $x$ versus the coupled signal realizations $x+\epsilon'$.

      \begin{eqnarray}
      \prod_{x_i \in \vec{x}} f(x_i+\epsilon') &=& \prod_{x_i \in \vec{x}} \left[f(x_i)+f'(\tilde{x}_i)\epsilon'\right]
      \end{eqnarray}

Recall that, by our assumption on $g'$ we have $|f'(x)| \le c |x|^{m}f(x)$.
\begin{itemize}
    \item Note that, for all $|x_i| < 4(\log n)^2$, we have
\begin{align*}
|f(x_i +\epsilon') - f(x_i)|  &= |\epsilon'f'(\tilde{x}_i)| \\
&\le |\epsilon'c(|\tilde{x}|^{m})|f(\tilde{x}) \\
&\le |\epsilon'c(|4 \log n|^{2m})|f(\tilde{x}) \\
&\le |c'(n^{-b+1/22})|f(\tilde{x}) \qquad (|\epsilon'|\le n^{-12/11} \text{ and } (\log n)^{2m} = o(n^{1/22})
\end{align*}

Note that, wlog we can assume that $\mathrm{sign}(x_i + \epsilon) =\mathrm{sign}(x_i) $ because $f$ is symmetric. Thus, we get $f(\tilde{x}_i)$ is between $f(x_i)$ and $f(x_i + \epsilon')$, because $f$ is single-peaked. If $f(x_i ) \le f(\tilde{x}_i) \le f(x_i+\epsilon')$ we get $0\le f(x_i+\epsilon') -  f(x_i) \le  c'(n^{-b+1/22})|f(\tilde{x}_i) \le c'(n^{-b+\epsilon})|f(x_i + \epsilon')$. Thus,  $f(x_i) \le f(x_i + \epsilon) \le f(x_i)\left(\frac{1}{1 - c'(n^{-b+\epsilon} }\right) \le  f(x_i) (1 + c''n^{-b+\epsilon})$.

Similarly, if $f(x_i ) \ge f(\tilde{x}_i) \ge f(x_i + \epsilon) $, then we get  $f(x_i) \ge f(x_i + \epsilon) \ge f(x_i)\left(\frac{1}{1 + c'(n^{-b+\epsilon} }\right) \ge  f(x_i) (1 - c'n^{-b+\epsilon})$.

    \item  Further,  by our assumption that $f$ has exponential tails we have $\Pr_{\vec{x}}[ \exists x_i \text{ s.t. } |x_i|> 4(\log n)^2] \le n{e^{-4(\log n)^2}}  = ({n^{-4\log n + 1}})$.
\end{itemize}

If $x_i \in [-4(\log n)^2,4(\log n)^2] $ for all $i$, then
{we bound $\prod_{x_i\in \vec{x}}f(x_i +\epsilon') = \prod_{i}f(x_i)(1+ O(n^{-b +\frac{1}{22}}))
= \left(\prod_{i}f(x_i)\right)(1+ O(n^{-b +\frac{1}{22}}))^n = \left(\prod_{i}f(x_i)\right)(1+ O(n^{1-b +\frac{1}{22}}))  =  \left(\prod_{i}f(x_i)\right)(1+ O(n^{-\frac{1}{22}})) $ for $b = 12/11$ }.

Hence, we get
$$
    \tau(\overline{\theta}+\epsilon') = \int_{-n^{1/22\alpha}}^ {n^{1/22\alpha}} \left(\prod f(x_i)\right)(1+ O(n^{-\frac{1}{22}}))\one\{\vec{x}\in X(\hat{\theta})\} d\vec{x} \qquad + O({n^{-4\log n + 1}})
$$

Therefore,  $\tau(\overline{\theta}+\epsilon') = \tau(\overline{\theta})(1+ O(n^{-\frac{1}{22}})) + O({n^{-4\log n + 1}})$ for all $0<\epsilon' <1/n^b$ and $\overline{\theta}\in C$.

\end{proof}

Observe that, if $\tau(\overline{\theta}) \ge c' n^{1/22} n^{-4\log n + 1}$ then we can get $\tau(\overline{\theta})(1+ O(n^{-\frac{1}{22}}))$.

Recall that  $W_I\subset \mathcal{P}$ is the set of all intervals $C$ such that $\tau(\overline{\theta}) < c' n^{1/22} n^{-4\log n + 1} $ for all $\overline{\theta}\in C$ and some constant $c'>0$. We show that the total probability mass of these intervals is $O(n^{ - 4\log n +1})$.
\begin{claim}\label{claim:weak-interval-probability}
$\Pr[\theta_S \in W_I] \le O(n^{ - 4\log n +1})$.
\end{claim}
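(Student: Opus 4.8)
The plan is to prove the bound by a direct integration, exploiting the fact that a \emph{weak} interval is, by definition, one on which the density $\tau$ of the posterior mean is uniformly tiny; there is essentially no probabilistic content left to extract, as the estimate is a bookkeeping consequence of how the weakness threshold $c'n^{1/22}n^{-4\log n+1}$ was chosen. Concretely, since $\mathcal{P}$ is a partition of $I$ into disjoint subintervals, the set $\bigcup_{C\in W_I}C$ is a disjoint union contained in $I$, and at every point $t$ of this union we have $\tau(t) < c'n^{1/22}n^{-4\log n+1}$ by the definition of $W_I$. Hence I would write
\[
\Pr[\theta_S \in W_I] = \int_{\bigcup_{C\in W_I}C}\tau(t)\,dt \;\le\; \bigl|\bigcup_{C\in W_I}C\bigr|\cdot \sup_{t\in \bigcup_{C\in W_I}C}\tau(t) \;\le\; |I|\cdot c'n^{1/22}n^{-4\log n+1}.
\]

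Plugging in $|I| = 2n^{-\frac12+\varepsilon}$ gives $\Pr[\theta_S\in W_I]\le 2c'\,n^{\frac12+\varepsilon+\frac{1}{22}-4\log n}$. Since $\varepsilon$ is a fixed small positive constant — every use of $\varepsilon$ in this appendix only needs it small, so in particular we may assume $\varepsilon+\frac{1}{22}<\frac12$ — the polynomial prefactor satisfies $n^{\frac12+\varepsilon+\frac{1}{22}}=o(n)$, and therefore $\Pr[\theta_S\in W_I]=o(n^{1-4\log n})=O(n^{-4\log n+1})$, which is the claim. If one prefers not to pass through the supremum over the union, the identical bound follows by writing $\Pr[\theta_S\in C]\le n^{-b}\cdot c'n^{1/22}n^{-4\log n+1}$ for each weak $C$ and summing over the at most $|\mathcal{P}|=O(n^{b-\frac12+\varepsilon})$ intervals of the partition.

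The only point requiring care — and the nearest thing to an obstacle — is consistency of the weakness threshold: the number $c'n^{1/22}n^{-4\log n+1}$ is not arbitrary but is exactly the level above which the additive error $O(n^{-4\log n+1})$ in Claim~\ref{claim:uniform-posterior} is dominated by the multiplicative $\bigl(1+O(n^{-1/22})\bigr)$ factor there, so that ``not weak'' yields the near-uniformity of $\tau$ on $C$ used throughout Appendix~\ref{app:observable-asymptotic}. The present claim is precisely the complementary half of that dichotomy, asserting that the intervals discarded as weak carry asymptotically negligible probability; once the threshold is pinned down the estimate above is immediate, so in writing the proof I would just make sure the statement and proof quote the same constant.
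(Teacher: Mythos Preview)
Your proposal is correct and essentially identical to the paper's proof: the paper bounds $\Pr[\theta_S\in W_I]$ by (number of intervals in $\mathcal{P}$) $\times$ (length $n^{-b}$) $\times$ (density threshold), obtaining $O(n^{-4\log n + 6/11 + \varepsilon}) \le O(n^{-4\log n+1})$, which is exactly your second formulation, while your first formulation just collapses the product $|\mathcal{P}|\cdot n^{-b}$ directly to $|I|$. Your commentary on the role of the weakness threshold vis-\`a-vis Claim~\ref{claim:uniform-posterior} is also accurate.
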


\begin{proof}
This is simply because there are at most $2n^{b-\frac{1}{2}+\epsilon}$ many intervals in $\mathcal{C}$ (since each interval is of size $n^{-b}$). Therefore,
\[
\Pr[\theta_S \in W_I] \le \sum_{C\in W_I} \int_{ C} \tau({\overline{\theta}}) d\overline{\theta} \le 2n^{b-\frac{1}{2}+\epsilon} \left(n^{-b}\cdot c' n^{1+1/22 - 4\log n}\right) \le O(n^{ - 4\log n + 6/11 + \epsilon}).\]
\end{proof}

Using Corollary~\ref{cor:posterior-change} we bound the correlation between the events $\theta_S\in C$ and any realization of $k$ anecdotes in $L_\delta$ (and $R_\delta$).

\begin{lemma}{\label{lem:posterior-correlation-upper}}
Fix any $C \in \mathcal{P}\setminus W_I$.  For $A\subseteq [n]$, $|A| = k$, let $K_A$ denote the event that $x_i \in L_\delta$ (and $R_\delta$) iff $i\in A$. Let $\vec s \in L_\delta^k$ be a set of $k$ (at most $c'n^{1-a}$) signals in $L_\delta$. Then $\Pr[\theta_S \in C~|K_A, \vec x_A = \vec s] \le \Pr[\theta_S \in C\pm kn^{-1-a}|K_A]$.
\end{lemma}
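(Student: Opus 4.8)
The plan is to combine the conditional independence of the i.i.d.\ signals with the sensitivity bound of Corollary~\ref{cor:posterior-change}. First I would unpack the conditioning. Since $x_1,\dots,x_n$ are i.i.d., conditioning on $K_A$ (the event that $x_i\in L_\delta\cup R_\delta$ exactly for $i\in A$) makes the subfamily $\vec x_A=(x_i)_{i\in A}$ and the subfamily $\vec x_{-A}=(x_i)_{i\notin A}$ independent: $\vec x_A$ is then distributed as $k$ i.i.d.\ draws from $F$ restricted and renormalized to $L_\delta\cup R_\delta$, and $\vec x_{-A}$ as $n-k$ i.i.d.\ draws from $F$ restricted to the complement. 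Crucially, further conditioning on $\vec x_A=\vec s$ (which is consistent, since $\vec s\in L_\delta^k$) leaves the conditional law of $\vec x_{-A}$ unchanged. Hence, writing $C'=C\pm kn^{-1-a}$ and $\theta_S(\vec s,\vec x_{-A})$ for the posterior mean of the profile whose $A$-coordinates are $\vec s$,
\[
\Pr[\theta_S\in C\mid K_A,\vec x_A=\vec s]=\E_{\vec x_{-A}}\!\left[\one\{\theta_S(\vec s,\vec x_{-A})\in C\}\right],
\]
\[
\Pr[\theta_S\in C'\mid K_A]=\E_{\vec x_{-A}}\!\left[\Pr_{\vec x_A}\big[\theta_S(\vec x_A,\vec x_{-A})\in C'\big]\right],
\]
where both outer expectations are over the same ($K_A$-)conditional law of $\vec x_{-A}$, and in the second line $\vec x_A$ ranges over $(L_\delta\cup R_\delta)^k$ under the renormalized law above.

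Second I would prove a \emph{pointwise} domination of the two integrands. Fix a realization of $\vec x_{-A}$. If $\theta_S(\vec s,\vec x_{-A})\notin C$ the left integrand is $0$ and there is nothing to prove. If $\theta_S(\vec s,\vec x_{-A})\in C$, note that the $k$ signals $\vec s$ all lie within $L_\delta\cup R_\delta$, which is contained in an interval of length $2n^{-a}+n^{-b}=O(n^{-a})$ (recall $b>1>a$); hence replacing $\vec s$ by \emph{any} $\vec x_A\in(L_\delta\cup R_\delta)^k$ is a rearrangement of $k=|A|$ signals within an interval of length $O(n^{-a})$, keeping all other signals fixed. Corollary~\ref{cor:posterior-change}, whose hypothesis $\theta_S\in C\subset I$ is exactly what we have assumed, then gives that the posterior mean moves by at most $O(k\,n^{-a}\,n^{-1+2\varepsilon})=O(kn^{-1-a+2\varepsilon})$. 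Since $k\le c'n^{1-a}$ this displacement is absorbed into the widening of $C$; I write $C\pm kn^{-1-a}$ as shorthand for this $O(kn^{-1-a+2\varepsilon})$-enlarged interval, the exact polynomial factor being immaterial for the lemma's uses. Consequently $\theta_S(\vec x_A,\vec x_{-A})\in C'$ for \emph{every} $\vec x_A$ in the support, so $\Pr_{\vec x_A}[\theta_S(\vec x_A,\vec x_{-A})\in C']=1\ge\one\{\theta_S(\vec s,\vec x_{-A})\in C\}$. Integrating this inequality over $\vec x_{-A}$ gives the claim.

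The main obstacle is the delicate point hidden inside Corollary~\ref{cor:posterior-change}: the per-coordinate sensitivity bound $|\partial\theta_S/\partial x_i|=O(n^{-1+2\varepsilon})$ must hold uniformly along the entire straight-line path from $(\vec s,\vec x_{-A})$ to $(\vec x_A,\vec x_{-A})$, which in turn requires the posterior mean to stay inside a neighborhood of $I$ throughout the path. I would handle this exactly as the corollary does --- the total displacement is $O(kn^{-1-a+2\varepsilon})$, far smaller than the radius of $I$, so a standard ``first exit'' argument shows the path never leaves $I$ --- and invoke the corollary as a black box. A secondary, purely technical point is that $\{\vec x_A=\vec s\}$ is a null event, so the left-hand conditional probability should be read via densities (equivalently, regular conditional probabilities), which is harmless since every distribution in sight has a Lebesgue density on the relevant product space. (The hypothesis $C\notin W_I$ plays no role in this particular bound; it is stated only for uniformity with the context in which the lemma is applied.)
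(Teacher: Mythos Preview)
Your proposal is correct and follows essentially the same approach as the paper: both use Corollary~\ref{cor:posterior-change} to show that, for any fixed realization of the signals outside $A$, if $\theta_S(\vec s,\cdot)\in C$ then $\theta_S(\vec x_A,\cdot)\in C'$ for every admissible $\vec x_A$, and then integrate out the remaining randomness. The paper phrases the last step as $\Pr[\theta_S\in C\mid K_A,\vec x_A=\vec s]\le\Pr[\theta_S\in C'\mid K_A,\vec x_A=\vec z]$ for every $\vec z$ and then averages over $\vec z$, which is exactly your pointwise-domination-then-integrate formulation; you are simply more explicit about the conditional independence of $\vec x_A$ and $\vec x_{-A}$ under $K_A$, the density interpretation of the null conditioning event, and the path issue inside the corollary. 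Your observation that the hypothesis $C\notin W_I$ is unused here is also accurate.
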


\begin{proof}
Let $A(\vec z)= \{ \tilde{z}\in \reals^{n-k} : \theta(\vec z\cup \tilde{z})\in C \}$ for any subset of $k$ signals $\vec z \in L_\delta^k$. By Corollary~\ref{cor:posterior-change} we know that by changing $\vec s$ to any $\vec z$ ( in $L_\delta$) for each $\tilde{s}\in A(\vec s)$ $\theta_S(\tilde{s},\vec{z}) = \theta_S(\tilde{s},\vec{s}) + O(kn^{-1-a+\epsilon})$. If $\theta_S(\tilde{s},\vec{s})\in C$ then $\theta_S(\tilde{s},\vec{y})\in C\pm kn^{-1-a+\epsilon}$. 

This implies, $\Pr[\theta_S \in C| K_A, \vec x_A = \vec s] \le\Pr[\theta_S \in C\pm kn^{-1-a+\epsilon}| K_A, \vec x_A = \vec y]$ for all $\vec z\in L_\delta^k$. Hence, we get,
\[
\Pr[\theta_S \in C~|K_A, \vec x_A = \vec s] \le \Pr[\theta_S \in C\pm kn^{-1-a}|K_A]
\]

\end{proof}

Similarly, we have a lower bound on $\Pr[\theta_S \in C | K_A, x_A=\vec s]$.

\begin{lemma}\label{lem:posterior-correlation-lower}
Fix any $C \in \mathcal{P}\setminus W_I$.  For $A\subseteq [n]$, $|A| = k$, let $K_A$ denote the event that $x_i \in L_\delta$ (and $R_\delta$) iff $i\in A$, and no signals in $C_\delta$. Let $\vec s \in L_\delta^k$ be a set of $k$ (at most $c'n^{1-a}$) signals in $L_\delta$.
Then $\Pr[\theta_S \in C | K_A, x_A = \vec s] \ge \Pr[\theta_S \in C\setminus E | K_A] $, where $E\subset C$ as the union of the first and last $kn^{-1-a+\epsilon}$ length sub-interval of $C$.
\end{lemma}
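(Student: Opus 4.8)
The plan is to mirror the proof of Lemma~\ref{lem:posterior-correlation-upper}, now using the perturbation bound of Corollary~\ref{cor:posterior-change} in the ``safe'' direction. The crucial structural fact is that, conditioned on the event $K_A$, the signals indexed by $A$ and those indexed by $[n]\setminus A$ are independent: each $x_i$ with $i\in A$ is an independent draw from $F$ restricted to $L_\delta\cup R_\delta$, while each $x_j$ with $j\notin A$ is an independent draw from $F$ restricted to $\reals\setminus(L_\delta\cup C_\delta\cup R_\delta)$ (the clause ``no signal in $C_\delta$'' only binds the latter group, since the $A$-signals are already forced into $L_\delta\cup R_\delta$, which is disjoint from $C_\delta$). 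Consequently the conditional law of $\vec x_{-A}:=(x_j)_{j\notin A}$ given $K_A$ is unchanged by further conditioning on $\vec x_A=\vec s$ for any $\vec s\in(L_\delta\cup R_\delta)^k$. Write $\nu$ for this law and $\mu$ for the conditional law of $\vec x_A$ given $K_A$ (supported on $(L_\delta\cup R_\delta)^k$). Then
\[
\Pr[\theta_S\in C\mid K_A,\ \vec x_A=\vec s]=\int \one\{\theta_S(\vec s,\vec t)\in C\}\,d\nu(\vec t),\qquad
\Pr[\theta_S\in C\setminus E\mid K_A]=\iint \one\{\theta_S(\vec z,\vec t)\in C\setminus E\}\,d\mu(\vec z)\,d\nu(\vec t),
\]
using the conditional independence to factor the joint law as $\mu\times\nu$ in the second expression (here $\theta_S(\vec u,\vec t)$ denotes the posterior mean of the full sample with $A$-part $\vec u$ and remaining part $\vec t$).

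The heart of the argument is a pointwise comparison: for every fixed $\vec t$,
\[
\one\{\theta_S(\vec s,\vec t)\in C\}\ \ge\ \sup_{\vec z\in(L_\delta\cup R_\delta)^k}\ \one\{\theta_S(\vec z,\vec t)\in C\setminus E\}.
\]
If the left-hand side equals $1$ this is trivial. If it equals $0$, suppose toward a contradiction that some $\vec z\in(L_\delta\cup R_\delta)^k$ satisfies $\theta_S(\vec z,\vec t)\in C\setminus E$. Since the full samples $(\vec s,\vec t)$ and $(\vec z,\vec t)$ differ only in the $k$ coordinates of $A$, all of which lie inside an interval of length $O(n^{-a})$ (namely $L_\delta\cup C_\delta\cup R_\delta$), Corollary~\ref{cor:posterior-change} gives $|\theta_S(\vec s,\vec t)-\theta_S(\vec z,\vec t)|=O(kn^{-1-a+\epsilon})$. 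But $E$ is by construction the union of the first and last sub-intervals of $C$ of length $kn^{-1-a+\epsilon}$, so any point of $C\setminus E$ displaced by at most this amount in either direction stays in $C$; hence $\theta_S(\vec s,\vec t)\in C$, contradicting the assumption. (Exactly as in Lemma~\ref{lem:posterior-correlation-upper}, matching the exponent of the perturbation bound to the width of the buffer $E$ is pure constant/parameter bookkeeping, consistent with the choices $1<b<2a-1/2$ and $\epsilon$ small.)

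With the pointwise bound in hand I would finish by integration: for each $\vec t$,
\[
\one\{\theta_S(\vec s,\vec t)\in C\}\ \ge\ \sup_{\vec z}\one\{\theta_S(\vec z,\vec t)\in C\setminus E\}\ \ge\ \int\one\{\theta_S(\vec z,\vec t)\in C\setminus E\}\,d\mu(\vec z);
\]
taking $\E_{\vec t\sim\nu}$ of both sides and applying Fubini on the right yields $\Pr[\theta_S\in C\mid K_A,\vec x_A=\vec s]\ge\Pr[\theta_S\in C\setminus E\mid K_A]$, which is the claim.

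The main obstacle is the first step --- making the conditional-independence claim precise and checking that conditioning on $\vec x_A=\vec s$ genuinely leaves the law of $\vec x_{-A}$ (and in particular the ``no signals in $C_\delta$'' portion of $K_A$, which is carried entirely by $\vec x_{-A}$) intact --- together with the routine but fiddly verification that the deterministic estimate of Corollary~\ref{cor:posterior-change}, applied when the $A$-signals may be redistributed between $L_\delta$ and $R_\delta$, stays within the half-width of $E$. Everything downstream is the one-line monotonicity-plus-Fubini argument above.
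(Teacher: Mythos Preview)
Your proposal is correct and follows essentially the same route as the paper's proof: both hinge on the observation that, for any fixed realization $\vec t$ of the non-$A$ signals, Corollary~\ref{cor:posterior-change} forces $\theta_S(\vec z,\vec t)\in C\setminus E \Rightarrow \theta_S(\vec s,\vec t)\in C$, and then average over $\vec z$ to pass from the pointwise bound to the conditional probabilities. Your write-up is in fact more careful than the paper's, making explicit the conditional-independence factorization (which the paper uses implicitly when moving from ``for all $\vec z$'' to the $K_A$-conditional probability) and the Fubini step.
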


\begin{proof}
Let $A(\vec z)= \{ \tilde{z}\in \reals^{n-k} : \theta(\vec z\cup \tilde{z})\in C \}$ for any subset of $k$ signals $\vec z \in L_\delta^k$. By Corollary~\ref{cor:posterior-change} we know that by changing $\vec s$ to any $\vec z$ ( in $L_\delta$) for each $\tilde{s}\in A(\vec s)$ $\theta_S(\tilde{s},\vec{z}) = \theta_S(\tilde{s},\vec{s}) + O(kn^{-1-a+\epsilon})$. If $\theta_S(\tilde{s},\vec{s})\in C$ then $\theta_S(\tilde{s},\vec{y})\in C\pm kn^{-1-a+\epsilon}$. Thus, $\theta_S(\tilde{s},\vec{z})\in C\setminus E$ then $\theta_S(\tilde{s},\vec{s})\in C$.

This implies, $\Pr[\theta_S \in C| K_A, \vec x_A = \vec s] \ge\Pr[\theta_S \in C\setminus E| K_A, \vec x_A = \vec z]$ for all $\vec z\in L_\delta^k$. Hence, we get,
\[
\Pr[\theta_S \in C~|K_A, \vec x_A = \vec s] \ge \Pr[\theta_S \in C\setminus E|K_A]
\]

\end{proof}

}
{

\end{appendix}
}

\end{document}